\newsavebox\MBox
\newtheorem{remark}{Remark}
\newtheorem{thm}{Theorem}[section]
\newtheorem{cor}[thm]{Corollary}
\newtheorem{lem}[thm]{Lemma}
\newtheorem{example}{Example}
\numberwithin{equation}{section}
\title{\textbf{Quantum integrable model for the quantum cohomology/K-theory of flag varieties and the double $\beta$-Grothendieck polynomials}}
\author{Jirui Guo}
\affil{School of Mathematical Sciences,\protect\\
Institute for Advanced Study,\protect\\ 
Key Laboratory of Intelligent Computing and Applications,\protect\\
Tongji University, Shanghai 200092, China}
\date{jrkwok@tongji.edu.cn}
\begin{document}
\maketitle

\begin{abstract}
\noindent
A GL$(n)$ quantum integrable system generalizing the asymmetric five vertex spin chain is shown to encode the ring relations of the equivariant quantum cohomology and equivariant quantum K-theory ring of flag varieties. We also show that the Bethe ansatz states of this system generate the double $\beta$-Grothendieck polynomials.
\end{abstract}
\newpage
\tableofcontents
\setcounter{footnote}{0}

\section{Introduction}

The mathematical structure and exact solutions of quantum integrable systems have proved to be useful tools in providing information about physical phenomena, as well as revealing certain algebraic results. For example, in \cite{GK}, it was shown that the underlying Frobenius algebra of the asymmetric five vertex spin chain, which depends on a parameter $\beta$, is exactly the equivariant quantum cohomology of the Grassmannians when $\beta = 0$, and the equivariant quantum K-theory ring of the Grassmannians when $\beta = -1$. The Bethe/Gauge correspondence of this model was later studied in \cite{UY}. The present work aims at generalizing the idea of \cite{GK}, establishing a relationship between a GL$(n)$ integrable model and the equivariant quantum cohomology/quantum K-theory ring of general flag varieties.

The relationship between the quantum cohomology/K-theory of flag varieties and Toda lattices was discussed in \cite{GK93} and \cite{GL01}. Moreover, it was shown that the quantum K-theory ring of the cotangent bundle of flag varieties can be realized by the XXZ model, and in certain limit it reduces to the quantum K-theory ring of flag varieties \cite{KPSZ}. In this paper, we show that the Bethe ansatz equations of the GL$(n)$ five vertex model, which also depends on a parameter $\beta$, give rise to the quantum Whitney sum relations of the equivariant quantum cohomology ($\beta=0$) and quantum K-theory ring ($\beta = -1$) of flag varieties. These quantum Whitney sum relations express the ring relations in terms of the Chern classes and the $\lambda_y$ classes of the tautological bundles in the cohomology and K-theory rings respectively. In the case of quantum cohomology, these relations are equivalent to those proved in \cite{AS95, K95}. While in the case of quantum K-theory, these relations were proposed in \cite{GMSXZZ} based on field-theoretic computations and were proved for the Grassmannians \cite{GMSZ} and the incidence varieties \cite{Gu:2023fpw}.

We will also show that the Bethe ansatz states of the GL$(n)$ five vertex model generate the double $\beta$-Grothendieck polynomials defined in \cite{FK93}. When $\beta=0$ these polynomials reduce to the double Schubert polynomials representing Schubert classes in the equivariant cohomology ring, and when $\beta = -1$ they reduce to the double Grothendieck polynomials representing Schubert classes in the equivariant K-theory ring. More precisely, we show that, in the case of complete flag variety, the expansion coefficients of the Bethe ansatz states in the natural basis are the double $\beta$-Grothendieck polynomials with variables being the Chern classes/K-theory classes of the line bundles $\mathcal{S}_i/\mathcal{S}_{i-1}$, where $\mathcal{S}_i$ is the $i$th tautological bundle. The double $\beta$-Grothendieck polynomials can also be realized as the partition functions of certain 2d lattice models, see e.g. \cite{BSK, BS, BFHTW}.

Our main results can be summarized as follows:
\begin{itemize}
\item The Bethe ansatz equations of the GL$(n)$ quantum integrable system defined by the R-matrix \eqref{Rmatrix} 
\[
R^{(n)}(x,y) = \left( \begin{array}{cccc}
1 & 0 & 0 & 0 \\
0 & 0 & I_{n-1} & 0 \\
0 & (1+\beta(x \ominus y))I_{n-1} & (x \ominus y) I_{n-1} & 0 \\
0 & 0 & 0 & R^{(n-1)}(x,y)
\end{array}
\right),
\]
where
\[
R^{(2)}(x,y) = \left( \begin{array}{cccc}
1 & 0 & 0 & 0 \\
0 & 0 & 1 & 0 \\
0 & 1+\beta(x \ominus y) & x\ominus y & 0 \\
0 & 0 & 0 & 1
\end{array}
\right),
\]
are derived (Theorem \ref{BAE}). 
\item It is shown that the Bethe ansatz equations give rise to the Whitney type ring relations of the equivariant quantum cohomology ring of the partial flag variety $\mathrm{Fl}(k_{n-1},\cdots,k_2,k_1;N)$ when $\beta=0$ in terms of the Chern classes (Eq.\eqref{whitneyQC}):
\[
c(\mathcal{S}_i) \cdot c(\mathcal{S}_{i+1}/\mathcal{S}_i) = c(\mathcal{S}_{i+1}) + (-1)^{k_{n-i-1}-k_{n-i}} q_{n-i} c(\mathcal{S}_{i-1}),
\]
and give rise to the ring relations of the equivariant quantum K-theory ring of $\mathrm{Fl}(k_{n-1},\cdots,k_2,k_1;N)$ when $\beta=-1$ in terms of the $\lambda_y$ classes (Eq.\eqref{whitneyQK}):
\[
\lambda_y(\mathcal{S}_i) * \lambda_y(\mathcal{S}_{i+1}/\mathcal{S}_i) = \lambda_y(\mathcal{S}_{i+1}) + \frac{q_{n-i}}{1-q_{n-i}} y^{k_{n-i-1}-k_{n-i}}\left( \lambda_y(\mathcal{S}_{i-1}) -\lambda_y(\mathcal{S}_i) \right) * \det(\mathcal{S}_{i+1}/\mathcal{S}_i),
\]
where $\mathcal{S}_i$ is the $i$th tautological bundle of the flag variety.
\item Given the matrix elements $B_i(u)$ of the monodromy matrix as in Eq.\eqref{Tab}, and the pseudo vacuum state $\ket{\Omega^{(0)}}$ as in \eqref{ground}, the expansion coefficients of the quantum state $B_{N-1}(\sigma_1) \cdots B_{1}(\sigma_{N-1}) \ket{\Omega^{(0)}}$ in the natural basis \eqref{basis} are the double $\beta$-Grothendieck polynomials $\mathcal{G}^{(\beta)}_\pi$ 
(Theorem \ref{thm1}): 
\[
B_{N-1}(\sigma_1)\cdots B_2(\sigma_{N-2}) B_1(\sigma_{N-1}) \ket{\Omega^{(0)}} = \sum_{\pi \in S_N} \mathcal{G}^{(\beta)}_\pi (\sigma_1,\cdots,\sigma_{N-1};\ominus t_1,\cdots,\ominus t_{N-1}) \ket{\omega_N \pi^{-1}},
\]
where $\omega_N$ is the permutation with maximal length in the permutation group $S_N$, and $\ket{w}$ for $w \in S_N$ is defined by Eq.\eqref{ketstate}.
\item When the elementary symmetric polynomial $e_l(\sigma^{(i)}_1,\cdots,\sigma^{(i)}_{N-i})$ is identified with $e_l(x_1,\cdots,x_{N-i})$ for all $i=1,\cdots,N-1$ and $l=1,\cdots,N-i$, the Bethe ansatz state \eqref{BEstate} (in the complete flag case) has the expansion (Theorem \ref{thm2}):
\[
\begin{aligned}
\ket{\psi^{(0)}} &= \sum_{i_1,\cdots,i_{k_1}} \psi_{i_1\cdots i_{k_1}}^{(1)} B_{i_1}(\sigma_1^{(1)}) \cdots B_{i_{k_1}}(\sigma_{k_1}^{(1)}) \ket{\Omega^{(0)}}\\ &= \sum_{w \in S_N} \mathcal{G}^{(\beta)}_w (x_1,\cdots,x_{N-1};\ominus t_1,\cdots,\ominus t_{N-1}) \ket{\omega_N w^{-1}},
\end{aligned}
\]
where $x_i$ can be interpreted as the first Chern class of the line bundle $\mathcal{S}_i/\mathcal{S}_{i-1}$ when $\beta=0$, and $1-x_i$ can be interpreted as the K-theory class of $\mathcal{S}_i/\mathcal{S}_{i-1}$ when $\beta=-1$ (this means $\sigma^{(N-i)}_a$ ($\beta = 0$) or $1-\sigma^{(N-i)}_a$ ($\beta = -1$), $a=1,\cdots, i$, should be interpreted as the Chern roots of $\mathcal{S}_i$).
\end{itemize}
This paper is organized as follows. In Sec. \ref{sec:section2}, the basic ingredients of the GL$(n)$ quantum integrable system generalizing the asymmetric five vertex spin chain are introduced. In Sec. \ref{sec:QK}, the Bethe ansatz equations are derived, which are shown to give rise to the Whitney type ring relations of the equivariant quantum cohomology/K-theory ring of the flag varieties. Sec. \ref{sec:Groth} consists of a series of propositions leading to our main results on the relationship between the quantum integrable model under study and the double $\beta$-Grothendieck polynomials. In the appendix, we present some background on the quantum cohomology/K-theory of flag varieties and the double $\beta$-Grothendieck polynomials.

\section{\label{sec:section2}The GL$(n)$ asymmetric five vertex model}

In this section, we present the basic ingredients of the GL$(n)$ asymmetric five vertex model. Let $V=\mathbb{C}^n$, we define the R-matrix $R^{(n)}(x,y) \in \mathrm{End}(V \otimes V)$ as follows:\\
For $n=2$, the R-matrix is defined as
\begin{equation}\label{R2}
R^{(2)}(x,y) = \left( \begin{array}{cccc}
1 & 0 & 0 & 0 \\
0 & 0 & 1 & 0 \\
0 & 1+\beta(x \ominus y) & x\ominus y & 0 \\
0 & 0 & 0 & 1
\end{array}
\right),
\end{equation}
where $\beta$ is a formal variable, and
\begin{equation}\label{ominus}
x \ominus y \equiv \frac{x-y}{1+\beta y}.
\end{equation}
For $n>2$, under the decomposition $V \cong \mathbb{C} \oplus \mathbb{C}^{n-1}$, $R^{(n)}(x,y)$ is defined as follows:
\begin{equation}\label{Rmatrix}
R^{(n)}(x,y) = \left( \begin{array}{cccc}
1 & 0 & 0 & 0 \\
0 & 0 & I_{n-1} & 0 \\
0 & (1+\beta(x \ominus y))I_{n-1} & (x \ominus y) I_{n-1} & 0 \\
0 & 0 & 0 & R^{(n-1)}(x,y)
\end{array}
\right),
\end{equation}
where $I_r$ stands for the identity matrix of dimension $r$. The entries of the R-matrix can be written as
\begin{equation}\label{Rijkl}
R^{(n)}_{ij,kl}(x,y)= (1+ \delta_{k<l}\beta (x\ominus y))\delta_{il}\delta_{jk}+\delta_{k>l}(x\ominus y) \delta_{ik}\delta_{jl},
\end{equation}
where $i,j$ are row indices, $k,l$ are colomn indices, $\{i,k\}$ and $\{ j,l \}$ are indices for the first and second factor of $V \otimes V$ respectively. We also define
\[
\delta_{i<j} = \left\{ \begin{array}{ll}
1, & ~\mathrm{if}~ i<j,\\
0, &~\mathrm{if}~ i\geq j,
\end{array}
\right.\quad
\delta_{i>j} = \left\{ \begin{array}{ll}
1, &~\mathrm{if}~ i>j,\\
0, &~\mathrm{if}~ i\leq j.
\end{array}
\right.
\] 
The R-matrix \eqref{Rmatrix} appeared in \cite{BFHTW} as the $q\rightarrow 0$ limit of a Drinfeld twist of the R-matrix for an evaluation module of $U_q(\widehat{\mathrm{sl}}_{n+1})$, here we show that $R^{(n)}(x,y)$ satisfies the Yang-Baxter equation by a direct computation:
\begin{thm}\label{prop:YBE}
The R-matrix defined by Eq.\eqref{Rmatrix} satisfies the Yang-Baxter equation
\begin{equation}\label{YBE}
R^{(n)}_{12}(x,y) R^{(n)}_{13}(x,z) R^{(n)}_{23}(y,z) = R^{(n)}_{23}(y,z) R^{(n)}_{13}(x,z) R^{(n)}_{12}(x,y)
\end{equation}
for all $n >1$, where $R^{(n)}_{ab}$ acts on the $a$th and $b$th factor of $V \otimes V \otimes V$.
\end{thm}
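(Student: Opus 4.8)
The plan is to prove the Yang--Baxter equation \eqref{YBE} by induction on $n$, exploiting the recursive block structure of $R^{(n)}(x,y)$ under the decomposition $V \cong \mathbb{C} \oplus \mathbb{C}^{n-1}$. The base case $n=2$ is a direct (finite) check: both sides of \eqref{YBE} are $8 \times 8$ matrices, and one verifies equality entry by entry, the only nontrivial identities being those that involve the function $x \ominus y = (x-y)/(1+\beta y)$. The key algebraic fact to isolate and record here is the ``addition-type'' relation satisfied by $\ominus$, namely $(x \ominus y) + (y \ominus z)(1 + \beta(x \ominus y)) = x \ominus z$, together with $1 + \beta(x \ominus z) = (1+\beta(x\ominus y))(1+\beta(y \ominus z))$; these are exactly the identities that make the five-vertex weights consistent, and they are what the $n=2$ computation reduces to.

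For the inductive step, assume \eqref{YBE} holds for $R^{(n-1)}$. Decompose each of the three tensor factors of $V^{\otimes 3}$ as $\mathbb{C} \oplus \mathbb{C}^{n-1}$, so that $V^{\otimes 3}$ breaks into $2^3 = 8$ blocks indexed by which factors sit in the ``$\mathbb{C}$'' summand versus the ``$\mathbb{C}^{n-1}$'' summand. The strategy is to check the operator identity \eqref{YBE} block by block. In the block where all three factors lie in $\mathbb{C}^{n-1}$, the identity is precisely the Yang--Baxter equation for $R^{(n-1)}$, which holds by the induction hypothesis. In the block where all three factors lie in $\mathbb{C}$, every $R$-matrix acts as the scalar $1$ on that block, so the identity is trivial. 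The remaining work is in the ``mixed'' blocks, where some factors are in $\mathbb{C}$ and some in $\mathbb{C}^{n-1}$; here the relevant matrix entries of $R^{(n)}$ are either scalars ($1$, $x\ominus y$, $1+\beta(x\ominus y)$) times identity matrices $I_{n-1}$, or blocks of $R^{(n-1)}$ itself, and one checks that the products on the two sides agree. A useful bookkeeping device is that $R^{(n)}$ restricted to the span of the ``$\mathbb{C}$'' line in each factor together with one fixed basis vector of $\mathbb{C}^{n-1}$ behaves exactly like $R^{(2)}$, so the mixed blocks reduce either to the $n=2$ computation or to manipulations linear in the entries of $R^{(n-1)}$.

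The main obstacle will be organizing the mixed-block verification cleanly: there are several cases according to the pattern of $\mathbb{C}$ versus $\mathbb{C}^{n-1}$ across the three sites, and in the cases with exactly one or exactly two $\mathbb{C}^{n-1}$-factors one must track how the off-diagonal blocks $(1+\beta(x\ominus y))I_{n-1}$ and $(x\ominus y)I_{n-1}$ interleave with the $R^{(n-1)}$-block acting on the other two factors. The point that makes this tractable is that these off-diagonal blocks are scalar multiples of the identity on $\mathbb{C}^{n-1}$, so they commute with anything acting only on the ``spectator'' $\mathbb{C}^{n-1}$ factors; consequently each mixed block collapses, after factoring out $I_{n-1}$'s, to either the scalar $n=2$ identity or to a trivial rearrangement. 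I would present the case analysis compactly using the entrywise formula \eqref{Rijkl}, verifying that both sides of \eqref{YBE} produce the same coefficient of $\delta$-symbols, with the $\ominus$-identities above invoked exactly where the $n=2$ case needed them.
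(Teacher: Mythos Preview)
Your overall strategy---induction on $n$ with a block decomposition of $V^{\otimes 3}$ according to $\mathbb{C}\oplus\mathbb{C}^{n-1}$---is exactly the approach the paper takes. The base case and the two ``pure'' blocks (all-$\mathbb{C}$ and all-$\mathbb{C}^{n-1}$) work just as you describe.

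The gap is in your treatment of the mixed blocks. Your claim that ``each mixed block collapses, after factoring out $I_{n-1}$'s, to either the scalar $n=2$ identity or to a trivial rearrangement'' is too optimistic, and the commutation argument (off-diagonal blocks are scalar${}\cdot I_{n-1}$, hence commute with spectator factors) does not cover all cases. In some mixed sectors the $R^{(n-1)}$-block is \emph{not} a spectator: it appears on different pairs of tensor factors on the two sides of the equation, and relating them is a genuine identity for $R^{(n-1)}$, not a mere commutation. Concretely, the paper finds that two of these sectors require
\[
\sum_{k,l} R^{(m)}_{ij,lk}(x,y)\, R^{(m)}_{kl,st}(y,z) \;=\; R^{(m)}_{ij,st}(x,z)
\]
(and the analogous identity with $(x,y)$ and $(y,z)$ swapped), while two others require the linear relation
\[
(y-z)(1+\beta x)\, R^{(m)}(x,x) + (x-y)(1+\beta z)\, R^{(m)}(x,z) \;=\; (x-z)(1+\beta y)\, R^{(m)}(x,y).
\]
None of these is a tautology or an instance of the $n=2$ Yang--Baxter relation; the paper proves them by a separate induction on $m$. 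Your $\ominus$-addition law is certainly the engine behind them (and one can see this cleanly by writing $R^{(m)}(x,y)=P+(x\ominus y)N$ with $N$ independent of $x,y$), but your proposal does not isolate these identities or indicate how they are established. To make the argument complete you should state and prove these auxiliary relations explicitly before invoking them in the mixed-block verification.
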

\begin{proof}
A direct computation shows that Eq.\eqref{YBE} holds for $n=2$. Now suppose $R^{(n-1)}$ satisfies the Yang-Baxter equation for some $n > 2$. Let $i_{ab}$ denote the identity map from the $a$th factor to the  $b$th factor of $V \otimes V \otimes V$. Plugging \eqref{Rmatrix} into Eq.\eqref{YBE}, we find that Eq.\eqref{YBE} decomposes into the following identities:
\begin{align}
&i_{21} i_{32} = i_{31},\quad i_{13} = i_{23} i_{12},\label{ybe1} \\ 
&i_{12} i_{31} i_{23} = i_{32} i_{13} i_{21},\quad i_{21} i_{13} i_{32} = i_{23} i_{31} i_{12},\label{ybe2} \\ 
&(x-y) i_{31} i_{23}+(y-z) i_{21} = i_{21}(x-z), \quad (x-z) i_{12} = (x-y) i_{32} i_{13} + (y-z) i_{12},\label{ybe3} \\ 
&R^{(n-1)}_{12}(x,y)i_{31}R^{(n-1)}_{23}(y,z) = i_{32} R^{(n-1)}_{13}(x,z) i_{21},\label{ybe4} \\ 
&i_{12} R^{(n-1)}_{13}(x,z) i_{23} = R_{23}^{(n-1)}(y,z) i_{13} R_{12}^{(n-1)}(x,y),\label{ybe5} \\ 
&(x-z)(1+\beta y) R_{12}^{(n-1)}(x,y) i_{32} = (y-z)(1+\beta x) i_{31} i_{12} + (x-y)(1+\beta z) i_{32} R_{13}^{(n-1)}(x,z),\label{ybe6} \\ 
&(y-z)(1+\beta x)i_{21} i_{13} + (x-y)(1+\beta z) R_{13}^{(n-1)}(x,z) i_{23} = (x-z)(1+\beta y) i_{23} R_{12}^{(n-1)}(x,y), \label{ybe7} \\ 
&R^{(n-1)}_{12}(x,y) R^{(n-1)}_{13}(x,z) R^{(n-1)}_{23}(y,z) = R^{(n-1)}_{23}(y,z) R^{(n-1)}_{13}(x,z) R^{(n-1)}_{12}(x,y). \label{ybe8}
\end{align}
Eq.\eqref{ybe1}-\eqref{ybe3} are direct consequences of the definition of $i_{ab}$. Eq.\eqref{ybe4} and \eqref{ybe5} can be easily derived from the identities
\begin{equation}\label{ybe01}
\sum_{k,l} R^{(m)}_{ij,lk}(x,y) R^{(m)}_{kl,st}(y,z) = R^{(m)}_{ij,st}(x,z)
\end{equation}
and
\begin{equation}\label{ybe02}
\sum_{k,l} R^{(m)}_{ij,lk}(y,z) R^{(m)}_{kl,st}(x,y) = R^{(m)}_{ij,st}(x,z)
\end{equation}
respectively, while Eq.\eqref{ybe6} and \eqref{ybe7} can be easily derived from
\begin{equation}\label{ybe03}
(y-z)(1+\beta x) R^{(m)}(x,x) + (x-y)(1+\beta z) R^{(m)}(x,z) = (x-z)(1+\beta y) R^{(m)}(x,y),
\end{equation}
where $m\geq 2$.
Eq.\eqref{ybe01}-\eqref{ybe03} can all be proved by simple induction on $m$. Finally, Eq.\eqref{ybe8} is the inductive hypothesis. Therefore, by induction, the Yang-Baxter equation holds for all $n \geq 2$.
\end{proof}

To construct an integrable system, we take $N$ copies of $V\cong \mathbb{C}^n$ and label them by $V_i, i=1,\cdots,N$. Let $V_a \cong V$ be an auxiliary space. The monodromy matrix is defined by
\begin{equation}\label{monodromy}
T^{(n)}_a(u) = R_{aN}(u,t_N) R_{a,N-1}(u,t_{N-1}) \cdots R_{a2}(u,t_2) R_{a1}(u,t_1),
\end{equation}
where $R_{ai}$ acts on $V_a\otimes V_i$. 
The Yang-Baxter equation \eqref{YBE} leads to the following RTT relation:
\begin{equation}\label{RTT} 
R^{(n)}_{ab}(x,y) T^{(n)}_a(x) T^{(n)}_b(y) = T^{(n)}_b(y) T^{(n)}_a(x) R^{(n)}_{ab}(x,y).
\end{equation}
We define the transfer matrix to be
\begin{equation}\label{transfer}
t^{(n)}(u) = \mathrm{Tr}_{V_a} (M_n T^{(n)}_a(u)),
\end{equation}
where $M_n = \mathrm{diag}(b_0,b_1,\cdots,b_{n-1})$ acts on $V_a$, imposing a twisted periodic boundary condition. In the following, we also need
\begin{equation}\label{Mll}
M_{l} := \mathrm{diag}(b_{n-l},\cdots, b_{n-1}), \quad l=1,\cdots,n.
\end{equation}
Eq.\eqref{RTT} implies that\footnote{Notice that for any diagonal matrix $M$, we have $R^{(n)}_{ab}(x,y) M_a M_b = M_b M_a R^{(n)}_{ab}(x,y)$.}
\begin{equation}
[t(u),t(v)] = 0.
\end{equation}
Under the decomposition $V_a = \mathbb{C} \oplus \mathbb{C}^{n-1}$, the monodromy matrix can be written in the following form:
\begin{equation}\label{Tab}
T^{(n)}_a(u)= \left( \begin{array}{cccc}
A(u) & B_1(u) & \cdots & B_{n-1}(u) \\
C_1(u) & D_{11}(u) & \cdots & D_{1,n-1}(u) \\
\vdots & \vdots & \ddots & \vdots \\
C_{n-1}(u) & D_{n-1,1}(u) & \cdots & D_{n-1,n-1}(u) 
\end{array} \right)
\end{equation}
with $A(u), B_i(u), C_i(u), D_{ij}(u)$ being operators acting on the Hilbert space $\mathcal{H} = V_1 \otimes V_2 \otimes \cdots \otimes V_N$. The RTT relation \eqref{RTT} yields the following commutation ralations:
\begin{align}
A(x) B_i(y) &= \frac{1}{y \ominus x} \left[ B_i(y) A(x) -  B_i(x) A(y) \right], \label{comAB} \\
D_{ij}(x) B_k(y) &= \sum_{s,l} \left[ \frac{1}{x \ominus y} B_s(y) D_{il}(x) R^{(n-1)}_{ls,jk}(x,y)\right] + \frac{1}{y \ominus x} B_j(x) D_{ik}(y), \label{comDB}\\
B_i(x) B_j(y) &= \sum_{r,s} B_r(y) B_s(x) R^{(n-1)}_{sr,ij}(x,y) \label{comBB}.
\end{align}

\section{Bethe ansatz equations and the equivariant quantum cohomology/K-theory ring of flag varieties}\label{sec:QK}

In this section, following the method proposed in \cite{KR}, we derive the Bethe ansatz equations of the GL$(n)$ five vertex model introduced in the previous section, and discuss their relationship with the equivariant quantum cohomology and quantum K-theory ring of flag varieties.

Let $V = \mathbb{C}^n$ be spanned by $v_0,v_1,\cdots,v_{n-1}$, where $v_i := \ket{i}$ is the $(i+1)$th unit vector of $\mathbb{C}^n$. Define
\[
\ket{n_1,n_2,\cdots,n_N} :=  v_{n_1} \otimes v_{n_2} \otimes \cdots \otimes v_{n_N},
\]
then a basis of the Hilbert space $\mathcal{H} = V_1 \otimes V_2 \otimes \cdots V_N$ can be chosen to be
\begin{equation}\label{basis}
\{ \ket{n_1,n_2,\cdots,n_N} |~ 0 \leq n_i \leq n-1 ~\}.
\end{equation}
We call the basis \eqref{basis} the natural basis of the Hilbert space $\mathcal{H}$.
The pseudo vacuum is taken to be
\begin{equation}\label{ground}
\ket{\Omega^{(0)}} = \ket{0,0,\cdots ,0} \in  V_1 \otimes V_2 \otimes \cdots \otimes V_N.
\end{equation}
From Eq.\eqref{Rmatrix} and \eqref{Tab}, it is easy to compute
\begin{equation}
A(u) \ket{\Omega^{(0)}} = \ket{\Omega^{(0)}},\quad D_{ij}(u) \ket{\Omega^{(0)}} = \prod_{k=1}^N (u \ominus t_k) \delta_{ij} \ket{\Omega^{(0)}}.
\end{equation}
The Bethe ansatz state is taken to be of the form
\begin{equation}\label{BEstate}
\ket{\psi^{(0)} } = \sum_{i_1,\cdots,i_{k_1}} \psi_{i_1\cdots i_{k_1}}^{(1)} B_{i_1}(\sigma_1^{(1)}) \cdots B_{i_{k_1}}(\sigma_{k_1}^{(1)}) \ket{\Omega^{(0)}},
\end{equation}
where $\sigma_{\alpha_1}^{(1)}$, $\alpha_1 = 1,\cdots,k_1$, are parameters to be determined, and the expansion coefficients $\psi_{i_1\cdots i_{k_1}}^{(1)}$ depend on $n-2$ sets of parameters
\[
\sigma^{(2)}_{\alpha_2} , \cdots, \sigma^{(n-1)}_{\alpha_{n-1}}
\]
with $\alpha_s = 1,2,\cdots, k_s$, where $\sigma^{(s)}_a \neq \sigma^{(s)}_b$ for $a \neq b$ and $1 \leq s \leq n-1$. Now we can prove the following
\begin{thm}\label{BAE}
If the parameters $\sigma^{(1)}_{\alpha_1} , \cdots, \sigma^{(n-1)}_{\alpha_{n-1}}$ satisfy the following Bethe ansatz equations:
\begin{align}
&\prod_{a=1}^{k_1}(1+\beta(\sigma^{(1)}_a \ominus \sigma^{(1)}_{\alpha_1})) \prod_{i=1}^N (\sigma^{(1)}_{\alpha_1} \ominus t_i) = (-1)^{k_1-1} q_1 \prod_{b=1}^{k_2} (\sigma^{(2)}_b \ominus \sigma^{(1)}_{\alpha_1}),\label{BAE1}\\
&\prod_{a=1}^{k_m}(1+\beta(\sigma^{(m)}_a \ominus \sigma^{(m)}_{\alpha_m})) \prod_{i=1}^{k_{m-1}} (\sigma^{(m)}_{\alpha_m} \ominus \sigma^{(m-1)}_i) = (-1)^{k_m-1} q_m \prod_{b=1}^{k_{m+1}} (\sigma^{(m+1)}_b \ominus \sigma^{(m)}_{\alpha_m}),\label{BAE2}\\
&\prod_{a=1}^{k_{n-1}}(1+\beta(\sigma^{(n-1)}_a \ominus \sigma^{(n-1)}_{\alpha_{n-1}})) \prod_{i=1}^{k_{n-2}} (\sigma^{(n-1)}_{\alpha_{n-1}} \ominus \sigma^{(n-2)}_i) = (-1)^{k_{n-1}-1} q_{n-1},\label{BAE3}
\end{align}
where $m=2,\cdots,n-2,\quad \alpha_s=1,\cdots,k_s, \quad q_s := b_{s-1}/b_s,~ s=1,\cdots,n-1$,
then the state \eqref{BEstate} is a common eigenstate of the transfer matrices $t(u)$ defined by \eqref{transfer}.
\end{thm}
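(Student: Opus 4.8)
The plan is to run the nested algebraic Bethe ansatz of Kulish--Reshetikhin \cite{KR}, inducting on $n$. Writing $M_n=\mathrm{diag}(b_0,M_{n-1})$ and using \eqref{Tab}, the transfer matrix \eqref{transfer} splits as
\[
t(u)=b_0 A(u)+\sum_{i=1}^{n-1}b_i D_{ii}(u)=b_0 A(u)+\Tr_{\mathbb{C}^{n-1}}\!\big(M_{n-1}D(u)\big),\qquad D(u)=(D_{ij}(u)).
\]
I act with $t(u)$ on the Bethe state \eqref{BEstate} and push $A(u)$ and the block $D(u)$ rightward through $B_{i_1}(\sigma^{(1)}_1)\cdots B_{i_{k_1}}(\sigma^{(1)}_{k_1})$ using \eqref{comAB} and \eqref{comDB}, until they reach $\ket{\Omega^{(0)}}$, on which $A(u)\ket{\Omega^{(0)}}=\ket{\Omega^{(0)}}$ and $D_{ij}(u)\ket{\Omega^{(0)}}=\prod_k(u\ominus t_k)\delta_{ij}\ket{\Omega^{(0)}}$. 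Each step of \eqref{comAB}/\eqref{comDB} produces a ``wanted'' term (argument $u$ kept) and an ``unwanted'' term (argument exchanged for some $\sigma^{(1)}_{\alpha_1}$). The base case $n=2$ is the ordinary algebraic Bethe ansatz: $D(u)=D_{11}(u)$ is scalar, $R^{(1)}$ is trivial, and the computation below yields \eqref{BAE1} with $k_2=0$.

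For the wanted terms: iterating the first term of \eqref{comAB} contributes $b_0\prod_{\alpha_1=1}^{k_1}(\sigma^{(1)}_{\alpha_1}\ominus u)^{-1}$ times the unchanged $B$-string, while iterating the first term of \eqref{comDB} threads the column index of $D(u)$ successively through the matrices $R^{(n-1)}(u,\sigma^{(1)}_{\alpha_1})$; taking the trace against $M_{n-1}$ and closing the auxiliary index on the vacuum, one recognizes exactly the transfer matrix $t^{(n-1)}(u)$ of the GL$(n-1)$ five-vertex model built as in \eqref{monodromy}--\eqref{transfer} on $k_1$ sites with inhomogeneities $\sigma^{(1)}_1,\dots,\sigma^{(1)}_{k_1}$ and twist $M_{n-1}=\mathrm{diag}(b_1,\dots,b_{n-1})$, acting on the coefficient array $\psi^{(1)}$. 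Hence the wanted part equals $\big[\,b_0\prod_{\alpha_1}(\sigma^{(1)}_{\alpha_1}\ominus u)^{-1}+\prod_{k=1}^N(u\ominus t_k)\prod_{\alpha_1}(u\ominus\sigma^{(1)}_{\alpha_1})^{-1}\,t^{(n-1)}(u)\,\big]$ applied to the $B$-string, which is $\Lambda(u)\ket{\psi^{(0)}}$ precisely when $\psi^{(1)}$ is an eigenvector of $t^{(n-1)}(u)$ for all $u$. By the inductive hypothesis this holds when $\psi^{(1)}$ is the GL$(n-1)$ Bethe state with parameters $\sigma^{(2)},\dots,\sigma^{(n-1)}$ satisfying the GL$(n-1)$ Bethe equations; with nested inhomogeneities $\sigma^{(1)}_i$ and twist parameters $q'_m=b_m/b_{m+1}=q_{m+1}$ these are exactly \eqref{BAE2}--\eqref{BAE3}. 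Moreover the nested eigenvalue obeys $\Lambda^{(n-1)}(\sigma^{(1)}_{\alpha_1})=b_1\prod_{b=1}^{k_2}(\sigma^{(2)}_b\ominus\sigma^{(1)}_{\alpha_1})^{-1}$, since every other term of $\Lambda^{(n-1)}$ carries the vacuum factor $\prod_a(u\ominus\sigma^{(1)}_a)$, which vanishes at $u=\sigma^{(1)}_{\alpha_1}$.

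It remains to kill the unwanted terms. Using \eqref{comBB} to reorder the $B$'s (legitimate because $\psi^{(1)}$, a nested Bethe vector, is compatible with the $R^{(n-1)}$-action), one collects for each $\alpha_1$ all unwanted contributions into one state with $B(\sigma^{(1)}_{\alpha_1})$ replaced by $B(u)$, whose coefficient is $-\operatorname{res}_{u=\sigma^{(1)}_{\alpha_1}}\Lambda(u)$. Since $t(u)$ is polynomial in $u$ (each $R_{ai}(u,t_i)$ is), $\Lambda(u)$ is pole-free, so these residues must vanish. Computing them with $\operatorname{res}_{\sigma^{(1)}_{\alpha_1}}(\sigma^{(1)}_{\alpha_1}\ominus u)^{-1}=-(1+\beta\sigma^{(1)}_{\alpha_1})$, $\operatorname{res}_{\sigma^{(1)}_{\alpha_1}}(u\ominus\sigma^{(1)}_{\alpha_1})^{-1}=1+\beta\sigma^{(1)}_{\alpha_1}$, the identity $1+\beta(x\ominus y)=(1+\beta x)/(1+\beta y)$, and the value of $\Lambda^{(n-1)}(\sigma^{(1)}_{\alpha_1})$ above, the vanishing condition collapses to \eqref{BAE1}. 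Conversely, when \eqref{BAE1} holds these coefficients vanish, so $t(u)\ket{\psi^{(0)}}=\Lambda(u)\ket{\psi^{(0)}}$ for all $u$, closing the induction.

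The main obstacle is precisely the unwanted-term bookkeeping: showing that the many unwanted contributions generated by the iterated commutations, each carrying a different intermediate ordering of the $B$'s, really telescope into a single term per $\alpha_1$ with coefficient $-\operatorname{res}_{\sigma^{(1)}_{\alpha_1}}\Lambda(u)$. This hinges on the $R^{(n-1)}$-symmetry of the $B$-string contracted against $\psi^{(1)}$ (equivalently, on $\psi^{(1)}$ being a genuine GL$(n-1)$ Bethe eigenvector), together with a careful verification that the iterated wanted part of \eqref{comDB} reconstructs $T^{(n-1)}$ with the correct ordering of factors, so that the nested problem is literally the lower-rank five-vertex model and the induction closes. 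Everything else is the routine algebraic Bethe ansatz computation plus the elementary $\ominus$-algebra recorded above.
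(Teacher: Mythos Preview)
Your proposal is correct and follows essentially the same nested algebraic Bethe ansatz argument as the paper: split $t(u)=b_0A(u)+\Tr(M_{n-1}D(u))$, push through the $B$-string via \eqref{comAB}--\eqref{comDB}, recognize the wanted $D$-part as the GL$(n-1)$ transfer matrix on $k_1$ sites with inhomogeneities $\sigma^{(1)}$, and impose $\mathrm{Res}_{\sigma^{(1)}_{\alpha_1}}\Lambda^{(0)}=0$ to kill the unwanted terms. The only cosmetic difference is that the paper carries out the full reduction to GL$(2)$ and then back-substitutes the recursion \eqref{eigv_m-1}--\eqref{eigv_n-2} into \eqref{BAE0}--\eqref{BAEm}, whereas you package this as a clean induction and short-circuit the back-substitution by noting directly that $\Lambda^{(1)}(\sigma^{(1)}_{\alpha_1})=b_1\prod_b(\sigma^{(2)}_b\ominus\sigma^{(1)}_{\alpha_1})^{-1}$ since the remaining terms carry the vanishing factor $\prod_a(u\ominus\sigma^{(1)}_a)$; this is exactly what the paper's recursion yields.

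One small wording issue: your sentence ``Since $t(u)$ is polynomial in $u$ \dots\ $\Lambda(u)$ is pole-free, so these residues must vanish'' reverses the logical direction. $\Lambda(u)$ is only the eigenvalue once the unwanted terms have already been set to zero; the correct statement (which the paper also uses as a heuristic, ``vanishing of the unwanted terms amounts to $\Lambda^{(0)}(u)$ being holomorphic'') is simply that the unwanted coefficients are computed to equal $-\mathrm{Res}\,\Lambda$, and setting them to zero is what \emph{defines} the BAE. Your ``Conversely'' sentence gets this right, so the argument is fine---just drop the polynomial-hence-pole-free motivation.
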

\begin{proof}
From the commutation relations \eqref{comAB} and \eqref{comDB}, one can compute
\begin{equation}\label{ADpsi}
\begin{aligned}
&A(u) \ket{\psi^{(0)}} = \prod_{a=1}^{k_1} (\sigma_a^{(1)} \ominus u)^{-1} \ket{\psi^{(0)}} + \ket{\phi_A},\\
&D_{ij}(u) \ket{\psi^{(0)}} = \frac{\prod_{l=1}^N (u \ominus t_l)}{\prod_{a=1}^{k_1} (u \ominus \sigma^{(1)}_a)}\\ 
&\times \sum_{i_1,\cdots,i_{k_1}}\sum_{j_1,\cdots,j_{k_1}} \left\{ [T_{ij}^{(n-1)}(u;\sigma^{(1)})]_{i_1\cdots i_{k_1},j_1\cdots j_{k_1}} \psi_{j_1\cdots j_{k_1}}^{(1)} B_{i_1}(\sigma_1^{(1)}) \cdots B_{i_{k_1}}(\sigma_{k_1}^{(1)}) \right\} \ket{\Omega^{(0)}}+\ket{\phi_D},
\end{aligned}
\end{equation}
where $\ket{\phi_A}$ and $\ket{\phi_D}$ contain all the unwanted terms that are not in the subspace spanned by states of the form $B_{i_1}(\sigma_1^{(1)}) \cdots B_{i_{k_1}}(\sigma_{k_1}^{(1)}) \ket{\Omega^{(0)}}$, and
\begin{equation}\label{Tn-1}
\begin{aligned}
&[T_{ij}^{(n-1)}(u;\sigma^{(1)})]_{i_1\cdots i_{k_1},j_1\cdots j_{k_1}} \\&= \sum_{\mu_1 \cdots \mu_{k_1-1}} R^{(n-1)}_{ii_{k_1},\mu_{k_1-1}j_{k_1}}(u,\sigma^{(1)}_{k_1})
\cdots R^{(n-1)}_{\mu_2 i_2,\mu_1 j_2}(u,\sigma^{(1)}_2) R^{(n-1)}_{\mu_1 i_1,j j_1}(u,\sigma^{(1)}_1).
\end{aligned}
\end{equation}
Eq.\eqref{Tn-1} suggests that $T^{(n-1)}$ can be viewed as the monodromy matrix of the GL$(n-1)$ asymmetric five vertex model with $k_1$ sites and R-matrix $R^{(n-1)}$, in which $\sigma^{(1)}_a, a=1,\cdots,k_1$, are the equivariant parameters. Correspondingly, the transfer matrix can be defined as
\[
t^{(n-1)}(u;\sigma^{(1)}) = \mathrm{Tr}(M_{n-1} T^{(n-1)}(u;\sigma^{(1)})),
\]
where $M_l$ is defined by \eqref{Mll}.
Let
\[
\psi_{i_1\cdots i_{k_1}}^{(1)} = \braket{i_1 i_2 \cdots i_{k_1} | \psi^{(1)}}, \quad i_s = 1,\cdots, n-1.
\] 
If $\ket{\psi^{(1)}}$ is an eigenstate of $t^{(n-1)}(u;\sigma^{(1)})$ with eigenvalue $\Lambda^{(1)}(u)$, i.e.
\begin{equation}\label{eig_n-1}
t^{(n-1)}(u;\sigma^{(1)}) \ket{\psi^{(1)}} = \Lambda^{(1)}(u) \ket{\psi^{(1)}},
\end{equation} 
then, as a consequence of Eq.\eqref{ADpsi}
\[
\mathrm{Tr}( M_{n-1} D(u)) \ket{\psi^{(0)}} = \frac{\prod_{i=1}^N (u \ominus t_i)}{\prod_{a=1}^{k_1} (u \ominus \sigma^{(1)}_a)} \Lambda^{(1)}(u) \ket{\psi^{(0)}} + \ket{\tilde{\phi}_D},
\]
where $\ket{\tilde{\phi}_D}$ is the sum of the unwanted terms.
Then
\[
t^{(n)}(u) \ket{\psi^{(0)}} = \left[b_0 A(u) + \mathrm{Tr}( M_{n-1} D(u) )\right] \ket{\psi^{(0)}} =
\Lambda^{(0)}(u) \ket{\psi^{(0)}} + \ket{\phi^{(n)}},
\]
where
\begin{equation}\label{eigv_0}
\Lambda^{(0)}(u) = b_0 \prod_{a=1}^{k_1}(\sigma_a^{(1)} \ominus u)^{-1} + \frac{\prod_{i=1}^N (u \ominus t_i)}{\prod_{a=1}^{k_1} (u \ominus \sigma^{(1)}_a)} \Lambda^{(1)}(u),
\end{equation}
and $\ket{\phi^{(n)}}$ is the sum of the unwanted terms. For $\ket{\psi^{(0)}}$ to be an eigenstate of $t^{(n)}(u)$ for all $u$, the unwanted terms must vanish. As in the usual algebraic Bethe ansatz method, vanishing of the unwanted terms $\ket{\phi^{(n)}}$ amounts to $\Lambda^{(0)}(u)$ being holomorphic, i.e.
\begin{equation}\label{res}
\mathrm{Res} (\Lambda^{(0)}; \sigma^{(1)}_a) = 0
\end{equation}
for all $a = 1,\cdots, k_1$. Eq.\eqref{res} gives us the equations for $\sigma_a^{(1)}$:
\begin{equation}\label{BAE0}
\prod_{\alpha=1}^{k_1}(1+\beta \sigma_\alpha^{(1)}) \prod_{i=1}^N(\sigma_a^{(1)}\ominus t_i) \Lambda^{(1)}(\sigma_a^{(1)}) = (-1)^{k_1-1} b_0 (1+\beta \sigma_a^{(1)})^{k_1}, \quad a=1,\cdots,k_1.
\end{equation}
Thus we have reduced the eigenvalue problem for the GL$(n)$ system to the eigenvalue problem \eqref{eig_n-1} for a GL$(n-1)$ system. Clearly, we can continue to reduce the rank in exactly the same way. At each step, a set of parameters $\sigma^{(m)}_{\alpha_m}, \alpha_m=1,\cdots,k_m$, has to be introduced to construct the Bethe ansatz state
\begin{equation}\label{psis}
\ket{\psi^{(m-1)}} = \sum_{i_1,\cdots,i_{k_m}} \psi^{(m)}_{i_1,\cdots,i_{k_m}} B^{(m-1)}_{i_1}(\sigma^{(m)}_1) \cdots B^{(m-1)}_{i_{k_m}}(\sigma^{(m)}_{k_m}) \ket{\Omega^{(m-1)}},
\end{equation}
where
\begin{equation}\label{psim}
 \psi^{(m)}_{i_1,\cdots,i_{k_m}} = \braket{i_1,\cdots,i_{k_m}|\psi^{(m)}},
\end{equation}
and $B^{(m-1)}_i$'s are the $B_i$ operators defined as matrix elements of the monodromy matrix in the corresponding step as in \eqref{Tab} and $\sigma^{(m-1)}_a, a=1,\cdots,k_{m-1}$, serve as the equivariant parameters.
As in the $m=1$ case, for \eqref{psis} to be an eigenstate, the following equations have to be satisfied:
\begin{equation}\label{BAEm}
\prod_{b=1}^{k_m}(1+\beta \sigma_b^{(m)}) \prod_{a=1}^{k_{m-1}}(\sigma_{\alpha_m}^{(m)}\ominus \sigma_a^{(m-1)}) \Lambda^{(m)}(\sigma_{\alpha_m}^{(m)}) = (-1)^{k_m-1} b_{m-1} (1+\beta \sigma_{\alpha_m}^{(m)})^{k_m}, \quad \alpha_m=1,\cdots,k_m,
\end{equation}
accordingly, the eigenvalues satisfy
\begin{equation}\label{eigv_m-1}
\Lambda^{(m-1)}(u) = b_{m-1} \prod_{\alpha=1}^{k_m}(\sigma_{\alpha}^{(m)} \ominus u)^{-1} + \frac{\prod_{a=1}^{k_{m-1}} (u \ominus \sigma_a^{(m-1)})}{\prod_{b=1}^{k_m} (u \ominus \sigma^{(m)}_b)} \Lambda^{(m)}(u)
\end{equation}
for $m=2,\cdots,n-2$.

This procedure continues until we get a GL$(2)$ system with R-matrix given by Eq.\eqref{R2}. This GL$(2)$ system can be solved exactly by the algebraic Bethe ansatz method as for other spin chain models, leading to the Bethe ansatz equations Eq.\eqref{BAE3} and
\begin{equation}\label{eigv_n-2}
\Lambda^{(n-2)}(u) =b_{n-2} \prod_{\alpha=1}^{k_{n-1}}(\sigma_{\alpha}^{(n-1)} \ominus u)^{-1} + b_{n-1}\frac{\prod_{a=1}^{k_{n-2}} (u \ominus \sigma_a^{(n-2)})}{\prod_{b=1}^{k_{n-1}} (u \ominus \sigma^{(n-1)}_b)}.
\end{equation}
Eq.\eqref{eigv_n-2} then allows us to solve all the eigenvalues $\Lambda^{(m)}, m=0,1,\cdots,n-3$, from Eq.\eqref{eigv_m-1} and \eqref{eigv_0} recursively. Plugging these eigenvalues into Eq.\eqref{BAEm} and \eqref{BAE0}, we get the equations Eq.\eqref{BAE2} and \eqref{BAE1}.
\end{proof}

Now let us restrict to the cases in which $k_{n-1} < \cdots < k_2 < k_1 < N$ and compare Eq.\eqref{BAE1}-\eqref{BAE3} with the ring relations of the equivariant quantum cohomology and the equivariant quantum K-theory ring of flag varieties. The reader may refer to the appendix and references therein for a review of the quantum cohomology and quantum K-theory ring of flag varieties.  

When $\beta=0$, we can identify $\sigma_a^{(n-i)}$ with the Chern roots of $\mathcal{S}_i$, the $i$th tautological bundle of the flag variety $\mathrm{Fl}(k_{n-1},\cdots,k_2,k_1;N)$, i.e. the $j$th elementary symmetric polynomial $e_j(\sigma_1^{(n-i)},\cdots,\sigma_{k_{n-i}}^{(n-i)})$ is identified with the $j$th Chern class of $\mathcal{S}_{i}$:
\[
e_j(\sigma_1^{(n-i)},\cdots,\sigma_{k_{n-i}}^{(n-i)}) = c_j(\mathcal{S}_{i}),\quad j=1,\cdots,k_{n-i}.
\]
Under this identification, it was checked in \cite{Guo, GMSXZZ} that Eq.\eqref{BAE1}-\eqref{BAE3} specialized to $\beta=0$ are exactly the ring relations of the equivariant quantum cohomology of $\mathrm{Fl}(k_{n-1},\cdots,k_2,k_1;N)$ with the equivariant parameters given by $(t_1,\cdots,t_N)$.

When $\beta=-1$, we can identify $1-\sigma_a^{(n-i)}$ with the Chern roots of $\mathcal{S}_i$, i.e. the $j$th elementary symmetric polynomial $e_j(1-\sigma_1^{(i)},\cdots,1-\sigma_{k_{n-i}}^{(i)})$ is identified with $\wedge^j \mathcal{S}_i$ in the Grothendieck group of the flag variety $\mathrm{Fl}(k_{n-1},\cdots,k_2,k_1;N)$:
\[
e_j(1-\sigma_1^{(n-i)},\cdots,1-\sigma_{k_{n-i}}^{(n-i)}) = \wedge^j \mathcal{S}_i,\quad j=1,\cdots,k_{n-i}.
\]
It was argued in \cite{GMSXZZ} that Eq.\eqref{BAE1}-\eqref{BAE3} specialized to $\beta=-1$ are the ring relations of the (predicted) equivariant quantum K-theory ring of $\mathrm{Fl}(k_{n-1},\cdots,k_2,k_1;N)$ with the equivariant parameters given by $(1-t_1,\cdots,1-t_N)$. 

Indeed, as in \cite{GMSXZZ}, one can use Vieta's formula\footnote{Vieta's formula: If $r_1,r_2,\cdots,r_n$ are roots of the polynomial $a_n x^n + a_{n-1} x^{n-1} + \cdots + a_1 x+a_0$, then $e_l(r_1,\cdots,r_n) = (-1)^l a_{n-l}/a_n$.} to convert Eq.\eqref{BAE1}-\eqref{BAE3} into the ring relations of the equivariant quantum cohomology ring ($\beta=0$) or the equivariant K-theory ring ($\beta=-1$) of the flag variety $\mathrm{Fl}(k_{n-1},\cdots,k_2,k_1;N)$. 

When $\beta = 0$, Eq.\eqref{BAE1}-\eqref{BAE3} reduces to
\begin{equation}\label{BAEQC}
\prod_{\alpha=1}^{k_{m-1}}(\sigma^{(m)}_a - \sigma^{(m-1)}_\alpha) = (-1)^{k_m-1}q_m \prod_{\gamma=1}^{k_{m+1}} (\sigma^{(m+1)}_\gamma - \sigma^{(m)}_a)
\end{equation}
for $a=1,\cdots,k_m, m=1,2,\cdots,n-1$, where we have defined $\sigma^{(0)}_i=t_i, k_0=N, k_n=0$. Eq.\eqref{BAEQC} tells us that $\sigma^{(m)}_a, a=1,\cdots,k_m$, are roots of the polynomial
\begin{equation}\label{polyQC}
\sum_{i=0}^{k_{m-1}} (-1)^i x^{k_{m-1}-i} e_i(\sigma^{(m-1)}) + (-1)^{k_m-k_{m+1}}q_m \sum_{j=0}^{k_{m+1}} (-1)^j x^{k_{m+1}-j} e_j(\sigma^{(m+1)}),
\end{equation}
where $e_i(\sigma^{(s)})$ is the $i$th elementary symmetric polynomial in the variables $\sigma^{(s)}_a$, $a=1,\cdots,k_s$.
Notice that the coefficient of $x^{k_{m-1}-l}$ in \eqref{polyQC} is\footnote{We set $e_l(x_1,\cdots,x_s)=0$ for $l<0$ or $l>s$.} 
\[
(-1)^l e_l(\sigma^{(m-1)})+(-1)^{k_{m-1}-k_m+l}q_m e_{k_{m+1}-k_{m-1}+l}(\sigma^{(m+1)}).
\]
Assume the other $k_{m-1} - k_m$ roots of \eqref{polyQC} are $\eta^{(m)}_b, b=1,\cdots,k_{m-1}-k_m$, then Vieta's formula yields
\begin{equation}\label{relQC}
\sum_{i=0}^l e_i(\sigma^{(m)}) e_{l-i}(\eta^{(m)}) = e_l(\sigma^{(m-1)}) + (-1)^{k_{m-1}-k_m}q_m e_{k_{m+1}-k_{m-1}+l}(\sigma^{(m+1)}).
\end{equation}
If we interpret $\sigma^{(m)}_a, a=1,\cdots,k_m,$ as Chern roots of $\mathcal{S}_{n-m}$ and $\eta^{(m)}_b, b=1,\cdots,k_{m-1}-k_m,$ as Chern roots of $\mathcal{S}_{n-m+1}/\mathcal{S}_{n-m}$, i.e. $e_i(\sigma^{(m)}) = c_i(\mathcal{S}_{n-m})$, $e_i(\eta^{(m)}) = c_i(\mathcal{S}_{n-m+1}/\mathcal{S}_{n-m})$, then Eq.\eqref{relQC} becomes\footnote{$\mathcal{S}_0$ is the trivial vector bundle of rank zero, so $c({\mathcal{S}_0}) = 1$. $\mathcal{S}_n$ is the trivial vector bundle of rank $N$, but since it carries the $T$-action, we have $c(\mathcal{S}_n) = \sum_{i=0}^N e_i(t_1,\cdots,t_N)$.}
\begin{equation}\label{whitneyQC}
c(\mathcal{S}_i) \cdot c(\mathcal{S}_{i+1}/\mathcal{S}_i) = c(\mathcal{S}_{i+1}) + (-1)^{k_{n-i-1}-k_{n-i}} q_{n-i} c(\mathcal{S}_{i-1}),
\end{equation}
for $i=1,2,\cdots,n-1$,
which are the quantum Whitney relations of the equivariant quantum cohomology ring $QH^*_T(\mathrm{Fl}(k_{n-1},\cdots,k_2,k_1;N))$ and are equivalent to the ring relations proved in \cite{AS95, K95}.

When $\beta \neq 0$, let $X^{(m)}_a = 1 + \beta \sigma^{(m)}_a$, then Eq.\eqref{BAE1}-\eqref{BAE3} can be written as
\begin{equation}\label{BAEQK}
\begin{aligned}
&\prod_{\alpha=1}^{k_{m-1}}\left( X^{(m)}_a - X^{(m-1)}_\alpha \right) \prod_{l=1}^{k_m}\left( X^{(m)}_l \right)\\ 
= &(-1)^{k_m-1} q_m \beta^{k_{m-1}-k_{m+1}} \prod_{s=1}^{k_{m-1}}\left( X^{(m-1)}_s \right)\prod_{\gamma=1}^{k_{m+1}} \left( X^{(m+1)}_\gamma -X^{(m)}_a \right) \left( X^{(m)}_a \right)^{k_m-k_{m+1}}
\end{aligned}
\end{equation}
for $a=1,\cdots,k_m, m=1,2,\cdots,n-1$, where $X^{(0)}_i = 1+ \beta t_i, k_0=N, k_n = 0$. Eq.\eqref{BAEQK} implies that $X^{(m)}_a, a=1,\cdots,k_m,$ are roots of the polynomial $\sum_{l=0}^{k_{m-1}} a_l x^{l}$, where
\[
\begin{aligned}
a_{k_{m-1}-l} = &(-1)^l e_l(X^{(m-1)}) e_{k_m}(X^{(m)})\\&+ (-1)^{k_{m-1}-k_{m+1}+l} q_m \beta^{k_{m-1}-k_{m+1}} e_{k_{m-1}}(X^{(m-1)}) e_{l-k_{m-1}+k_m}(X^{(m+1)}).
\end{aligned}
\]
Assume the roots of $\sum_{l=0}^{k_{m-1}} a_l x^{l}$ are $X^{(m)}_a, a=1,\cdots,k_m,$ and $Y^{(m)}_b, b=1,\cdots,k_{m-1}-k_m$, then Vieta's formula yields
\begin{equation}\label{relQK}
\begin{aligned}
&e_{k_m}(X^{(m)}) \sum_{i+j=l} e_i(X^{(m)}) e_j(Y^{(m)})\\ 
= &e_{k_m}(X^{(m)}) e_l(X^{(m-1)}) + q_m (-\beta)^{k_{m-1}-k_{m+1}} e_{k_{m-1}}(X^{(m-1)}) e_{l-k_{m-1}+k_m}(X^{(m+1)}).
\end{aligned}
\end{equation}
Taking $l=k_{m-1}$ in Eq.\eqref{relQK}, we get
\begin{equation}\label{subQK}
e_{k_{m-1}}(X^{(m-1)}) = e_{k_m}(X^{(m)}) e_{k_{m-1}-k_m}(Y^{(m)}).
\end{equation}
Eq.\eqref{relQK} and \eqref{subQK} together give us the relation
\begin{equation}\label{relQKc}
\sum_{i+j=l} e_i(X^{(m)}) e_j(Y^{(m)}) =  e_l(X^{(m-1)}) + q_m (-\beta)^{k_{m-1}-k_{m+1}}  e_{l-k_{m-1}+k_m}(X^{(m+1)}) e_{k_{m-1}-k_m}(Y^{(m)}).
\end{equation}
In the case $\beta = -1$, \eqref{relQKc} was interpreted as the quantum Whitney relation of the equivariant quantum K-theory ring of the flag variety in \cite{GMSXZZ}. The idea is to identify $e_i(X^{(m)})$ with $\wedge^i \mathcal{S}_{n-m}$ and make the following identification
\[
e_l(Y^{(m)}) = \left\{ \begin{array}{ll}
\wedge^l(\mathcal{S}_{n-m+1}/\mathcal{S}_{n-m}), & l< k_{m-1}-k_m,\\
\frac{1}{1-q_m} \det (\mathcal{S}_{n-m+1}/\mathcal{S}_{n-m}), & l= k_{m-1}-k_m.
\end{array} \right.
\]
Then Eq.\eqref{relQKc} for $\beta = -1$ can be written in terms of the $\lambda_y$ classes as\footnote{For a vector bundle $\mathcal{E}$, the $\lambda_y$ class is defined as $\lambda_y(\mathcal{E}) = \sum_{i=0}^{\mathrm{rank}(\mathcal{E})}y^i \wedge^i \mathcal{E}$ in the Grothendieck group.}
\begin{equation}\label{whitneyQK}
\lambda_y(\mathcal{S}_i) * \lambda_y(\mathcal{S}_{i+1}/\mathcal{S}_i) = \lambda_y(\mathcal{S}_{i+1}) + \frac{q_{n-i}}{1-q_{n-i}} y^{k_{n-i-1}-k_{n-i}}\left( \lambda_y(\mathcal{S}_{i-1}) -\lambda_y(\mathcal{S}_i) \right) * \det(\mathcal{S}_{i+1}/\mathcal{S}_i)
\end{equation}
for $i=1,2,\cdots,n-1$,
which are the quantum Whitney relations of the equivariant quantum K-theory ring $QK_T(\mathrm{Fl}(k_{n-1},\cdots,k_2,k_1;N))$ proposed in \cite{GMSXZZ}.

\begin{remark} 
Notice that in \cite{GMSXZZ}, the analogue of the Bethe ansatz equations were derived from the 3d $\mathcal{N}=2$ gauged linear sigma model (GLSM) compactified on a circle. The GLSMs for the flag varieties are quiver gauge theories, whose quiver diagrams have the following form in our convention \cite{DS}
\[
\xymatrix@C=3pc{
*++[o][F]{k_{n-1}} & *++[o][F]{k_{n-2}} \ar[l] & \cdots \ar[l] & *++[o][F]{~k_{1~}} \ar[l] & *++[F]{N} \ar[l]}
\]
where circles represent $U(k_i)$ gauge groups, square represents $U(N)$ global symmetry and arrows represent chiral fields in bifundamental representations.
The Higgs branch of the GLSM is a nonlinear sigma model with target space being the flag variety $\mathrm{Fl}(k_{n-1},\cdots,k_2,k_1;N)$, the equations for the vacuum states on the Coulomb branch are exactly the Bethe ansatz equations we obtained above in the case of $\beta = -1$. Though Whitney type ring relations of the quantum K-theory ring of general partial flag varieties have not been rigorously established, it was shown in \cite{GMSZ, Gu:2023fpw} that the quantum Whitney ring relations obtained from the GLSMs are correct in the special cases such as Grassmannians and incidence varieties.
\end{remark}

\begin{remark}
Theorem \ref{BAE} shows that the Bethe ansatz equations of the quantum integrable model defined by the R-matrix \eqref{Rmatrix} and monodromy matrix \eqref{monodromy} reduce to the vacuum equations on the Coulomb branch of the 2d and 3d GLSM for flag varieties when $\beta = 0$ and $\beta=-1$ respectively. Therefore, we have established a Bethe/Gauge correspondence in the sense of \cite{NS1, NS2}.
\end{remark}

\section{Relationship with the double $\beta$-Grothendieck polynomials}\label{sec:Groth}

In this section, we consider the case of complete flag variety, i.e. $n=N$ and $k_{N-i} = i$ for $i = 1,\cdots,N-1$. We will show that the Bethe ansatz states generate the double $\beta$-Grothendieck polynomials (Theorem \ref{thm1} and \ref{thm2}). The double $\beta$-Grothendieck polynomials $\mathcal{G}^{(\beta)}_w(\bm{x};\bm{y})$ for $w \in S_N$ and $\bm{x} = (x_1,\cdots,x_N), \bm{y} = (y_1,\cdots,y_N)$ are defined in the appendix, see Eq.\eqref{didi}-\eqref{otherGroth}.

In addition to Eq.\eqref{ominus}, let us also define
\begin{equation}\label{ot}
\ominus t := \frac{-t}{1+\beta t} 
\end{equation}
and $\ominus \bm{t} := (\ominus t_1,\cdots, \ominus t_N)$ for $\bm{t} = (t_1,\cdots,t_N)$. We use $\omega_N$ to denote the permutation of $N$ elements with maximal length, and $l(w)$ denotes the length of the permutation $w$.

From Eq.\eqref{Rijkl}, \eqref{monodromy} and the fact that $B_k(u) = [{T^{(n)}_a(u)}]_{0,k}$ (Eq.\eqref{Tab}), it is easy to compute
\begin{equation}\label{exdR}
\begin{aligned}
&B_k(u) \ket{n_1,n_2,\cdots,n_N} = \\
&= \sum_{m_1=0}^{N-1} \cdots \sum_{m_{N-1}=0}^{N-1}{R_{aN}(u,t_N)}_{0,m_{N-1}} 
\cdots {R_{a2}(u,t_2)}_{m_2,m_1} {R_{a1}(u,t_1)}_{m_1,k} \ket{n_1,n_2,\cdots,n_N} \\
&= \sum_{m_1=0}^{N-1} \cdots \sum_{m_{N-1}=0}^{N-1} \bigotimes_{i=1}^N\left[ \delta_{m_i,n_i}(1+\delta_{m_i>m_{i-1}}\beta(u\ominus t_i) ) \ket{m_{i-1}}+ \delta_{m_i > n_i} \delta_{m_i,m_{i-1}} (u\ominus t_i)  \ket{n_i} \right],
\end{aligned}
\end{equation}
where $m_0 = k, m_N=0$. The state \eqref{exdR} vanishes unless $m_i \geq n_i$ for all $1 \leq i \leq N$, therefore it vanishes when $n_N \neq 0$. In addition, $m_i = m_{i-1}$ if $m_i > n_i$.

\begin{figure}
\centering
\includegraphics[width=\linewidth]{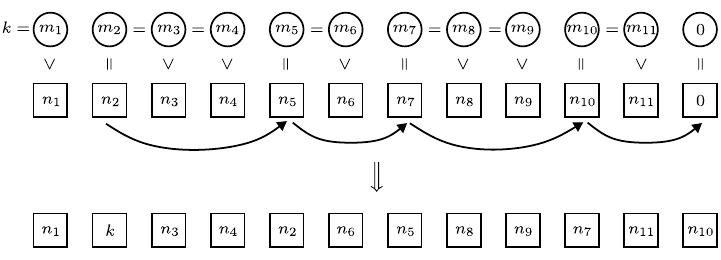}
\caption{The left action of $B_k(u)$ on $\ket{n_1,n_2,\cdots,n_N}$ for $N=12$ and a specific choice of $(m_1,\cdots,m_{N-1})$ in the expansion \eqref{exdR}.}
\label{fig:Laction}
\end{figure}
The left action of $B_k(u)$ on $\ket{n_1,n_2,\cdots,n_N}$ can then be depicted as Fig.\ref{fig:Laction} (for a specific $(m_1,\cdots,m_{N-1})$). We see for any $\bm{m}=(m_1,\cdots,m_{N})$ in the expansion \eqref{exdR}, the effect of the operator $B_k(u)$ is to annihilate a $\ket{0}$ state, create a $\ket{k}$ state, and permute $n_i$ among the sites according to $\bm{m}$. If we write the expansion \eqref{exdR} as $B_k(u) \ket{\bm{n}} = \sum_{\bm{m}} C_{\bm{m}} \ket{\psi_{\bm{m}}}$, then for any $n_i < m_i$, there is a factor $(u \ominus t_i)$ in $C_{\bm{m}}$, and for any $m_i > m_{i-1}$, there is a factor $(1+\beta(u\ominus t_i))$ in $C_{\bm{m}}$. 

For $w \in S_N$, let us define
\begin{equation}\label{ketstate}
\ket{w} := \ket{w(012\cdots N-1)},
\end{equation}
then, from \eqref{exdR} and the discussion above, we have the following
\begin{lem}\label{l0}
For any $\pi \in S_N$, we have the expansion
\[
B_{\pi(1)}(u_1) B_{\pi(2)}(u_2) \cdots B_{\pi(N-1)}(u_{N-1}) \ket{\Omega^{(0)}} = \sum_{w \in S_{N}} C^{\pi}_w (u_1,\cdots,u_{N-1}) \ket{w},
\]
where $C^{\pi}_w \in \mathbb{C}[u_1,\cdots,u_{N-1},t_1,\cdots,t_N]$ and $\ket{\Omega^{(0)}}$ is defined by Eq.\eqref{ground}.
\end{lem}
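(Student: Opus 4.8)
The plan is to induct on the number of $B$-operators already applied to $\ket{\Omega^{(0)}}$, reading off from the explicit formula \eqref{exdR} two facts about the left action of a single $B_k(u)$ on the natural basis. First, $B_k(u)$ sends a natural basis vector to a \emph{finite} linear combination of natural basis vectors whose coefficients are polynomials in $u$ and $\ominus t_1,\dots,\ominus t_N$ over $\mathbb{C}[\beta]$: the sum in \eqref{exdR} runs over the finite set $(m_1,\dots,m_{N-1})\in\{0,\dots,N-1\}^{N-1}$, and every coefficient occurring there is a product of the building blocks $1$, $u\ominus t_i$ and $1+\beta(u\ominus t_i)$, each of which is polynomial in $u,\ominus t_i,\beta$ since $u\ominus t_i=u+(\ominus t_i)+\beta u(\ominus t_i)$ and $1+\beta(u\ominus t_i)=(1+\beta u)(1+\beta(\ominus t_i))$. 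Composing the $N-1$ operators then shows that $B_{\pi(1)}(u_1)\cdots B_{\pi(N-1)}(u_{N-1})\ket{\Omega^{(0)}}$ is a finite combination of natural basis states with coefficients in the asserted ring, so it remains only to control \emph{which} basis states occur.

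The key second fact, again extracted from \eqref{exdR}, is: (i) $B_k(u)$ annihilates every natural basis state with $n_N\neq 0$ (already observed below \eqref{exdR}); and (ii) if $n_N=0$, every natural basis state appearing in $B_k(u)\ket{n_1,\dots,n_N}$ has entry-multiset $\big(\{n_1,\dots,n_N\}\setminus\{0\}\big)\cup\{k\}$. To see (ii), fix a term indexed by $(m_1,\dots,m_{N-1})$ with $m_0=k$, $m_N=0$; calling a site $i$ a \emph{deposit} site when $m_i=n_i$ and listing the deposit sites as $d_1<\cdots<d_r$, the formula forces every non-deposit site to keep its old entry, while site $d_1$ receives the new entry $k$ and site $d_l$ receives $n_{d_{l-1}}$ for $2\le l\le r$, and moreover $n_{d_r}=0$ (forced by $m_N=0$). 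Comparing multisets, one loses exactly one $0$ and gains exactly one $k$. This is precisely the ``annihilate a $\ket{0}$, create a $\ket{k}$, permute the rest'' effect recorded after \eqref{exdR} and depicted in Fig.~\ref{fig:Laction}.

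With (i) and (ii) in hand, set $\ket{\Psi_j}:=B_{\pi(N-j)}(u_{N-j})\cdots B_{\pi(N-1)}(u_{N-1})\ket{\Omega^{(0)}}$, so that $\ket{\Psi_0}=\ket{0,\dots,0}$ and $\ket{\Psi_{N-1}}$ is the state of the Lemma. I would prove by induction on $j$ that every natural basis state occurring in $\ket{\Psi_j}$ has entry-multiset $\{0^{\,N-j}\}\cup\{\pi(N-1),\dots,\pi(N-j)\}$: the components of $\ket{\Psi_j}$ with last entry nonzero are killed by the next operator by (i), and on the remaining components — which have last entry $0$ — claim (ii) replaces one $0$ by $\pi(N-j-1)$ upon applying $B_{\pi(N-j-1)}(u_{N-j-1})$. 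At $j=N-1$ the multiset is $\{0\}\cup\{\pi(1),\dots,\pi(N-1)\}=\{0,1,\dots,N-1\}$ (the labels $\pi(1),\dots,\pi(N-1)$ being a rearrangement of $1,\dots,N-1$, these being the only available $B$-indices), whose elements are pairwise distinct; hence each surviving natural basis state has all entries distinct and so equals $\ket{w}$ for a unique $w\in S_N$, with $\ket{w}=\ket{w(0\,1\,\cdots\,N-1)}$. Collecting terms yields $B_{\pi(1)}(u_1)\cdots B_{\pi(N-1)}(u_{N-1})\ket{\Omega^{(0)}}=\sum_{w\in S_N}C^\pi_w(u_1,\dots,u_{N-1})\ket{w}$ with $C^\pi_w$ in the claimed polynomial ring.

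The only genuine obstacle is claim (ii): one has to resist thinking of $B_k(u)$ as simply ``inserting a $k$'' and instead follow carefully how the summation indices $m_i$ route the old entries down the chain, so as to pinpoint which site ends up carrying the $0$ that disappears. Granted (ii), the polynomiality and the multiset induction are both routine.
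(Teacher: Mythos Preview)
Your argument is correct and is exactly the approach the paper indicates: the paper merely says the lemma follows ``from \eqref{exdR} and the discussion above,'' that discussion being precisely your claims (i) and (ii) (the ``annihilate a $\ket{0}$, create a $\ket{k}$, permute the rest'' effect of Fig.~\ref{fig:Laction}). You have supplied the careful reading of \eqref{exdR} and the multiset induction that the paper leaves implicit.
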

Similarly, the right action of $B_k(u)$ on the dual state $\bra{n_1,n_2,\cdots,n_N}$ is
\begin{equation}\label{exdL}
\begin{aligned}
&\bra{n_1,n_2,\cdots,n_N} B_k(u)\\ 
&= \sum_{m_2=0}^{N-1} \cdots \sum_{m_N=0}^{N-1} \bra{n_1,n_2,\cdots,n_N} 
{R_{aN}(u,t_N)}_{0,m_N} 
\cdots {R_{a2}(u,t_2)}_{m_3,m_2} {R_{a1}(u,t_1)}_{m_2,k}\\
&=\sum_{m_2=0}^{N-1} \cdots \sum_{m_N=0}^{N-1} \bigotimes_{i=1}^N\left[ \delta_{m_i,n_i}(1+\delta_{m_i<m_{i+1}}\beta(u\ominus t_i) ) \bra{m_{i+1}} + \delta_{m_i>n_i} \delta_{m_i,m_{i+1}} (u\ominus t_i)   \bra{n_i} \right],
\end{aligned} 
\end{equation}
where $m_1=k, m_{N+1} = 0$. The state \eqref{exdL} vanishes unless (i) $m_i \geq n_i$ for all $1 \leq i \leq N$; (ii) there is some $n_a$ such that $n_a = k = m_a$ and $m_j > n_j$ for all $j<a$. Moreover, $m_i = m_{i+1}$ if $m_i > n_i$. 

The right action of $B_k(u)$ on $\bra{n_1,n_2,\cdots,n_N}$ can therefore be depicted as Fig.\ref{fig:Raction} (for a specific choice of $(m_2,\cdots,m_N)$). We see for any $\bm{m}$ in the expansion \eqref{exdL}, the effect of the operator $B_k(u)$ is to annihilate a $\bra{k}$ state, create a $\bra{0}$ state on the $N$th site, and permute $n_i$ among the sites according to $\bm{m}$. If we write the expansion \eqref{exdL} as $\bra{\bm{n}}B_k(u) = \sum_{\bm{m}} \bra{\psi_{\bm{m}}} C_{\bm{m}}$, then for any $n_i < m_i$, there is a factor $(u \ominus t_i)$ in $C_{\bm{m}}$, and there is a factor $(1+\beta(u\ominus t_i))$ in $C_{\bm{m}}$ for any $m_i < m_{i+1}$. 
\begin{figure}
\centering
\includegraphics[width=\linewidth]{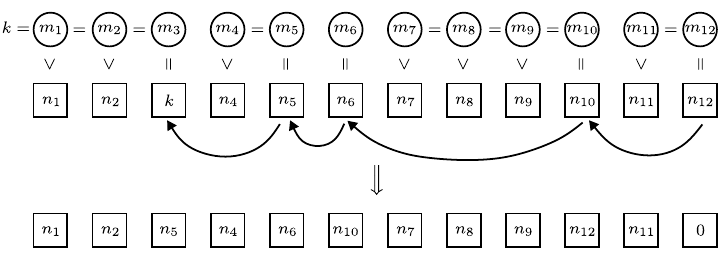}
\caption{The right action of $B_k(u)$ on $\bra{n_1,n_2,\cdots,n_N}$ for $N=12$ and a specific choice of $(m_2,\cdots,m_N)$ in the expansion \eqref{exdL}.}
\label{fig:Raction}
\end{figure}

For $w \in S_N$, define
\begin{equation}
\bra{w} := \bra{w(012\cdots N-1)},
\end{equation}
then, from the expansion Eq.\eqref{exdL}, one can prove
\begin{lem}\label{lemma1}
If $(n_1,n_2,\cdots,n_{N-1})$ satisfies: (i) $0 \leq n_i < k$; (ii) $n_{l+1} \geq n_{l+2} \geq \cdots \geq n_{N-1}$ for some $l \geq 0$, then
\[
\bra{n_{\pi(1)}, n_{\pi(2)},\cdots, n_{\pi(l)}, k, n_{l+1},\cdots, n_{N-1} } B_k(u) = \prod_{i=1}^l(u \ominus t_i) \bra{n_{\pi(1)},\cdots, n_{\pi(l)}, n_{l+1},\cdots,n_{N-1},0}
\]
for any $\pi \in S_l$.
\end{lem}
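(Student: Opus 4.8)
The plan is to show that, for the bra on the left-hand side, the expansion \eqref{exdL} collapses to a single term. Write $\bm{n}' = (n_{\pi(1)},\dots,n_{\pi(l)},k,n_{l+1},\dots,n_{N-1})$ for this bra, so its $(l+1)$-st entry equals $k$ while, by hypotheses (i) and (ii), every other entry $n'_i$ is strictly less than $k$. Recall from the discussion following \eqref{exdL} that, with $m_1 = k$ and $m_{N+1}=0$, a tuple $\bm{m} = (m_1,\dots,m_N)$ contributes only if $m_i \ge n'_i$ for all $i$, if there is an index $a$ with $n'_a = k = m_a$ and $m_j > n'_j$ for all $j < a$, and if $m_i = m_{i+1}$ whenever $m_i > n'_i$; on each site one is then in the \emph{diagonal} branch $\delta_{m_i,n'_i}\bigl(1+\delta_{m_i<m_{i+1}}\beta(u\ominus t_i)\bigr)\bra{m_{i+1}}$ or the \emph{transport} branch $\delta_{m_i>n'_i}\delta_{m_i,m_{i+1}}(u\ominus t_i)\bra{n'_i}$.

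First I would fix the head $m_1,\dots,m_{l+1}$. Since $k$ occurs in $\bm{n}'$ only at position $l+1$, the index $a$ must be $l+1$; then $m_1 = k$ together with $m_j > n'_j$ (each $n'_j < k$) and $m_j = m_{j+1}$ for $j \le l$ forces $m_1 = \dots = m_{l+1} = k$. Consequently sites $1,\dots,l$ all sit in the transport branch: each contributes a factor $u\ominus t_i$ and leaves the entry $n'_i = n_{\pi(i)}$ unchanged, with no $\beta$-factor since $m_i = m_{i+1}$ there.

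Next I would run a descending induction on $i = N, N-1, \dots, l+2$ to show that on the tail only the diagonal branch survives, i.e. $m_{l+1+j} = n_{l+j}$ for $j = 1,\dots,N-1-l$. At $i = N$ the transport branch would demand $m_N = m_{N+1} = 0$ and $m_N > n'_N \ge 0$, which is impossible, so $m_N = n'_N = n_{N-1}$ and the $N$-th output slot receives $m_{N+1} = 0$. For the inductive step, the transport branch at site $l+1+j$ would give $m_{l+1+j} = m_{l+2+j} = n'_{l+2+j} = n_{l+1+j}$ while requiring $m_{l+1+j} > n'_{l+1+j} = n_{l+j}$, i.e. $n_{l+1+j} > n_{l+j}$, contradicting the weak monotonicity in hypothesis (ii); hence the diagonal branch, which puts $m_{l+2+j} = n_{l+1+j}$ into the output slot and (since $n_{l+j} \ge n_{l+1+j}$) carries no $\beta$-factor. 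At site $l+1$ the diagonal branch with $m_{l+1} = k = n'_{l+1}$ outputs $m_{l+2} = n_{l+1}$ and contributes no $\beta$-factor because $m_{l+1} = k > n_{l+1} = m_{l+2}$ (here hypothesis (i) is used). Assembling all sites, the unique surviving term has coefficient $\prod_{i=1}^{l}(u\ominus t_i)$ and output bra $\bra{n_{\pi(1)},\dots,n_{\pi(l)},n_{l+1},\dots,n_{N-1},0}$, which is the assertion; the degenerate cases $l=0$ (empty head) and $l = N-1$ (empty tail) are covered by the same argument.

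The only delicate point is the descending induction on the tail: hypothesis (i) is precisely what makes the location of $k$ unique (forcing $a = l+1$) and kills the $\beta$-factor at site $l+1$, while hypothesis (ii) is exactly what rules out every transport branch on sites $l+2,\dots,N$. Everything else is bookkeeping of the two branches in \eqref{exdL}, with no genuine computation involved.
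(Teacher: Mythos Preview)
Your proof is correct and follows essentially the same approach as the paper: both identify the unique tuple $\bm{m}$ surviving in the expansion \eqref{exdL} and read off the resulting factor. The paper simply asserts that $m_i = k$ for $2 \le i \le l+1$ and $m_i = n'_i$ for $l+2 \le i \le N$, whereas you supply the head/tail argument (with the descending induction on the tail) that justifies this assertion; your treatment is more detailed but not different in substance.
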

\begin{proof}
Given the constraints on $n_i$, the only possible $(m_2,m_3,\cdots,m_N)$ in the expansion \eqref{exdL} is
\[
m_i = \left\{ \begin{array}{ll}
k, & 2 \leq i \leq l+1,\\
n_i, & l+2 \leq i \leq N.
\end{array}\right.
\]
Since $n_{\pi(i)}< m_i$ for $1 \leq i \leq l$, we have the factor $\prod_{i=1}^l(u \ominus t_i)$ from \eqref{exdL}.
\end{proof}
Let us define
\begin{equation}\label{rho}
\rho_k := s_1 s_2 \cdots s_k \in S_N,
\end{equation}
where $s_i$ is the simple transposition of the $i$th and $(i+1)$th elements, and
$\rho_0$ is defined to be the identity permutation. Let $\partial^{\beta}_i$ be the $\beta$-divided difference operator defined by Eq.\eqref{didi}:
\[
\partial^{\beta}_i f(x_1,\cdots,x_N) := \frac{(1+\beta x_{i+1})f(x_1,\cdots,x_N)-(1+\beta x_i)f(x_1,\cdots,x_{i+1},x_i,\cdots,x_N)}{x_i-x_{i+1}}.
\]
Generally, for $w = s_{i_1} s_{i_2} \cdots s_{i_{l(w)}}$, we have the definition\footnote{In the following, if a polynomial $f(\bm{x};\bm{y})$ depends on two sets of variables $\bm{x}$ and $\bm{y}$ as in the case of double $\beta$-Grothendieck polynomials, $\partial^{\beta}_w$ in $\partial^{\beta}_w f(\bm{x};\bm{y})$ is understood to act on the first set of variables, see Eq.\eqref{didi}. The same convention applies to the operator $\bar{\partial}^{\beta}_w$ defined by Eq.\eqref{dbars}.}
\begin{equation}\label{ds}
\partial^{\beta}_w := \partial^{\beta}_{i_1} \partial^{\beta}_{i_2} \cdots \partial^{\beta}_{i_{l(w)}}.
\end{equation}
Now we can use Lemma \ref{lemma1} to prove the following:
\begin{lem}\label{l2}
\[
\bra{1,2,\cdots,k,0,k+1,\cdots,N-1} B_{N-1}(\sigma_1) \cdots B_1(\sigma_{N-1})
= \mathcal{G}^{(\beta)}_{\rho^{-1}_k \omega_N }(\bm{\sigma};\ominus \bm{t}) \bra{\Omega^{(0)}}.
\]
\end{lem}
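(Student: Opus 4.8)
The plan is to compute the left-hand side directly by iterating Lemma~\ref{lemma1}, peeling the entries of the bra $\bra{1,2,\dots,k,0,k+1,\dots,N-1}$ off one at a time from the largest down to the smallest, and then to recognize the resulting product of linear factors as the double $\beta$-Grothendieck polynomial of the dominant permutation $\rho_k^{-1}\omega_N$.

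First I would observe that this bra — and every bra produced along the way — has exactly the ``staircase, then zeros'' shape required by Lemma~\ref{lemma1} relative to its largest nonzero entry $v$: every entry to the left of $v$ is $<v$, and every entry to the right of $v$ is $0$ (hence weakly decreasing). \emph{Phase 1:} apply $B_{N-1}(\sigma_1),B_{N-2}(\sigma_2),\dots,B_{k+1}(\sigma_{N-k-1})$ in this order. Since the earlier (larger) removals act only on positions to the right of the entry currently being removed, at the $m$-th step the value $N-m$ still sits in position $N-m+1$, so Lemma~\ref{lemma1} applies with $l=N-m$ and produces the factor $\prod_{i=1}^{N-m}(\sigma_m\ominus t_i)$, leaving the bra $\bra{1,\dots,k,0,k+1,\dots,N-m-1,0,\dots,0}$; after Phase~1 the bra is $\bra{1,2,\dots,k,0,\dots,0}$. \emph{Phase 2:} apply $B_{k}(\sigma_{N-k}),B_{k-1}(\sigma_{N-k+1}),\dots,B_{1}(\sigma_{N-1})$; now the value $v$ sits in position $v$, so Lemma~\ref{lemma1} contributes $\prod_{i=1}^{v-1}(\sigma_{N-v}\ominus t_i)$ at each step, and the process terminates at $\bra{0,\dots,0}=\bra{\Omega^{(0)}}$. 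Collecting all the factors and reindexing $a=N-v$ in the Phase~2 product yields
\[
\bra{1,2,\dots,k,0,k+1,\dots,N-1}\,B_{N-1}(\sigma_1)\cdots B_1(\sigma_{N-1})=\left(\prod_{a=1}^{N-k-1}\prod_{b=1}^{N-a}(\sigma_a\ominus t_b)\right)\left(\prod_{a=N-k}^{N-1}\prod_{b=1}^{N-a-1}(\sigma_a\ominus t_b)\right)\bra{\Omega^{(0)}}.
\]

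It then remains to match this product with $\mathcal{G}^{(\beta)}_{\rho_k^{-1}\omega_N}(\bm{\sigma};\ominus\bm{t})$. I would check that $\rho_k^{-1}\omega_N$, which in one-line notation is $(N,N-1,\dots,k+2,\,k,k-1,\dots,1,\,k+1)$, is a dominant permutation whose code is the partition $\lambda$ with $\lambda_a=N-a$ for $a\le N-k-1$ and $\lambda_a=N-a-1$ for $a\ge N-k$ (and $\lambda_a=0$ otherwise); this is precisely the shape of the product above. The double $\beta$-Grothendieck polynomial of a dominant permutation is the product of $(x_a\oplus y_b)$ over its Rothe diagram (Eq.~\eqref{otherGroth} and the surrounding discussion), i.e.\ $\mathcal{G}^{(\beta)}_{\rho_k^{-1}\omega_N}(\bm{x};\bm{y})=\prod_a\prod_{b=1}^{\lambda_a}(x_a\oplus y_b)$, and substituting $y_b=\ominus t_b$ together with the identity $x\oplus(\ominus t)=x\ominus t$ (immediate from \eqref{ominus}) turns this into exactly the product obtained from the peeling argument.

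I expect the only genuinely delicate step to be the bookkeeping in the peeling argument: verifying at each of the $N-1$ applications that the hypotheses of Lemma~\ref{lemma1} hold — which they do precisely because the staircase-then-zeros shape is preserved — and keeping straight which spectral parameter $\sigma_j$ is attached to which $B_v$ and where $v$ currently sits. Everything else is either definitional or a one-line symmetric-function computation. If one prefers to avoid invoking the dominant-permutation product formula, the matching step can instead be carried out via divided differences: $\rho_k^{-1}\omega_N=\omega_N\,s_{N-k}s_{N-k+1}\cdots s_{N-1}$ is a reduced factorization, so $\mathcal{G}^{(\beta)}_{\rho_k^{-1}\omega_N}=\partial^{\beta}_{N-1}\partial^{\beta}_{N-2}\cdots\partial^{\beta}_{N-k}\,\mathcal{G}^{(\beta)}_{\omega_N}$, and applying these operators to $\mathcal{G}^{(\beta)}_{\omega_N}(\bm{x};\bm{y})=\prod_{i+j\le N}(x_i\oplus y_j)$ reproduces the same product.
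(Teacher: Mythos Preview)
Your proposal is correct and follows essentially the same route as the paper: peel off the entries of the bra using Lemma~\ref{lemma1} to obtain the product $\prod_{a=1}^{N-k-1}\prod_{b=1}^{N-a}(\sigma_a\ominus t_b)\prod_{a=N-k}^{N-1}\prod_{b=1}^{N-a-1}(\sigma_a\ominus t_b)$, then identify this as $\mathcal{G}^{(\beta)}_{\rho_k^{-1}\omega_N}(\bm\sigma;\ominus\bm t)$. Your alternative identification via divided differences, writing $\rho_k^{-1}\omega_N=\omega_N s_{N-k}\cdots s_{N-1}$ and applying $\partial^\beta_{N-1}\cdots\partial^\beta_{N-k}$ to $\mathcal{G}^{(\beta)}_{\omega_N}$, is exactly the paper's argument; one small caveat is that your primary identification via the dominant-permutation product formula is not contained in Eq.~\eqref{otherGroth} or the surrounding text (that equation is only the recursion), so if you lead with that route you should cite it externally rather than pointing to the paper.
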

\begin{proof}
By repeated use of Lemma \ref{lemma1}, one can compute
\begin{equation}\label{l2e1}
\begin{aligned}
&\bra{1,2,\cdots,k,0,k+1,\cdots,N-1} B_{N-1}(\sigma_1) \cdots B_1(\sigma_{N-1})\\
=& \prod_{a=1}^{N-k-1}\prod_{i=1}^{N-a}(\sigma_a \ominus t_i) \prod_{b=N-k}^{N-2} \prod_{j=1}^{N-b-1} (\sigma_b \ominus t_j) \bra{\Omega^{(0)}}.
\end{aligned}
\end{equation}
Notice that $\sigma \ominus t = \frac{\sigma-t}{1+\beta t} = \sigma + \ominus t + \beta \sigma (\ominus t)$, Eq.\eqref{highestGroth} then yields
\[
\prod_{i+j \leq N} (\sigma_i \ominus t_j) = \prod_{i+j \leq N} (\sigma_i + \ominus t_j + \beta \sigma_i (\ominus t_j)) = \mathcal{G}^{(\beta)}_{\omega_N}(\bm{\sigma};\ominus \bm{t}).
\]
Therefore, from \eqref{l2e1} and the definition of the double $\beta$-Grothendieck polynomials Eq.\eqref{didi}-\eqref{otherGroth}, we obtain
\[\begin{aligned}
&\bra{1,2,\cdots,k,0,k+1,\cdots,N-1} B_{N-1}(\sigma_1) \cdots B_1(\sigma_{N-1})\\
= & \frac{\mathcal{G}^{(\beta)}_{\omega_N}(\bm{\sigma};\ominus \bm{t})}{(\sigma_{N-1}\ominus t_1)(\sigma_{N-2}\ominus t_2)\cdots (\sigma_{N-k}\ominus t_k)}\bra{\Omega^{(0)}} \\
=& \partial^\beta_{N-1} \partial^\beta_{N-2} \cdots \partial^\beta_{N-k}\mathcal{G}^{(\beta)}_{\omega_N}(\bm{\sigma};\ominus \bm{t}) \bra{\Omega^{(0)}}\\
=&\mathcal{G}^{(\beta)}_{\omega_N s_{N-k} \cdots s_{N-2} s_{N-1}}(\bm{\sigma};\ominus \bm{t})\bra{\Omega^{(0)}} \\
=&\mathcal{G}^{(\beta)}_{s_k \cdots s_2 s_1 \omega_N}(\bm{\sigma};\ominus \bm{t}) \bra{\Omega^{(0)}} = \mathcal{G}^{(\beta)}_{\rho^{-1}_k \omega_N }(\bm{\sigma};\ominus \bm{t}) \bra{\Omega^{(0)}},
\end{aligned}\]
where in the second to last step, we used $\omega_N s_{N-i} = s_i \omega_N$.
\end{proof}

In order to prove Theorem \ref{thm1}, we need the following lemmas.

\begin{lem}\label{l3pre}
For the states
\begin{equation}\label{psi1}
\bra{\psi} = \bra{n_1,\cdots,n_{a-1},i-1,n_{a+1},\cdots,n_{b-1},i,n_{b+1},\cdots,n_N}
\end{equation}
and
\begin{equation}\label{psi2}
\bra{s_i \psi} = \bra{n_1,\cdots,n_{a-1},i,n_{a+1},\cdots,n_{b-1},i-1,n_{b+1},\cdots,n_N},
\end{equation}
where $n_s < i-1$ for $s \neq a,b$, the following identity holds:
\begin{equation}\label{l3i}
(\bra{s_i\psi} + \beta \bra{\psi}) B_i(x) B_{i-1}(y) = \bra{\psi} B_{i-1}(x) B_i(y).
\end{equation}
\end{lem}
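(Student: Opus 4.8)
The plan is to prove \eqref{l3i} by a direct computation with the explicit right-action formula \eqref{exdL}, exploiting the hypothesis $n_s<i-1$ for $s\neq a,b$. Under that hypothesis, in every auxiliary chain $\bm{m}=(m_1,\dots,m_N)$ contributing in \eqref{exdL} to $\bra{\psi}B_{i-1}(x)$ or to $\bra{s_i\psi}B_i(x)$, each $m_j$ is forced to equal one of the entries $n_1,\dots,n_N$ of the bra-state, and the closing condition $m_{N+1}=0$ forbids chain values above $i$; this keeps the set of admissible chains finite and manageable.

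First I would compute $\bra{\psi}B_{i-1}(x)$ and $\bra{s_i\psi}B_i(x)$ explicitly. For $\bra{\psi}B_{i-1}(x)$, the only $\bra{i-1}$ available for annihilation sits at site $a$, so the chain is pinned to level $i-1$ on sites $1,\dots,a$ (with prefix weight $\prod_{j<a}(x\ominus t_j)$); since site $b$ carries $i$ while the sites strictly between $a$ and $b$ carry values below $i-1$, the chain must leave level $i-1$ immediately after site $a$ and reach level $i$ by site $b$. This happens either by an immediate jump to $i$ at site $a+1$, which contributes a factor $1+\beta(x\ominus t_a)$ and deposits the value $i$ on site $a$, or by first dropping to the local value $n_{a+1}$ (and possibly $n_{a+2}$, and so on), carrying a short block of small entries one step to the left before jumping up. On the sites past $b$, the chain acts exactly as $B_{i-1}(x)$ would on the trailing block of small entries with a $0$ appended, and this trailing piece turns out to be the same on both sides of \eqref{l3i}, so it can be carried along as an inert spectator. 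The analogous analysis gives $\bra{s_i\psi}B_i(x)$ with the levels $i-1$ and $i$ interchanged. One then applies the second operator---$B_i(y)$ on the right of $\bra{\psi}B_{i-1}(x)$, and $B_{i-1}(y)$ on the right of both $\bra{s_i\psi}B_i(x)$ and $\bra{\psi}B_{i-1}(x)$---and expands once more via \eqref{exdL}.

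Finally I would check that the two sides of \eqref{l3i} produce exactly the same family of basis bra-states, so that the lemma reduces to matching the scalar coefficients term by term; this matching uses only the identities $(1+\beta(x\ominus y))(1+\beta(y\ominus x))=1$, $(x\ominus y)(1+\beta(y\ominus x))=-(y\ominus x)$, and $\sigma\ominus t=\sigma+(\ominus t)+\beta\sigma(\ominus t)$ (the last already used in the proof of Lemma~\ref{l2}). The main obstacle is precisely this bookkeeping: $B_{i-1}(x)B_i(y)$ and $B_i(x)B_{i-1}(y)$ route the level-$i$ and level-$(i-1)$ parts of the chain through the sites $a,\dots,b$ in opposite orders, so one must verify with care that after the two-step action the two sides agree on which small entries in the range $(a,b)$ are shifted, on where the two annihilated entries and the two freshly created $0$'s sit, and on the accumulated weights. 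The one genuinely asymmetric contribution is a term proportional to $\beta$, and this is exactly what the $\beta\bra{\psi}$ summand on the left-hand side of \eqref{l3i} supplies.
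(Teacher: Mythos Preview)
Your plan---expand both sides via \eqref{exdL} and match---is the same strategy the paper uses, and your identification of the $\beta\bra{\psi}$ term as supplying the single asymmetric contribution is exactly right. The paper, however, organizes the bookkeeping more efficiently than a full two-step expansion. Rather than expanding both $B$-operators, it expands only the first and sets up a bijection $\phi:M_{i-1}\to\bar M_i$ (swap levels $i-1\leftrightarrow i$ in each chain $\bm m$) between the index sets for $\bra{\psi}B_{i-1}(x)$ and $\bra{s_i\psi}B_i(x)$, together with an injection $\tau:M_i\to\bar M_i$ handling $\bra{\psi}B_i(x)$. The coefficient comparison then reads $C^{(i-1)}_{\phi^{-1}(\bm m)}=(1+\beta(x\ominus t_a))\,\bar C^{(i)}_{\bm m}$ on $\mathrm{im}\,\tau$ and $C^{(i-1)}_{\phi^{-1}(\bm m)}=\bar C^{(i)}_{\bm m}$ off it, while $C^{(i)}_{\tau^{-1}(\bm m)}=(x\ominus t_a)\,\bar C^{(i)}_{\bm m}$, so the $1+\beta(x\ominus t_a)$ splits exactly as $1$ plus $\beta$ times the $\bra{\psi}B_i(x)$ contribution. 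The crucial shortcut---which your plan would rediscover the hard way by a second full expansion---is the observation $\bra{\psi^{(i-1)}_{\phi^{-1}(\bm m)}}B_i(y)=\bra{\bar\psi^{(i)}_{\bm m}}B_{i-1}(y)$: after one step the two states differ only by an $i\leftrightarrow i-1$ swap, and since every other entry is $<i-1$, acting by $B_i(y)$ on one is literally the same as acting by $B_{i-1}(y)$ on the other. This collapses the second expansion entirely.

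Two small corrections to your write-up: in the sentence about applying the second operator, ``$\bra{\psi}B_{i-1}(x)$'' should be ``$\bra{\psi}B_i(x)$'' (this is the $\beta$-piece of the left-hand side). And the $\ominus$-identities you list, which mix the two spectral parameters $x$ and $y$, never actually arise here: every weight in the expansion is of the form $(x\ominus t_j)$, $(y\ominus t_j)$, or $1+\beta(\cdot)$ thereof, and the matching is purely combinatorial in the chain indices $\bm m$.
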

\begin{proof}
We have the following expansions in the natural basis according to \eqref{exdL}
\[
\begin{aligned}
&\bra{\psi}B_{i-1}(x) = \sum_{\bm{m} \in M_{i-1}} C_{\bm{m}}^{(i-1)}\bra{\psi^{(i-1)}_{\bm{m}}},\\ 
&\bra{\psi}B_{i}(x) = \sum_{\bm{m} \in M_{i}} C_{\bm{m}}^{(i)}\bra{\psi^{(i)}_{\bm{m}}},\\ 
&\bra{s_i\psi} B_i(x) = \sum_{\bm{m} \in \bar{M}_i} \bar{C}^{(i)}_{\bm{m}}\bra{\bar{\psi}^{(i)}_{\bm{m}}},
\end{aligned}
\]
where $M_{i-1}, M_i$ and $\bar{M}_i$ are the sets in which $\bm{m} = (m_1,\cdots,m_N)$ takes value such that the corresponding coefficients $C_{\bm{m}}^{(i-1)}$, $C_{\bm{m}}^{(i)}$ and $\bar{C}^{(i)}_{\bm{m}}$ in \eqref{exdL} are nonzero, and $\bra{\psi^{(i-1)}_{\bm{m}}}$, $\bra{\psi^{(i)}_{\bm{m}}}$ and $\bra{\bar{\psi}^{(i)}_{\bm{m}}}$ are members of the natural basis \eqref{basis}. Because $m_a=i-1, m_b=i$ for $\bm{m} \in M_{i-1}$ and $m_a=i, m_b=i-1$ for $\bm{m} \in \bar{M}_{i}$, there is a map $\phi: M_{i-1} \rightarrow \bar{M}_i$ defined by
\[
\phi(\bm{m})_j =  \left\{ \begin{array}{ll}
i, & \mathrm{if}~m_j = i-1,\\
i-1, & \mathrm{if}~m_j = i,\\
m_j, & \mathrm{if}~m_j \neq i~\mathrm{and}~m_j \neq i-1.
\end{array}\right.
\]
$\phi$ is a bijection since $n_s < i-1$ for $s \neq a,b$. Moreover, for $\bm{m} \in M_i$, we have $n_j < m_j$ for $1 \leq j < b$ and $m_b=i$, therefore we have an injective map $\tau: M_i \rightarrow \bar{M}_i$ defined by
\[
\tau(\bm{m})_j =  \left\{ \begin{array}{ll}
i, & 1 \leq j \leq a,\\
i-1, & a+1 \leq j \leq b,\\
m_j, & j>b.
\end{array}\right.
\]
For $\bm{m} \in \mathrm{im} \tau$, i.e. $m_s=i, m_t=i-1$ for $1\leq s \leq a, a+1 \leq t \leq b$, since $n_a = m_a > m_{a+1}$ and $\phi^{-1}(\bm{m})_a < \phi^{-1}(\bm{m})_{a+1}, \tau^{-1}(\bm{m})_a > n_a$, Eq.\eqref{exdL} implies
\begin{equation}\label{CC1}
C^{(i-1)}_{\phi^{-1}(\bm{m})} = (1+\beta(x\ominus t_a)) \bar{C}^{(i)}_{\bm{m}},\quad
C^{(i)}_{\tau^{-1}(\bm{m})} = (x\ominus t_a) \bar{C}^{(i)}_{\bm{m}}.
\end{equation}
On the other hand, if $\bm{m} \not\in \mathrm{im} \tau$, then $m_{a+1}<i-1$, therefore
\begin{equation}\label{CC2}
C^{(i-1)}_{\phi^{-1}(\bm{m})} = \bar{C}^{(i)}_{\bm{m}}.
\end{equation}
Consequently,
\begin{equation}\label{Bact}
\begin{aligned}
&\bra{\psi}B_{i-1}(x) = \sum_{\phi^{-1}(\bm{m}) \in M_{i-1}} C_{\phi^{-1}(\bm{m})}^{(i-1)}\bra{\psi^{(i-1)}_{\phi^{-1}(\bm{m})}}
\\ 
& \quad \quad \quad \quad  \quad= \sum_{\bm{m} \in \mathrm{im} \tau} (1+\beta(x\ominus t_a)) \bar{C}^{(i)}_{\bm{m}} \bra{\psi^{(i-1)}_{\phi^{-1}(\bm{m})}} + \sum_{\bm{m} \not\in \mathrm{im} \tau} \bar{C}^{(i)}_{\bm{m}} \bra{\psi^{(i-1)}_{\phi^{-1}(\bm{m})}},\\
&\bra{\psi}B_i(x)= \sum_{\bar{\bm{m}}\in M_i} C^{(i)}_{\bar{\bm{m}}}\bra{\psi^{(i)}_{\bar{\bm{m}}}} = \sum_{\bar{\bm{m}}\in M_i} (x \ominus t_a) \bar{C}^{(i)}_{\tau(\bar{\bm{m}})}\bra{\psi^{(i)}_{\bar{\bm{m}}}} = \sum_{\bm{m}\in \mathrm{im}\tau} (x \ominus t_a) \bar{C}^{(i)}_{\bm{m}}\bra{\psi^{(i)}_{\tau^{-1}(\bm{m})}},\\
&\bra{s_i \psi}B_{i}(x) = \sum_{\bm{m} \in \bar{M}_{i}} \bar{C}_{\bm{m}}^{(i)}\bra{\bar{\psi}^{(i)}_{\bm{m}}} = \sum_{\bm{m} \in \mathrm{im} \tau} \bar{C}^{(i)}_{\bm{m}} \bra{\bar{\psi}^{(i)}_{\bm{m}}} + \sum_{\bm{m} \not\in \mathrm{im} \tau} \bar{C}^{(i)}_{\bm{m}} \bra{\bar{\psi}^{(i)}_{\bm{m}}},
\end{aligned}
\end{equation}
where, according to \eqref{exdL},
\[
\begin{aligned}
&\bra{\psi^{(i-1)}_{\phi^{-1}(\bm{m})}} = \bra{n_1,\cdots,n_{a-1},i,n_{a+1},\cdots,n_{b-1},n'_{s_1},\cdots,n'_{s_{N-b+1}}},\\
&\bra{\psi^{(i)}_{\tau^{-1}(\bm{m})}} = \bra{n_1,\cdots,n_{a-1},i-1,n_{a+1},\cdots,n_{b-1},n'_{s_1},\cdots,n'_{s_{N-b+1}}},\\
&\bra{\bar{\psi}^{(i)}_{\bm{m}}} = \bra{n_1,\cdots,n_{a-1},i-1,n_{a+1},\cdots,n_{b-1},n'_{s_1},\cdots,n'_{s_{N-b+1}}},
\end{aligned}
\]
with some $n'_{s_1},\cdots,n'_{s_{N-b+1}}$ determined by $\bm{m}$.
From the equations above, we see\\
$\bra{\psi^{(i)}_{\tau^{-1}(\bm{m})}} = \bra{\bar{\psi}^{(i)}_{\bm{m}}}$ and
$\bra{\psi^{(i-1)}_{\phi^{-1}(\bm{m})}}B_i(y) = \bra{\bar{\psi}^{(i)}_{\bm{m}}} B_{i-1}(y)$, from which, together with Eq.\eqref{Bact}, one can compute
\[
\begin{aligned}
&(\bra{s_i\psi}+\beta \bra{\psi})B_i(x) B_{i-1}(y) \\
=&\sum_{\bm{m} \not\in \mathrm{im}\tau}\bar{C}^{(i)}_{\bm{m}}\bra{\bar{\psi}^{(i)}_{\bm{m}}} B_{i-1}(y) + \sum_{\bm{m} \in \mathrm{im}\tau}\bar{C}^{(i)}_{\bm{m}}\bra{\bar{\psi}^{(i)}_{\bm{m}}} B_{i-1}(y) + \beta \sum_{\bm{m} \in \mathrm{im}\tau}(x \ominus t_a) \bar{C}^{(i)}_{\bm{m}}\bra{\bar{\psi}^{(i)}_{\bm{m}}} B_{i-1}(y)\\
=&\sum_{\bm{m} \not\in \mathrm{im}\tau}\bar{C}^{(i)}_{\bm{m}}\bra{\psi^{(i-1)}_{\phi^{-1}(\bm{m})}} B_{i}(y) + \sum_{\bm{m} \in \mathrm{im}\tau}(1+\beta(x \ominus t_a)) \bar{C}^{(i)}_{\bm{m}}\bra{\psi^{(i-1)}_{\phi^{-1}(\bm{m})}} B_{i}(y)\\
=&\sum_{\phi^{-1}(\bm{m})\in M_{i-1}}C^{(i-1)}_{\phi^{-1}(\bm{m})}\bra{\psi^{(i-1)}_{\phi^{-1}(\bm{m})}} B_i(y)\\
=& \bra{\psi}B_{i-1}(x) B_i(y).
\end{aligned}
\]
\end{proof}

Lemma \ref{l3pre} can be generalized to the following:

\begin{lem}\label{l3}
If $\pi \in S_N$ and $l(s_i \pi) > l(\pi)$, then
\[
\begin{aligned}
&(\bra{s_i\pi} + \beta \bra{\pi}) B_{N-1}(\sigma_1) B_{N-2}(\sigma_2) \cdots B_{i+1}(\sigma_{N-i-1})B_i(\sigma_{N-i}) B_{i-1}(\sigma_{N-i+1})\\ =& 
\bra{\pi} B_{N-1}(\sigma_1) B_{N-2}(\sigma_2) \cdots B_{i+1}(\sigma_{N-i-1}) B_{i-1}(\sigma_{N-i}) B_i(\sigma_{N-i+1}).
\end{aligned}
\]
\end{lem}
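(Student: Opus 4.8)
The plan is to reduce Lemma~\ref{l3} to Lemma~\ref{l3pre} by pushing the swapped pair $B_i(\sigma_{N-i}),B_{i-1}(\sigma_{N-i+1})$ to the far right, past the prefix $P:=B_{N-1}(\sigma_1)\cdots B_{i+1}(\sigma_{N-i-1})$, and tracking what $P$ does. A naive "act on $\bra{\pi}$ and $\bra{s_i\pi}$ termwise and match coefficients" fails: in the natural‑basis expansion of $\bra{\pi}P$ via \eqref{exdL}, a prefix operator $B_k(\sigma)$ with $k>i$ can carry the entry $i$ leftward past the entry $i-1$, so $\bra{\pi}P$ picks up states in which $i$ precedes $i-1$, and its expansion is genuinely different from that of $\bra{s_i\pi}P$. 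This is exactly why one works with the combination $\bra{s_i\pi}+\beta\bra{\pi}$. The backbone is a one‑step statement: \emph{for $k>i$ and a natural basis state $\bra{v}$ in which the value $k$ is unique and the values $i$, $i-1$ each occur exactly once with the entry $i-1$ to the left of the entry $i$, one has $(\bra{s_i v}+\beta\bra{v})B_k(\sigma)=\sum_\alpha c_\alpha(\bra{s_i v_\alpha}+\beta\bra{v_\alpha})$, where each $v_\alpha$ again has the entry $i-1$ to the left of the entry $i$.} This is proved by expanding $\bra{v}B_k(\sigma)$ and $\bra{s_i v}B_k(\sigma)$ with \eqref{exdL} and pairing the $\bm m$‑summations as in the proof of Lemma~\ref{l3pre} (maps analogous to $\phi,\tau$ there, and coefficient identities mirroring \eqref{CC1}--\eqref{CC2}). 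Two points make it run: since $k>i$ the entries $i,i-1$ are never annihilated by $B_k$ and only get permuted, and the weights in \eqref{exdL} are products of factors $\sigma\ominus t_j$ and $1+\beta(\sigma\ominus t_j)$ attached to the sites $j$, not to the values occupying them; the unique way the two passenger entries swap order is a carry of value $i$ over a site carrying $i-1$, and precisely that contribution recombines the $\bra{s_i v}B_k$ piece with $\beta$ times the $\bra{v}B_k$ piece into the $\bra{s_i v_\alpha}+\beta\bra{v_\alpha}$ shape.

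\textbf{Two companion facts.} First, for $k>i$, if $\bra{w}$ has the entry $i$ to the left of the entry $i-1$, then $\bra{w}B_k(\sigma)$ is a combination of states with the entry $i$ still to the left of the entry $i-1$: a carry chain of value $i-1$ can never extend past a site carrying the larger value $i$, and under \eqref{exdL} entries only move leftward or stay put, so the order $i$-before-$(i-1)$ cannot be undone. Second, any natural basis state of the form \eqref{psi2} (entry $i$ before entry $i-1$, all other entries $<i-1$) is annihilated by $B_{i-1}(\sigma)$: the entry $i-1$ to be annihilated sits at a site that has the larger entry $i$ to its left, which obstructs the chain $m_1=\dots=m_b=i-1$ forced by \eqref{exdL}. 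Both facts are immediate from \eqref{exdL}.

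\textbf{Assembling the proof.} Starting from $\bra{s_i\pi}+\beta\bra{\pi}$ (legitimate because the hypothesis $l(s_i\pi)>l(\pi)$ says $\pi$ has $i-1$ before $i$), iterate the one‑step statement along $P$, noting that at each step the relevant top value $N-1,N-2,\dots,i+1$ is indeed unique and that after all of $B_{N-1},\dots,B_{i+1}$ the surviving nonzero entries are exactly $1,\dots,i$. This yields $(\bra{s_i\pi}+\beta\bra{\pi})P=\sum_\gamma d_\gamma(\bra{s_i v_\gamma}+\beta\bra{v_\gamma})$ with every $v_\gamma$ of the form \eqref{psi1}; and, by the first companion fact, $\bra{s_i\pi}P$ is a combination of states of the form \eqref{psi2}. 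Comparing the form-\eqref{psi1} parts of the displayed identity (and cancelling the factor $\beta$, valid as these are polynomial identities) forces the form-\eqref{psi1} part of $\bra{\pi}P$ to equal $\sum_\gamma d_\gamma\bra{v_\gamma}$; since every term of $\bra{\pi}P$ is of the form \eqref{psi1} or \eqref{psi2}, the second companion fact gives $\bra{\pi}P\,B_{i-1}(\sigma_{N-i})B_i(\sigma_{N-i+1})=\sum_\gamma d_\gamma\bra{v_\gamma}\,B_{i-1}(\sigma_{N-i})B_i(\sigma_{N-i+1})$. Applying Lemma~\ref{l3pre} to each $v_\gamma$ rewrites the right‑hand side as $\sum_\gamma d_\gamma(\bra{s_i v_\gamma}+\beta\bra{v_\gamma})B_i(\sigma_{N-i})B_{i-1}(\sigma_{N-i+1})=(\bra{s_i\pi}+\beta\bra{\pi})P\,B_i(\sigma_{N-i})B_{i-1}(\sigma_{N-i+1})$, which is exactly the claimed identity.

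\textbf{The hard part.} The main obstacle is the one‑step statement: explicitly organising the bijection between the $\bm m$‑summations of $\bra{v}B_k(\sigma)$ and $\bra{s_i v}B_k(\sigma)$, correctly accounting for the carry‑of‑$i$‑past‑$(i-1)$ terms, and verifying that the weights recombine in the pattern $e_\alpha+\beta f_\alpha=c_\alpha$ (the analogue of \eqref{CC1}--\eqref{CC2}). Everything after that — the two companion facts, the iteration, and the final termwise use of Lemma~\ref{l3pre} — is routine bookkeeping.
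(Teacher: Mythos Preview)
Your proposal is correct and follows essentially the same approach as the paper. The core combinatorial content---the one-step statement that $(\bra{s_i v}+\beta\bra{v})B_k(\sigma)=\sum_\alpha c_\alpha(\bra{s_i v_\alpha}+\beta\bra{v_\alpha})$ for $k>i$, established via the bijections $\phi,\tau$ on the $\bm m$-summations and the coefficient identities mirroring \eqref{CC1}--\eqref{CC2}---is exactly what the paper proves as Eq.~\eqref{Bkexpansion}, and your two companion facts are the same observations the paper uses at the end of its inductive step.

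The only organizational difference is how the pieces are assembled. The paper runs an induction on the chain length: at level $k$ it proves the one-step identity \eqref{Bkexpansion}, then invokes the inductive hypothesis on each $\bra{s_i\psi_{\bm m}}+\beta\bra{\psi_{\bm m}}$, and finally reabsorbs the $M_2$ terms using the vanishing $\bra{\psi'_{\bm m}}B_{k-1}\cdots B_{i-1}B_i=0$. You instead iterate the one-step along all of $P$, then extract the form-\eqref{psi1} part of $\bra{\pi}P$ by projecting and cancelling the overall factor $\beta$. This works because $\beta$ is a formal variable (hence a non-zero-divisor), but it is actually unnecessary: the explicit bijection you describe already shows that the coefficients $c_\alpha$ in your one-step statement coincide with the form-\eqref{psi1} coefficients of $\bra{v}B_k(\sigma)$ (this is precisely the content of \eqref{CmCbar} and \eqref{CmComp}), and iterating this stronger version together with companion fact 1 gives $\sum_\gamma d_\gamma\bra{v_\gamma}=\text{form-}\eqref{psi1}\text{ part of }\bra{\pi}P$ directly, without appealing to $\beta$-cancellation. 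Either packaging is fine; the substance is the same.
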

\begin{proof}
We prove by induction on the number of $B_s$ operators. Assume, for $k \geq i+1$ and $\bra{\varphi} = \bra{n_1,\cdots,n_N}$ with $\{n_1,\cdots,n_N\} = \{ 1,2,\cdots,k-1,0,\cdots,0 \}$, we have
\begin{equation}\label{l3a}
(\bra{s_i\varphi} + \beta \bra{\varphi}) B_{k-1}(\sigma_1) \cdots B_i(\sigma_{k-i}) B_{i-1}(\sigma_{k-i+1}) = \bra{\varphi} B_{k-1}(\sigma_1) \cdots B_{i-1}(\sigma_{k-i}) B_{i}(\sigma_{k-i+1}),
\end{equation}
where $\bra{\varphi}$ and $\bra{s_i \varphi}$ are defined in the same way as Eq.\eqref{psi1} and \eqref{psi2}. The $k=i+1$ case of \eqref{l3a} is valid due to Eq.\eqref{l3i}.
Now let $\bra{\psi}$ be a state of the form
\[
\bra{\psi} = \bra{n_1,\cdots,n_{a-1},i-1,n_{a+1},\cdots,n_{b-1},i,n_{b+1},\cdots,n_N}
\]
with $\{n_1,\cdots,n_N\} = \{ 1,2,\cdots,k,0,\cdots,0 \}$, where $n_a=i-1, n_b=i$.
We expand the following states in the natural basis according to \eqref{exdL}:
\begin{align}
&\bra{\psi} B_k(u) = \sum_{\bm{m}\in M_1}C_{\bm{m}} \bra{\psi_{\bm{m}}} + \sum_{\bm{m}\in M_2} C'_{\bm{m}} \bra{\psi'_{\bm{m}}},\label{Bk}\\
&\bra{s_i \psi} B_k(u) = \sum_{\bm{m} \in \bar{M}_1} \bar{C}_{\bm{m}} \bra{\bar{\psi}_{\bm{m}}}\label{sBk},
\end{align}
where $M_2$ contains $\bm{m}$'s satisfying $i-1 = n_a < m_a = m_{a+1} = \cdots = m_b = i$, and $M_1$ consists of all the other $\bm{m}$'s such that the corresponding coefficients are nonzero. One important difference between these two sets is that $i-1$ remains to the left of $i$ in $\bra{\psi_{\bm{m}}}$, while $i$ is moved to the left of $i-1$ in $\bra{\psi'_{\bm{m}}}$. Since it is not possible to have $i = n_a < m_a = m_{a+1} = \cdots = m_b = i-1$ in the expansion \eqref{sBk}, we have a one-to-one correspondence between $M_1$ and $\bar{M}_1$ given by $\tilde{\phi}:M_1 \rightarrow \bar{M}_1$ with $\tilde{\phi}$ defined by
\[
\tilde{\phi}(\bm{m})_j =  \left\{ \begin{array}{ll}
i, & \mathrm{if}~m_j = i-1,\\
i-1, & \mathrm{if}~m_j = i,\\
m_j, & \mathrm{if}~m_j \neq i~\mathrm{and}~m_j \neq i-1.
\end{array} \right.
\]
Since $\bra{\bar{\psi}_{\tilde{\phi}(\bm{m})}}$ differs from $\bra{\psi_{\bm{m}}}$ by exchange of $i-1$ and $i$, it is easy to see
\begin{equation}\label{psidentity0}
\bra{\bar{\psi}_{\tilde{\phi}(\bm{m})}} = \bra{s_i \psi_{\bm{m}}}.
\end{equation}
Notice that $n_s<i-1$ or $n_s > i$ for $s \neq a,b$, Eq.\eqref{exdL} implies
\begin{equation}\label{Cm1}
C_{\bm{m}} =\left\{ \begin{array}{ll}
(1+\beta (u\ominus t_a)) \bar{C}_{\tilde{\phi}(\bm{m})}, & \mathrm{if}~m_a = i-1, m_b=i, n_s < m_s = m_b ~\mathrm{for}~a<s<b,\\
\bar{C}_{\tilde{\phi}(\bm{m})}, & \mathrm{otherwise}.
\end{array} \right.
\end{equation}
Moreover, there is an injective map $\tilde{\tau}:M_2 \rightarrow M_1$ given by
\[
\tilde{\tau}(\bm{m})_j =  \left\{ \begin{array}{ll}
i-1, & \mathrm{if}~m_j = i~\mathrm{and}~j \leq a,\\
m_j, & \mathrm{otherwise}.
\end{array} \right.
\]
Notice that $\bm{m} \in \mathrm{im}\tilde{\tau}$ is equivalent to $m_a=i-1, m_b=i, n_s<m_s=m_b$ for $a<s<b$, so Eq.\eqref{Cm1} can be rewritten as
\begin{equation}\label{CmCbar}
C_{\bm{m}} =\left\{ \begin{array}{ll}
(1+\beta (u\ominus t_a)) \bar{C}_{\tilde{\phi}(\bm{m})}, & \mathrm{if}~\bm{m} \in \mathrm{im} \tilde{\tau},\\
\bar{C}_{\tilde{\phi}(\bm{m})}, & \mathrm{otherwise}.
\end{array} \right.
\end{equation}
In addition, if $\bm{m} \in \mathrm{im} \tilde{\tau}$, then
\begin{equation}\label{psidentity1}
\bra{\bar{\psi}_{\tilde{\phi}(\bm{m})}} = \bra{\psi'_{\tilde{\tau}^{-1}(\bm{m})}},
\end{equation}
because on both sides, $i-1$ is moved to the $a$th site, $i$ is moved to the $c$th site, where $c = \mathrm{max}\{ l<a ~|~ n_l=m_l \}$, and the other $n_s$'s are moved in the same way, and
\begin{equation}\label{psidentity2}
C'_{\tilde{\tau}^{-1}(\bm{m})} = (u\ominus t_a) \bar{C}_{\tilde{\phi}(\bm{m})},
\end{equation}
because $\tilde{\phi}(\bm{m})_a = n_a = i$, $\tilde{\tau}^{-1}(\bm{m})_a > n_a = i-1$, and $n_s < i-1$ or $n_s > i$ for $s \neq a,b$.
Therefore
\begin{equation}\label{CmComp}
\begin{aligned}
&\sum_{\bm{m}\in M_1} \bar{C}_{\tilde{\phi}(\bm{m})}\bra{\bar{\psi}_{\tilde{\phi}(\bm{m})}} + \beta \sum_{\bm{m} \in M_2} C'_{\bm{m}} \bra{\psi'_{\bm{m}}}\\
=& \sum_{\bm{m} \in M_1\backslash \mathrm{im}\tilde{\tau}} \bar{C}_{\tilde{\phi}(\bm{m})}\bra{\bar{\psi}_{\tilde{\phi}(\bm{m})}} + \sum_{\bm{m} \in \mathrm{im}\tilde{\tau}} \bar{C}_{\tilde{\phi}(\bm{m})} \bra{\bar{\psi}_{\tilde{\phi}(\bm{m})}} + \beta \sum_{\bm{m} \in M_2} C'_{\bm{m}}\bra{\psi'_{\bm{m}}}\\
=& \sum_{\bm{m} \in M_1\backslash \mathrm{im}\tilde{\tau}} \bar{C}_{\tilde{\phi}(\bm{m})}\bra{s_i \psi_{\bm{m}}} + \sum_{\bm{m} \in \mathrm{im}\tilde{\tau}} \bar{C}_{\tilde{\phi}(\bm{m})} \bra{s_i \psi_{\bm{m}}} + \beta (u \ominus t_a) \sum_{\bm{m} \in \mathrm{im}\tilde{\tau}} \bar{C}_{\tilde{\phi}(\bm{m})}\bra{s_i \psi_{\bm{m}}}\\
=& \sum_{\bm{m} \in M_1} C_{\bm{m}} \bra{s_i \psi_{\bm{m}}},
\end{aligned}
\end{equation}
where the second equality is due to Eq.\eqref{psidentity0}, \eqref{psidentity1} and \eqref{psidentity2}, and in the last equality we used Eq.\eqref{CmCbar}.
Consequently,
\begin{equation}\label{Bkexpansion}
\begin{aligned}
&(\bra{s_i \psi} + \beta \bra{\psi}) B_k(u)\\
=& \sum_{\bm{m} \in M_1} \bar{C}_{\tilde{\phi}(\bm{m})}\bra{\bar{\psi}_{\tilde{\phi}(\bm{m})}} + \beta \sum_{\bm{m} \in M_2} C'_{\bm{m}} \bra{\psi'_{\bm{m}}} + \beta \sum_{\bm{m} \in M_1} C_{\bm{m}}\bra{\psi_{\bm{m}}}\\
=& \sum_{\bm{m} \in M_1} C_{\bm{m}} (\bra{s_i \psi_{\bm{m}}}+ \beta \bra{\psi_{\bm{m}}}),
\end{aligned}
\end{equation}
where the second equality is due to \eqref{CmComp}.
From the inductive hypothesis Eq.\eqref{l3a}, we have
\begin{equation}\label{indhypcon}
(\bra{s_i \psi_{\bm{m}}}+ \beta \bra{\psi_{\bm{m}}}) B_{k-1}(\sigma_1) \cdots B_{i}(\sigma_{k-i}) B_{i-1}(\sigma_{k-i+1})
=\bra{\psi_{\bm{m}}} B_{k-1}(\sigma_1) \cdots B_{i-1}(\sigma_{k-i}) B_i(\sigma_{k-i+1}),
\end{equation}
which results in
\begin{equation}\label{l3c}
\begin{aligned}
&(\bra{s_i \psi}+ \beta \bra{\psi}) B_k(u) B_{k-1}(\sigma_1) \cdots B_{i}(\sigma_{k-i}) B_{i-1}(\sigma_{k-i+1})\\
=&\sum_{\bm{m} \in M_1} C_{\bm{m}} (\bra{s_i \psi_{\bm{m}}}+ \beta \bra{\psi_{\bm{m}}}) B_{k-1}(\sigma_1) \cdots B_{i}(\sigma_{k-i}) B_{i-1}(\sigma_{k-i+1})\\
=&\sum_{\bm{m} \in M_1} C_{\bm{m}} \bra{\psi_{\bm{m}}} B_{k-1}(\sigma_1) \cdots B_{i-1}(\sigma_{k-i}) B_i(\sigma_{k-i+1})\\
=&\bra{\psi} B_{k}(u) B_{k-1}(\sigma_1) \cdots B_{i-1}(\sigma_{k-i}) B_i(\sigma_{k-i+1}),
\end{aligned}
\end{equation}
where the first equality is due to \eqref{Bkexpansion}, the second equality is due to \eqref{indhypcon}, and in the last step we used Eq.\eqref{Bk} and the fact that, for $\bm{m} \in M_2$, $i$ is to the left of $i-1$ in $\bra{\psi'_{\bm{m}}}$, so (notice that $B_l$ with $l>i$ cannot move $i-1$ to the left of $i$ because the only way to do this is to have $i<m_s = m_t = i-1$ for some $s<t$, which is impossible)
\[
\bra{\psi'_{\bm{m}}} B_{k-1}(\sigma_1) \cdots B_{i-1}(\sigma_{k-i}) B_i(\sigma_{k-i+1}) = 0,\quad \bm{m} \in M_2,
\]
because it is impossible to have $i < m_s = m_{s+1} = \cdots = m_t = i-1$ when $B_{i-1}(\sigma_{k-i})$ acts on the state.

Combining Eq.\eqref{l3i}, \eqref{l3a} and \eqref{l3c}, the lemma is proved by induction.
\end{proof}

In order to relate the quantum states of the integrable system to the double $\beta$-Grothendieck polynomials, we need the commutation relations among the different $B_i$ operators. From Eq.\eqref{Rijkl} and \eqref{comBB}, we obtain the commutation relations
\begin{align}
&B_i(x) B_i(y) = B_i(y) B_i(x),\label{comme} \\
&B_i(x) B_j(y) = \frac{B_j(y)B_i(x)-B_j(x)B_i(y)}{y\ominus x} = (1+\beta x)\frac{B_j(x)B_i(y)-B_j(y)B_i(x)}{x-y},\quad i < j \label{comm}.
\end{align}
Let us define
\begin{equation}\label{dbar}
\bar{\partial}^\beta_i = (1+\beta x_i) \partial_i,
\end{equation}
where $\partial_i$ is the ordinary divided difference operator defined by $\partial^{\beta}_i|_{\beta=0}$.
It is easy to check that
\begin{equation}\label{dbar+}
\bar{\partial}^\beta_i = \partial^\beta_i + \beta.
\end{equation}
Similar to Eq.\eqref{ds}, for $w = s_{i_1} s_{i_2} \cdots s_{i_{l(w)}}$, we define
\begin{equation}\label{dbars}
\bar{\partial}^{\beta}_w := \bar{\partial}^{\beta}_{i_1} \bar{\partial}^{\beta}_{i_2} \cdots \bar{\partial}^{\beta}_{i_{l(w)}}.
\end{equation}

From Eq.\eqref{comm}, \eqref{dbar} and Lemma \ref{l3}, we can prove
\begin{lem}\label{l4}
For $\pi \in S_N$, if $l(s_i \pi) > l(\pi)$, then
\[
\bra{s_i \pi} B_{N-1}(\sigma_1) \cdots B_i(\sigma_{N-i}) B_{i-1}(\sigma_{N-i+1}) = 
\partial^\beta_{N-i} \bra{\pi} B_{N-1}(\sigma_1) \cdots B_i(\sigma_{N-i}) B_{i-1}(\sigma_{N-i+1}).
\]
\end{lem}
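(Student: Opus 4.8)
\emph{Proof proposal.} The plan is to combine the commutation relation \eqref{comm} for the $B_i$ operators with Lemma~\ref{l3}, reducing the statement to a one-line identity between rational functions in $\sigma_{N-i}$ and $\sigma_{N-i+1}$. Write $P = \bra{\pi} B_{N-1}(\sigma_1)\cdots B_{i+1}(\sigma_{N-i-1})$ for the common prefix of all the operator strings that will appear; it involves only $\sigma_1,\dots,\sigma_{N-i-1}$. Set
\[
F = P\, B_i(\sigma_{N-i})\, B_{i-1}(\sigma_{N-i+1}), \qquad \tilde F = P\, B_{i-1}(\sigma_{N-i})\, B_i(\sigma_{N-i+1}),
\]
and let $G = \bra{s_i\pi} B_{N-1}(\sigma_1)\cdots B_i(\sigma_{N-i})B_{i-1}(\sigma_{N-i+1})$ be the left-hand side of the claim; the assertion is then $G = \partial^\beta_{N-i} F$.

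First I would invoke \eqref{comm} with smaller index $i-1$ and larger index $i$, at arguments $x=\sigma_{N-i}$ and $y=\sigma_{N-i+1}$. Since $P$ contains neither $\sigma_{N-i}$ nor $\sigma_{N-i+1}$, the simple transposition $s_{N-i}$ acts on $F$ only through its tail, so the relation rearranges to
\[
\tilde F = \frac{1}{\sigma_{N-i+1}\ominus\sigma_{N-i}}\bigl(s_{N-i} F - F\bigr).
\]
Next I would apply Lemma~\ref{l3}, whose hypothesis $l(s_i\pi)>l(\pi)$ is precisely ours, to the full operator string $B_{N-1}(\sigma_1)\cdots B_i(\sigma_{N-i})B_{i-1}(\sigma_{N-i+1})$: its two sides read $G+\beta F$ and $\tilde F$, hence $G = \tilde F - \beta F$.

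Finally I would carry out the elementary simplification. Substituting $\sigma_{N-i+1}\ominus\sigma_{N-i} = (\sigma_{N-i+1}-\sigma_{N-i})/(1+\beta\sigma_{N-i})$ into the formula for $\tilde F$ and combining with $G = \tilde F - \beta F$, the terms proportional to $\beta\sigma_{N-i}$ in the numerator cancel, leaving
\[
G = \frac{(1+\beta\sigma_{N-i+1})\, F - (1+\beta\sigma_{N-i})\, s_{N-i}F}{\sigma_{N-i}-\sigma_{N-i+1}},
\]
which is exactly $\partial^\beta_{N-i} F$ by the definition \eqref{didi} of the $\beta$-divided difference operator (acting, per the stated convention, on the $\sigma$-variables).

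I expect no genuine obstacle; the work is bookkeeping. The only care needed is with indices and the $\ominus$-arithmetic: that $\sigma_{N-i}$ sits on $B_i$ in $F$ but on $B_{i-1}$ in $\tilde F$, that the pair permuted by $s_{N-i}$ and by $\partial^\beta_{N-i}$ is the same pair $(\sigma_{N-i},\sigma_{N-i+1})$, and that the $\ominus$-computation in the last step delivers the weights $1+\beta\sigma_{N-i+1}$ and $1+\beta\sigma_{N-i}$ in the correct slots rather than swapped. It is also worth recording that $F$ is in fact a polynomial in the $\sigma$'s by the explicit action \eqref{exdL}, so that $\partial^\beta_{N-i}F$ is well defined, even though the displayed identity is a purely formal consequence of \eqref{comm} and Lemma~\ref{l3}.
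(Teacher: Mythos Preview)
Your proposal is correct and follows essentially the same route as the paper: invoke Lemma~\ref{l3} to obtain $G+\beta F=\tilde F$, then use the commutation relation~\eqref{comm} to rewrite $\tilde F$ in terms of $F$ and $s_{N-i}F$, and identify the result with $\partial^\beta_{N-i}F$. The paper streamlines the final algebra by noting that \eqref{comm} and \eqref{dbar} give $\tilde F=\bar\partial^\beta_{N-i}F=(\partial^\beta_{N-i}+\beta)F$ directly, from which $G=\partial^\beta_{N-i}F$ is immediate, whereas you carry out the $\ominus$-arithmetic by hand; the content is identical.
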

\begin{proof}
\[
\begin{aligned}
&(\bra{s_i \pi} + \beta\bra{\pi}) B_{N-1}(\sigma_1) \cdots B_i(\sigma_{N-i}) B_{i-1}(\sigma_{N-i+1})\\
=& \bra{\pi} B_{N-1}(\sigma_1) \cdots B_{i-1}(\sigma_{N-i}) B_i(\sigma_{N-i+1})\\
=& (\partial^\beta_{N-i} + \beta) \bra{\pi} B_{N-1}(\sigma_1) \cdots B_i(\sigma_{N-i}) B_{i-1}(\sigma_{N-i+1}),
\end{aligned}
\]
where the first equality is due to Lemma \ref{l3}, and the second equality is due to Eq.\eqref{comm}, \eqref{dbar} and \eqref{dbar+}.
\end{proof}

\begin{thm}\label{thm1}
For $\pi \in S_N$,
\[
B_{N-1}(\sigma_1)B_{N-2}(\sigma_2)\cdots B_1(\sigma_{N-1}) \ket{\Omega^{(0)}} = \sum_{w \in S_N} \mathcal{G}^{(\beta)}_w (\bm{\sigma};\ominus \bm{t}) \ket{\omega_N w^{-1}},
\]
where $\omega_N$ is the permutation in $S_N$ of maximal length.
\end{thm}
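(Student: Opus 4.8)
The plan is to read off, for every $v \in S_N$, the coefficient of $\ket{v}$ in $B_{N-1}(\sigma_1)B_{N-2}(\sigma_2)\cdots B_1(\sigma_{N-1})\ket{\Omega^{(0)}}$ and to prove that it equals $\mathcal{G}^{(\beta)}_{v^{-1}\omega_N}(\bm{\sigma};\ominus\bm{t})$; the theorem then follows by reindexing with $w = v^{-1}\omega_N$ (equivalently $v=\omega_N w^{-1}$), which is a bijection of $S_N$. By Lemma \ref{l0} the state lies in the span of $\{\ket{v}:v\in S_N\}$, and since the natural basis \eqref{basis} is orthonormal, reading off that coefficient amounts to pairing with $\bra{v}$ on the left, so we set $C_v := \bra{v}B_{N-1}(\sigma_1)\cdots B_1(\sigma_{N-1})\ket{\Omega^{(0)}}$. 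I will prove $C_v = \mathcal{G}^{(\beta)}_{v^{-1}\omega_N}(\bm{\sigma};\ominus\bm{t})$ by induction on $l(v)$.

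The inductive step rests on the recursion $C_{s_i\pi} = \partial^{\beta}_{N-i}C_\pi$, valid whenever $2\le i\le N-1$ and $l(s_i\pi) > l(\pi)$. To obtain it, I start from Lemma \ref{l4} and multiply both sides on the right by $B_{i-2}(\sigma_{N-i+2})B_{i-3}(\sigma_{N-i+3})\cdots B_1(\sigma_{N-1})\ket{\Omega^{(0)}}$; this vector depends only on $\sigma_{N-i+2},\dots,\sigma_{N-1}$ (it is $\ket{\Omega^{(0)}}$ when $i=2$), hence is fixed by the transposition of $\sigma_{N-i}$ and $\sigma_{N-i+1}$, so $\partial^{\beta}_{N-i}$ passes through it; the left side becomes $C_{s_i\pi}$ and the right side becomes $\partial^{\beta}_{N-i}C_\pi$.

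For the anchors of the induction, note that by \eqref{ketstate} and \eqref{rho} one has $\bra{\rho_k} = \bra{1,2,\dots,k,0,k+1,\dots,N-1}$, so Lemma \ref{l2}, evaluated on $\ket{\Omega^{(0)}}$, gives $C_{\rho_k} = \mathcal{G}^{(\beta)}_{\rho_k^{-1}\omega_N}(\bm{\sigma};\ominus\bm{t})$ for all $k=0,\dots,N-1$ (in particular $C_e = \mathcal{G}^{(\beta)}_{\omega_N}$). Now fix $v$ with $l(v)\ge 1$. If $v = \rho_k$ for some $k$, the claim holds by the previous sentence. Otherwise $v$ is not a minimal-length representative of its coset modulo the subgroup generated by $s_2,\dots,s_{N-1}$ --- the minimal-length representatives of those cosets being exactly $\rho_0,\dots,\rho_{N-1}$ --- so there is $i\in\{2,\dots,N-1\}$ with $l(s_iv) = l(v)-1$. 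Put $\pi = s_iv$. By induction $C_\pi = \mathcal{G}^{(\beta)}_{\pi^{-1}\omega_N}$, so the recursion above gives $C_v = \partial^{\beta}_{N-i}\mathcal{G}^{(\beta)}_{\pi^{-1}\omega_N}$. Using $s_i\omega_N = \omega_N s_{N-i}$ one has $\pi^{-1}\omega_N s_{N-i} = (s_i\pi)^{-1}\omega_N = v^{-1}\omega_N$, and since $l(u\omega_N)=l(\omega_N)-l(u)$ for the longest element, $l(v^{-1}\omega_N) = l(\pi^{-1}\omega_N)-1$; hence the defining divided-difference recursion of the double $\beta$-Grothendieck polynomials (Eq.\eqref{didi}--\eqref{otherGroth}) gives $\partial^{\beta}_{N-i}\mathcal{G}^{(\beta)}_{\pi^{-1}\omega_N} = \mathcal{G}^{(\beta)}_{v^{-1}\omega_N}$, closing the induction. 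Summing, $B_{N-1}(\sigma_1)\cdots B_1(\sigma_{N-1})\ket{\Omega^{(0)}} = \sum_{v\in S_N}\mathcal{G}^{(\beta)}_{v^{-1}\omega_N}(\bm{\sigma};\ominus\bm{t})\ket{v} = \sum_{w\in S_N}\mathcal{G}^{(\beta)}_w(\bm{\sigma};\ominus\bm{t})\ket{\omega_N w^{-1}}$.

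I expect the genuine content to already reside in Lemmas \ref{l2}, \ref{l3} and \ref{l4}, leaving only bookkeeping. The point requiring a little care is the combinatorial fact that $\rho_0,\dots,\rho_{N-1}$ are precisely the minimal-length representatives of the cosets of $\langle s_2,\dots,s_{N-1}\rangle$ in $S_N$ --- which is what makes the induction reach every $v$, since Lemma \ref{l4} (and hence the recursion $C_{s_i\pi}=\partial^{\beta}_{N-i}C_\pi$) is not available for $i=1$ --- together with the standard identities $s_i\omega_N=\omega_N s_{N-i}$ and $l(u\omega_N)=l(\omega_N)-l(u)$, which ensure that $\partial^{\beta}_{N-i}$ carries out exactly the length-decreasing step of the Grothendieck recursion.
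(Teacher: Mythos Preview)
Your argument is correct and follows essentially the same route as the paper's own proof: reduce to computing $\bra{v}B_{N-1}(\sigma_1)\cdots B_1(\sigma_{N-1})\ket{\Omega^{(0)}}$ via Lemma~\ref{l0}, anchor at $v=\rho_k$ using Lemma~\ref{l2}, and climb by Lemma~\ref{l4} along left descents $s_i$ with $i\ge 2$. The paper phrases the induction as induction on $l(\pi)$ in the decomposition $v=\pi\rho_k$ with $\pi\in\langle s_2,\dots,s_{N-1}\rangle$, which is exactly your coset picture; you are in fact slightly more explicit than the paper in justifying why Lemma~\ref{l4} (stated only for the truncated product $B_{N-1}\cdots B_{i-1}$) extends to the full product $B_{N-1}\cdots B_1\ket{\Omega^{(0)}}$, since the tail $B_{i-2}(\sigma_{N-i+2})\cdots B_1(\sigma_{N-1})\ket{\Omega^{(0)}}$ is independent of $\sigma_{N-i},\sigma_{N-i+1}$ and hence transparent to $\partial^\beta_{N-i}$.
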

\begin{proof}
Lemma \ref{l0} suggests that we only need to show
\begin{equation}\label{th1}
\bra{\omega_N w^{-1}} B_{N-1}(\sigma_1) \cdots B_1(\sigma_{N-1}) \ket{\Omega^{(0)}} = \mathcal{G}^{(\beta)}_w (\bm{\sigma};\ominus \bm{t}).
\end{equation}
Lemma \ref{l2} shows Eq.\eqref{th1} is valid for $\omega_N w^{-1} = \rho_k, k=0,1,\cdots,N-1$. Because any permutation $w \in S_N$ can be written as $w=\pi \rho_k$ for some $\pi \in S_{N-1}$ acting on the last $N-1$ elements and some $k=0,\cdots,N-1$, it remains to show the validity of Eq.\eqref{th1} for $\omega_N w^{-1} = \pi \rho_k$ with $\pi \in S_{N-1}$ acting on $\{ 1,2,\cdots,N-1 \}$. We show this by induction on the length of $\pi$. Let us assume
\[
\bra{\tilde{\pi} \rho_k} B_{N-1}(\sigma_1) \cdots B_2(\sigma_{N-2}) B_1(\sigma_{N-1}) \ket{\Omega^{(0)}} = \mathcal{G}^{(\beta)}_{ (\tilde{\pi} \rho_k)^{-1} \omega_N}(\bm{\sigma};\ominus \bm{t})
\]
for $\tilde{\pi} \in S_{N-1}$ with $l(\tilde{\pi}) \leq m-1$ for some $m \geq 1$. Now, for $\pi \in S_{N-1}$ with $l(\pi) = m$, there exists $\tilde{\pi} \in S_{N-1}$ with $l(\tilde{\pi}) = m-1$, such that $\pi = s_i \tilde{\pi}$. Then
\[
\begin{aligned}
&\bra{\pi \rho_k} B_{N-1}(\sigma_1) \cdots B_1(\sigma_{N-1}) \ket{\Omega^{(0)}}\\
=&\bra{s_i \tilde{\pi} \rho_k} B_{N-1}(\sigma_1) \cdots B_1(\sigma_{N-1}) \ket{\Omega^{(0)}}
\\
=&\partial^\beta_{N-i} \bra{\tilde{\pi} \rho_k} B_{N-1}(\sigma_1) \cdots B_1(\sigma_{N-1}) \ket{\Omega^{(0)}}\\
=&\partial^\beta_{N-i} \mathcal{G}^{(\beta)}_{\rho_k^{-1} \tilde{\pi}^{-1} \omega_N} (\bm{\sigma};\ominus \bm{t})\\
=&\mathcal{G}^{(\beta)}_{\rho_k^{-1} \tilde{\pi}^{-1} \omega_N s_{N-i}} (\bm{\sigma};\ominus \bm{t}) = \mathcal{G}^{(\beta)}_{\rho_k^{-1} \tilde{\pi}^{-1} s_i \omega_N} (\bm{\sigma};\ominus \bm{t})\\
=&\mathcal{G}^{(\beta)}_{(\pi \rho_k)^{-1} \omega_N} (\bm{\sigma};\ominus \bm{t}),
\end{aligned}
\]
where the second equality is due to Lemma \ref{l4}.
\end{proof}

\begin{remark}
Theorem \ref{thm1} can also be proved following \cite{BFHTW}, but here we have adopted a different approach which does not make use of the 2d lattice models.
\end{remark}

\begin{lem}\label{l5}
For any $w \in S_{N-1}$,
\[
B_{w(1)}(\sigma_1) \cdots B_{w(N-1)}(\sigma_{N-1}) = \bar{\partial}^\beta_{w^{-1}\omega_{N-1}} B_{N-1}(\sigma_1) \cdots B_1(\sigma_{N-1}).
\] 
\end{lem}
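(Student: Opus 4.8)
The plan is to prove the identity by induction on $m := \ell(w^{-1}\omega_{N-1}) = \binom{N-1}{2} - \ell(w)$, so that the case of a given $w$ is reduced to that of a permutation of strictly larger length. For the base case $m=0$ we have $w=\omega_{N-1}$, hence $w^{-1}\omega_{N-1}=e$ and $\bar\partial^\beta_e=\mathrm{id}$; since the one-line notation of $\omega_{N-1}$ is $(N-1,N-2,\dots,1)$, the claimed identity reads $B_{N-1}(\sigma_1)\cdots B_1(\sigma_{N-1}) = B_{N-1}(\sigma_1)\cdots B_1(\sigma_{N-1})$, which is trivial.

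For the inductive step assume $w\neq\omega_{N-1}$. The longest element is the unique permutation whose one-line notation is strictly decreasing, so there is a position $a$ with $w(a)<w(a+1)$; put $i:=w(a)$, $j:=w(a+1)$ (so $i<j$) and $w':=ws_a$. Then $\ell(w')=\ell(w)+1$, and in the string $B_{w'(1)}(\sigma_1)\cdots B_{w'(N-1)}(\sigma_{N-1})$ the slots $a$ and $a+1$ carry $B_j(\sigma_a)B_i(\sigma_{a+1})$, while the variables $\sigma_a,\sigma_{a+1}$ occur in no other slot. Applying $\bar\partial^\beta_a=(1+\beta\sigma_a)\partial_a$ (Eq.\eqref{dbar}) to this operator-valued polynomial in $\bm\sigma$ therefore merely swaps $\sigma_a\leftrightarrow\sigma_{a+1}$ inside those two factors, and the commutation relation \eqref{comm} with $x=\sigma_a$, $y=\sigma_{a+1}$,
\[
B_i(\sigma_a)B_j(\sigma_{a+1}) = (1+\beta\sigma_a)\,\frac{B_j(\sigma_a)B_i(\sigma_{a+1})-B_j(\sigma_{a+1})B_i(\sigma_a)}{\sigma_a-\sigma_{a+1}},
\]
gives at once
\[
B_{w(1)}(\sigma_1)\cdots B_{w(N-1)}(\sigma_{N-1}) = \bar\partial^\beta_a\big[\, B_{w'(1)}(\sigma_1)\cdots B_{w'(N-1)}(\sigma_{N-1})\,\big].
\]
(The formal denominator $\sigma_a-\sigma_{a+1}$ is harmless: each $B_k(u)$ is polynomial in $u$ and both sides are genuine operator-valued polynomials.)

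It remains to feed in the inductive hypothesis. From $w'=ws_a$ we get $(w')^{-1}=s_aw^{-1}$, hence $(w')^{-1}\omega_{N-1}=s_a\cdot(w^{-1}\omega_{N-1})$ and $\ell\big((w')^{-1}\omega_{N-1}\big)=m-1$. By induction, $B_{w'(1)}(\sigma_1)\cdots B_{w'(N-1)}(\sigma_{N-1}) = \bar\partial^\beta_{(w')^{-1}\omega_{N-1}}\,B_{N-1}(\sigma_1)\cdots B_1(\sigma_{N-1})$. Since $w^{-1}\omega_{N-1}=s_a\cdot(w')^{-1}\omega_{N-1}$ with length going up by one, prepending $s_a$ to any reduced word of $(w')^{-1}\omega_{N-1}$ produces a reduced word of $w^{-1}\omega_{N-1}$, so $\bar\partial^\beta_a\bar\partial^\beta_{(w')^{-1}\omega_{N-1}}=\bar\partial^\beta_{w^{-1}\omega_{N-1}}$; chaining this with the two displayed identities closes the induction.

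The only delicate point is this last step, where one uses that $\bar\partial^\beta_v$ depends only on $v$ and not on the chosen reduced decomposition — equivalently that $\bar\partial^\beta_i\bar\partial^\beta_v=\bar\partial^\beta_{s_iv}$ whenever $\ell(s_iv)>\ell(v)$. This is the $\bar\partial^\beta$-analogue (via $\bar\partial^\beta_i=\partial^\beta_i+\beta$, Eq.\eqref{dbar+}) of the well-definedness of the $\beta$-divided differences $\partial^\beta_i$, recorded in the appendix; everything else is a direct manipulation of the commutation relation \eqref{comm} and of lengths in $S_{N-1}$.
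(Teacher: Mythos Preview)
Your proof is correct and follows essentially the same approach as the paper: both repeatedly apply the commutation relation \eqref{comm} to sort the subscripts of the $B$-operators into descending order, each swap contributing a factor $\bar\partial^\beta_a$, and then identify the resulting composite with $\bar\partial^\beta_{w^{-1}\omega_{N-1}}$. The paper carries this out by fixing a reduced word $\pi=s_{i_1}\cdots s_{i_p}$ for $\pi=w\omega_{N-1}$ and exhibiting the explicit sequence $s_{N-i_p-1},\dots,s_{N-i_1-1}$ as a reduced word for $w^{-1}\omega_{N-1}=\omega_{N-1}\pi^{-1}\omega_{N-1}$, whereas you package the same iteration as a formal induction on $\ell(w^{-1}\omega_{N-1})$ and invoke independence of $\bar\partial^\beta_v$ on reduced decompositions at the end; these are minor presentational differences rather than distinct strategies.
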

\begin{proof}
If we write $w=\pi \omega_{N-1}$, then
\begin{equation}\label{BBBB}
B_{w(1)}(\sigma_1) \cdots B_{w(N-1)}(\sigma_{N-1}) = B_{\pi(N-1)}(\sigma_1) \cdots B_{\pi(1)}(\sigma_{N-1}).
\end{equation}
Let $\pi = s_{i_1} s_{i_2} \cdots s_{i_p}$, where $p = l(\pi)$. Then we can apply the transpositions $s_{N-i_p-1}, \cdots, s_{N-i_2-1}$, $s_{N-i_1-1}$ successively to the subscripts of the right hand side of \eqref{BBBB} to bring them to the strictly descending order, where $s_j$ exchanges the $j$th and $(j+1)$th subindices at each step. Since we need at least $p$ transpositions to complete the ordering, every transposition must bring a smaller subindex to the right of a neighboring larger subindex. According to \eqref{comm}, the application of $s_j$ must be accompanied by the action of $\bar{\partial}^\beta_j$ to the right hand side of \eqref{BBBB} in order to keep the equality. Therefore,
\[
B_{\pi(N-1)}(\sigma_1) \cdots B_{\pi(1)}(\sigma_{N-1}) = \bar{\partial}^\beta_{N-i_p-1} \cdots \bar{\partial}^\beta_{N-i_2-1} \bar{\partial}^\beta_{N-i_1-1} B_{N-1}(\sigma_1) \cdots B_1(\sigma_{N-1}).
\]
Because of the identity
\[
s_{N-i_p-1} \cdots s_{N-i_2-1} s_{N-i_1-1} = \omega_{N-1} s_{i_p} \cdots s_{i_2} s_{i_1} \omega_{N-1} = \omega_{N-1} \pi^{-1} \omega_{N-1} = w^{-1} \omega_{N-1},
\]
the proof is complete.
\end{proof}
\begin{remark}
Lemma \ref{l5} allows us to deduce the matrix elements of $B_{w(1)}(\sigma_1) \cdots B_{w(N-1)}(\sigma_{N-1})$ from Theorem \ref{thm1}.
\end{remark}

\begin{example}\label{ex1}
For $N=3$, we have, by Eq.\eqref{exdR} or Theorem \ref{thm1}, 
\[
\begin{aligned}
&B_2(\sigma_1) B_1(\sigma_2) \ket{000}\\
=& \ket{210} + (\sigma_1 \ominus t_1) \ket{120} + [(\sigma_1 \ominus t_1) + (\sigma_2 \ominus t_2) + \beta (\sigma_1 \ominus t_1)(\sigma_2 \ominus t_2)] \ket{201}\\
&+(\sigma_1 \ominus t_1)(\sigma_1 \ominus t_2)\ket{102} + (\sigma_1 \ominus t_1)(\sigma_2 \ominus t_1)\ket{021} + (\sigma_1 \ominus t_1)(\sigma_1 \ominus t_2)(\sigma_2 \ominus t_1)\ket{012}\\
=&\mathcal{G}^{(\beta)}_{123}(\sigma_1,\sigma_2;\ominus t_1,\ominus t_2) \ket{210}+\mathcal{G}^{(\beta)}_{213}(\sigma_1,\sigma_2;\ominus t_1,\ominus t_2) \ket{120} + 
\mathcal{G}^{(\beta)}_{132}(\sigma_1,\sigma_2;\ominus t_1,\ominus t_2) \ket{201}\\
&+\mathcal{G}^{(\beta)}_{312}(\sigma_1,\sigma_2;\ominus t_1,\ominus t_2) \ket{102} +
\mathcal{G}^{(\beta)}_{231}(\sigma_1,\sigma_2;\ominus t_1,\ominus t_2) \ket{021} +
\mathcal{G}^{(\beta)}_{321}(\sigma_1,\sigma_2;\ominus t_1,\ominus t_2) \ket{012},
\end{aligned}
\]
while
\begin{equation}\label{exam}
\begin{aligned}
&B_1(\sigma_1)B_2(\sigma_2) \ket{000} \\
=& (1+\beta(\sigma_1 \ominus t_1)) \ket{120} + [(\sigma_2 \ominus t_1) (1+\beta(\sigma_1 \ominus t_2))+(\sigma_1 \ominus t_2)(1+\beta(\sigma_1\ominus t_1))] \ket{102}\\
&+ (\sigma_1 \ominus t_1)(\sigma_2 \ominus t_1)(1+\beta(\sigma_1 \ominus t_2)) \ket{012}.
\end{aligned}
\end{equation}
It is easy to verify that Eq.\eqref{exam} is consistent with Lemma \ref{l5} since we have
\[
\begin{aligned}
&\bar{\partial}^\beta_1 \mathcal{G}^{(\beta)}_{213}(\sigma_1,\sigma_2;\ominus t_1,\ominus t_2) = 1+\beta (\sigma_1 \ominus t_1),\\ 
&\bar{\partial}^\beta_1 \mathcal{G}^{(\beta)}_{321}(\sigma_1,\sigma_2;\ominus t_1,\ominus t_2) = (\sigma_1 \ominus t_1)(\sigma_2 \ominus t_1)(1+\beta (\sigma_1 \ominus t_2)),\\
&\bar{\partial}^\beta_1 \mathcal{G}^{(\beta)}_{312}(\sigma_1,\sigma_2;\ominus t_1,\ominus t_2) =(\sigma_2 \ominus t_1) (1+\beta(\sigma_1 \ominus t_2))+(\sigma_1 \ominus t_2)(1+\beta(\sigma_1\ominus t_1)),\\
&\bar{\partial}^\beta_1 \mathcal{G}^{(\beta)}_{231}(\sigma_1,\sigma_2;\ominus t_1,\ominus t_2)=\bar{\partial}^\beta_1 \mathcal{G}^{(\beta)}_{132}(\sigma_1,\sigma_2;\ominus t_1,\ominus t_2)=\bar{\partial}^\beta_1 \mathcal{G}^{(\beta)}_{123}(\sigma_1,\sigma_2;\ominus t_1,\ominus t_2)=0.
\end{aligned}
\]
\end{example}

Now we would like to apply Theorem \ref{thm1} and Lemma \ref{l5} to the Bethe ansatz state \eqref{BEstate}. First, we show the following 
\begin{thm}\label{cauchy}
For any $w \in S_n$, we have the following form of the Cauchy identity
\[
\mathcal{G}^{(\beta)}_{w}(\bm{x};\ominus\bm{t}) = \sum_{v \in S_n} \mathcal{G}^{(\beta)}_{v}(\bm{x};\ominus\bm{z}) \bar{\partial}^\beta_v \mathcal{G}^{(\beta)}_{w}(\bm{z};\ominus\bm{t}).
\]
\end{thm}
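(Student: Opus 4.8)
The plan is to argue by downward induction on $\binom{n}{2}-l(w)$, that is, starting from $w=\omega_n$ and descending in length. I will use three standard facts about the double $\beta$-Grothendieck polynomials and the operators of Eq.~\eqref{didi} and \eqref{dbar}: (i) the recursion $\partial^\beta_i\,\mathcal{G}^{(\beta)}_w(\bm{x};\bm{y})=\mathcal{G}^{(\beta)}_{ws_i}(\bm{x};\bm{y})$ if $l(ws_i)<l(w)$ and $\partial^\beta_i\,\mathcal{G}^{(\beta)}_w(\bm{x};\bm{y})=-\beta\,\mathcal{G}^{(\beta)}_w(\bm{x};\bm{y})$ if $l(ws_i)>l(w)$, where $\partial^\beta_i$ acts on the first alphabet; (ii) that $\partial^\beta_i$ and $\bar\partial^\beta_i$ obey the braid relations, so $\bar\partial^\beta_{vs_i}=\bar\partial^\beta_v\bar\partial^\beta_i$ whenever $l(vs_i)=l(v)+1$ (this well-definedness is implicit in \eqref{ds} and \eqref{dbars}); and (iii) Eq.~\eqref{dbar+}, $\bar\partial^\beta_i=\partial^\beta_i+\beta$. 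Together (i) and (iii) give $\bar\partial^\beta_i\,\mathcal{G}^{(\beta)}_w=0$ whenever $l(ws_i)>l(w)$.

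For the inductive step, let $l(w)<\binom{n}{2}$, pick $i$ with $l(ws_i)=l(w)+1$, and set $u:=ws_i$, so $w=us_i$ with $l(us_i)<l(u)$; assume the identity for $u$. Applying $\partial^\beta_i$ in the $\bm{x}$-variables to the identity for $u$ turns the left side into $\mathcal{G}^{(\beta)}_w(\bm{x};\ominus\bm{t})$ by (i). On the right side the coefficients $\bar\partial^\beta_v\mathcal{G}^{(\beta)}_u(\bm{z};\ominus\bm{t})$ are free of $\bm{x}$, so $\partial^\beta_i$ passes through and acts by (i) on each $\mathcal{G}^{(\beta)}_v(\bm{x};\ominus\bm{z})$; splitting the sum by the sign of $l(vs_i)-l(v)$ and reindexing $v\mapsto vs_i$ in the group with $l(vs_i)<l(v)$ yields
\[
\mathcal{G}^{(\beta)}_w(\bm{x};\ominus\bm{t})=\sum_{v\,:\,l(vs_i)>l(v)}\mathcal{G}^{(\beta)}_v(\bm{x};\ominus\bm{z})\,\bigl(\bar\partial^\beta_{vs_i}-\beta\,\bar\partial^\beta_v\bigr)\mathcal{G}^{(\beta)}_u(\bm{z};\ominus\bm{t}).
\]
On these $v$, (ii) gives $\bar\partial^\beta_{vs_i}-\beta\,\bar\partial^\beta_v=\bar\partial^\beta_v(\bar\partial^\beta_i-\beta)=\bar\partial^\beta_v\,\partial^\beta_i$, and $\partial^\beta_i\mathcal{G}^{(\beta)}_u(\bm{z};\ominus\bm{t})=\mathcal{G}^{(\beta)}_w(\bm{z};\ominus\bm{t})$ by (i) (now acting on $\bm{z}$). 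Finally the sum extends to all of $S_n$: if $l(vs_i)<l(v)$ then $v=v's_i$ with $l(v's_i)=l(v')+1$, hence $\bar\partial^\beta_v\mathcal{G}^{(\beta)}_w=\bar\partial^\beta_{v'}\bigl(\bar\partial^\beta_i\mathcal{G}^{(\beta)}_w\bigr)=0$ since $l(ws_i)>l(w)$. This gives the identity for $w$.

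The remaining, and genuinely substantive, point is the base case $w=\omega_n$, which by Eq.~\eqref{highestGroth} reads $\prod_{i+j\le n}(x_i\ominus t_j)=\sum_{v\in S_n}\mathcal{G}^{(\beta)}_v(\bm{x};\ominus\bm{z})\,\bar\partial^\beta_v\,\mathcal{G}^{(\beta)}_{\omega_n}(\bm{z};\ominus\bm{t})$. I would deduce it from the dual-basis (evaluation) property of the double $\beta$-Grothendieck polynomials — the double analogue of the Fomin--Kirillov formula \cite{FK93} — namely that $\{\mathcal{G}^{(\beta)}_v(\bm{x};\ominus\bm{z})\}_{v\in S_n}$ is a basis of the relevant free module over the coefficient ring in $(\beta,\bm{z})$ (the span of the monomials below the staircase), a module containing $\mathcal{G}^{(\beta)}_w(\bm{x};\ominus\bm{t})$, and that $\bar\partial^\beta_u\,\mathcal{G}^{(\beta)}_v(\bm{x};\ominus\bm{z})\big|_{\bm{x}=\bm{z}}=\delta_{uv}$, which in turn follows from (i) together with the diagonal vanishing $\mathcal{G}^{(\beta)}_v(\bm{z};\ominus\bm{z})=\delta_{v,e}$. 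Granting this, expand $\mathcal{G}^{(\beta)}_{\omega_n}(\bm{x};\ominus\bm{t})=\sum_v c_v\,\mathcal{G}^{(\beta)}_v(\bm{x};\ominus\bm{z})$, apply $\bar\partial^\beta_u$ in $\bm{x}$, and set $\bm{x}=\bm{z}$ to read off $c_u=\bar\partial^\beta_u\mathcal{G}^{(\beta)}_{\omega_n}(\bm{z};\ominus\bm{t})$. (The same manipulation applied verbatim to a general $\mathcal{G}^{(\beta)}_w(\bm{x};\ominus\bm{t})$ gives a one-step, non-inductive proof of the whole theorem; I keep the induction only because it confines the dual-basis input to a single permutation.) A more hands-on alternative for the base case is to expand $\prod_{i+j\le n}(x_i\ominus t_j)$ using the cocycle identity $x_i\ominus t_j=(x_i\ominus z_k)\oplus(z_k\ominus t_j)$ and to match terms against the definition \eqref{otherGroth} of $\mathcal{G}^{(\beta)}_v$ as an iterated $\partial^\beta$-image of the staircase product; on that path the combinatorial bookkeeping is the main obstacle.
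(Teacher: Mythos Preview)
Your argument is correct but takes a genuinely different route from the paper's. The paper does not induct at all: it simply rewrites the generalized Cauchy identity of \cite{BFHTW} (quoted in the appendix as Eq.~\eqref{Gcauchy}) by swapping the two alphabets via $\mathcal{G}^{(\beta)}_w(\bm{x};\bm{y})=\mathcal{G}^{(\beta)}_{w^{-1}}(\bm{y};\bm{x})$ and then recognising the biaxial polynomial $\mathcal{G}^{(\beta)}_{v^{-1},w}(\ominus\bm{t};\bm{z})$ as $\bar\partial^\beta_v\mathcal{G}^{(\beta)}_{w^{-1}}(\bm{z};\ominus\bm{t})$ from its definition \eqref{dG1}--\eqref{dG2}. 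So the paper's proof is a three-line deduction from an externally cited result.

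Your approach is an interpolation argument based on the dual-basis identity $\bar\partial^\beta_u\,\mathcal{G}^{(\beta)}_v(\bm{x};\ominus\bm{z})\big|_{\bm{x}=\bm{z}}=\delta_{uv}$, which you correctly derive from the diagonal vanishing $\mathcal{G}^{(\beta)}_v(\bm{z};\ominus\bm{z})=\delta_{v,e}$ together with the recursion (i) and \eqref{dbar+}. This is more self-contained: it avoids the biaxial polynomials and the lattice-model input of \cite{BFHTW}, at the cost of importing the diagonal vanishing (standard, but not stated in the paper). In fact your ``one-step'' version is exactly the paper's Corollary~\ref{interpolation} proved in reverse order --- the paper deduces the interpolation formula \emph{from} Theorem~\ref{cauchy}, whereas you prove Theorem~\ref{cauchy} \emph{as} an instance of interpolation. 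Note that your downward induction, while correct, is redundant: the base case $w=\omega_n$ already invokes the full dual-basis property for all $v\in S_n$, so it does not actually confine the external input to a single permutation as you suggest; the one-step proof is the cleaner packaging of your idea. (Minor quibble: ``downward induction on $\binom{n}{2}-l(w)$'' should read ``downward induction on $l(w)$''; your argument assumes the identity for $u$ with $l(u)=l(w)+1$.)
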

\begin{proof}
The generalized Cauchy identity Eq.\eqref{Gcauchy} can be rewritten as (notice that $\ominus \ominus z = z$)
\begin{equation}\label{id1}
\mathcal{G}^{(\beta)}_{w}(\ominus\bm{t};\bm{x}) = \sum_{v \in S_n} \mathcal{G}^{(\beta)}_{v}(\bm{x};\ominus\bm{z}) \mathcal{G}^{(\beta)}_{v^{-1},w}(\ominus\bm{t};\bm{z}).
\end{equation}
From Eq.\eqref{Gswitch}, \eqref{dG1} and \eqref{dG2}, we get
\[
\mathcal{G}^{(\beta)}_{v^{-1},w}(\ominus\bm{t};\bm{z}) = \bar{\partial}^\beta_{v,\bm{z}} \mathcal{G}^{(\beta)}_{1,w} (\ominus\bm{t};\bm{z}) = \bar{\partial}^\beta_{v} \mathcal{G}^{(\beta)}_{w^{-1}} (\bm{z};\ominus\bm{t}).
\]
Plugging the identity above in \eqref{id1}, we arrive at
\[
\mathcal{G}^{(\beta)}_{w^{-1}}(\bm{x};\ominus\bm{t}) = \sum_{v \in S_n} \mathcal{G}^{(\beta)}_{v}(\bm{x};\ominus\bm{z}) \bar{\partial}^\beta_{v}\mathcal{G}^{(\beta)}_{w^{-1}}(\bm{z};\ominus\bm{t}).
\]
Then the theorem is proved by replacing $w^{-1}$ with $w$.
\end{proof}

Let $L_n(\bm{t},\beta)$ be the submodule of $\mathbb{C}(\bm{t},\beta)[x_1,\cdots,x_n]$ defined by
\begin{equation}\label{Ln}
L_n(\bm{t},\beta) = \mathrm{span}_{\mathbb{C}(\bm{t},\beta)}\{ x_1^{i_1}x_2^{i_2} \cdots x_n^{i_n}~|~0 \leq i_j \leq n-j \}.
\end{equation}
\begin{cor}\label{interpolation}
For any $F(\bm{x};\bm{t},\beta) \in L_n(\bm{t},\beta)$, we have
\begin{equation}\label{Finter}
F(\bm{x};\bm{t},\beta) = \sum_{v \in S_n} \mathcal{G}^{(\beta)}_{v}(\bm{x};\ominus\bm{z}) \bar{\partial}^\beta_{v}F(\bm{z};\bm{t},\beta).
\end{equation}
\end{cor}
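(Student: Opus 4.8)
The plan is to deduce Corollary~\ref{interpolation} from Theorem~\ref{cauchy} by a linearity argument, the only extra ingredient being that the double $\beta$-Grothendieck polynomials indexed by $S_n$ form a $\mathbb{C}(\bm t,\beta)$-basis of $L_n(\bm t,\beta)$.

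First I would note that both sides of \eqref{Finter} depend $\mathbb{C}(\bm t,\beta)$-linearly on $F$. This is obvious for the left-hand side, and for the right-hand side it holds because the operators $\bar{\partial}^\beta_v$ act on the auxiliary variables $\bm z$ and commute with multiplication by elements of $\mathbb{C}(\bm t,\beta)$, while the coefficients $\mathcal{G}^{(\beta)}_v(\bm x;\ominus\bm z)$ do not involve $F$. Consequently, if \eqref{Finter} holds for every member of a family of polynomials in $L_n(\bm t,\beta)$, it holds for every element of their $\mathbb{C}(\bm t,\beta)$-span. Now specialize $F(\bm x;\bm t,\beta)=\mathcal{G}^{(\beta)}_w(\bm x;\ominus\bm t)$ for $w\in S_n$: then $F(\bm z;\bm t,\beta)=\mathcal{G}^{(\beta)}_w(\bm z;\ominus\bm t)$, so \eqref{Finter} becomes exactly the Cauchy identity of Theorem~\ref{cauchy}, which is already proved. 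Hence \eqref{Finter} is valid for all $F$ in the $\mathbb{C}(\bm t,\beta)$-span of $\{\mathcal{G}^{(\beta)}_w(\bm x;\ominus\bm t)\mid w\in S_n\}$.

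It remains to show that this span is all of $L_n(\bm t,\beta)$. By \eqref{Ln} the module $L_n(\bm t,\beta)$ is free of rank $n!$ over $\mathbb{C}(\bm t,\beta)$, and there are $n!$ permutations, so it suffices to verify that each $\mathcal{G}^{(\beta)}_w(\bm x;\ominus\bm t)$ lies in $L_n(\bm t,\beta)$ and that these $n!$ polynomials are $\mathbb{C}(\bm t,\beta)$-linearly independent. For the containment, start from $\mathcal{G}^{(\beta)}_{\omega_n}(\bm x;\bm y)=\prod_{i+j\le n}(x_i+y_j+\beta x_iy_j)$ (Eq.~\eqref{highestGroth}), which has degree at most $n-i$ in each $x_i$ and hence lies in $L_n$ once $\bm y$ is set to $\ominus\bm t$ and absorbed into the coefficient field; then observe that every $\mathcal{G}^{(\beta)}_w$ with $w\in S_n$ is obtained from $\mathcal{G}^{(\beta)}_{\omega_n}$ by successive application of the operators $\partial^\beta_i$, $1\le i\le n-1$ (Eq.~\eqref{otherGroth}), and that $L_n$ is stable under each $\partial^\beta_i$: applied to an element of $L_n$, the numerator in \eqref{didi} is antisymmetric under $x_i\leftrightarrow x_{i+1}$ and has degree $\le n-i$ in both $x_i$ and $x_{i+1}$, so the quotient by $x_i-x_{i+1}$ has degree $\le n-i-1$ in each, which respects the staircase bounds. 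For the linear independence I would invoke the classical fact that the lowest-$\bm x$-degree component of $\mathcal{G}^{(\beta)}_w(\bm x;\bm y)$ is the double Schubert polynomial $\mathfrak{S}_w(\bm x;\bm y)$ and that the $\{\mathfrak{S}_w\mid w\in S_n\}$ are linearly independent over $\mathbb{C}(\bm y)$ (their $\bm y=0$ specializations are the ordinary Schubert polynomials, with pairwise distinct leading monomials $\bm x^{c(w)}$); since $y_j=\ominus t_j$ is an involutive change of variables, $\mathbb{C}(\ominus\bm t,\beta)=\mathbb{C}(\bm t,\beta)$ and independence over this field follows. Alternatively one may simply cite that $\{\mathcal{G}^{(\beta)}_w(\bm x;\bm y)\mid w\in S_n\}$ is a basis of $L_n$ over $\mathbb{C}(\bm y,\beta)$.

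Putting the pieces together, \eqref{Finter} holds on a spanning set of $L_n(\bm t,\beta)$ and both sides are $\mathbb{C}(\bm t,\beta)$-linear in $F$, so it holds identically on $L_n(\bm t,\beta)$. The one point requiring a little care is the linear independence over the field $\mathbb{C}(\bm t,\beta)$ (as opposed to over $\mathbb{C}[\beta]$ or $\mathbb{C}$), but this is standard; the remaining steps are elementary bookkeeping with degree bounds together with the linearity already built into Theorem~\ref{cauchy}.
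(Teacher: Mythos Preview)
Your proposal is correct and follows essentially the same approach as the paper: reduce to Theorem~\ref{cauchy} by linearity once one knows that $\{\mathcal{G}^{(\beta)}_w(\bm x;\ominus\bm t)\mid w\in S_n\}$ is a $\mathbb{C}(\bm t,\beta)$-basis of $L_n(\bm t,\beta)$. The only difference is that the paper simply cites this basis fact from \cite[Proposition~2.7]{LT06}, whereas you sketch a direct argument (staircase stability under $\partial^\beta_i$ and linear independence via specialization to Schubert polynomials); both routes are acceptable and lead to the same conclusion.
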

\begin{proof}
Since $\{ \mathcal{G}^{(\beta)}_w(\bm{x};\ominus \bm{t})~|~w \in S_n \}$ is a $\mathbb{C}(\bm{t},\beta)$-basis of $L_n(\bm{t},\beta)$ (as are the Schubert and Grothendieck polynomials \cite[Proposition 2.7]{LT06}), Eq.\eqref{Finter} follows from Theorem \ref{cauchy} and linearity of $\bar{\partial}^\beta_v$.
\end{proof}

In section \ref{sec:QK}, we have seen that $\sigma^{(N-i)}_a$ ($\beta = 0$) or $1-\sigma^{(N-i)}_a$ ($\beta = -1$), $a=1,\cdots, i$, can be identified with the Chern roots of $\mathcal{S}_i$ when $\sigma^{(N-i)}_a$'s satisfy the Bethe ansatz equations Eq.\eqref{BAE1}-\eqref{BAE3}. We will show that the Bethe ansatz state \eqref{BEstate} generates the double $\beta$-Grothendieck polynomials when this identification is implemented. For this purpose, we need the following
\begin{lem}\label{l7}
For $\pi \in S_n$, if $e_l(\sigma_1,\cdots,\sigma_{n-1})$ is identified with $e_l(x_1,\cdots,x_{n-1})$ for every $l=1,\cdots,n-1$, then
\[
\begin{aligned}
&\sum_{w\in S_{n-1}} \mathcal{G}^{(\beta)}_w(x_1,\cdots,x_{n-2};\ominus\sigma_1,\cdots,\ominus\sigma_{n-2}) \cdot \bar{\partial}^\beta_w \mathcal{G}^{(\beta)}_{\pi}(\sigma_1,\cdots,\sigma_{n-1};\ominus t_1,\cdots,\ominus t_{n-1})\\
=&\mathcal{G}^{(\beta)}_\pi (x_1,\cdots,x_{n-1};\ominus t_1,\cdots,\ominus t_{n-1}).
\end{aligned}
\]
\end{lem}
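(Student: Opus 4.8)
The plan is to deduce Lemma \ref{l7} from the Cauchy identity of Theorem \ref{cauchy} by specializing its intermediate variables and isolating a parabolic subsum. First apply Theorem \ref{cauchy} to $\pi\in S_n$ with the intermediate variables taken to be $\bm{\sigma}=(\sigma_1,\dots,\sigma_{n-1})$:
\[
\mathcal{G}^{(\beta)}_\pi(x_1,\dots,x_{n-1};\ominus\bm{t})=\sum_{v\in S_n}\mathcal{G}^{(\beta)}_v(x_1,\dots,x_{n-1};\ominus\sigma_1,\dots,\ominus\sigma_{n-1})\,\bar{\partial}^{\beta}_v\,\mathcal{G}^{(\beta)}_\pi(\sigma_1,\dots,\sigma_{n-1};\ominus\bm{t}),
\]
which is an exact identity, not yet using the hypothesis $e_l(\bm{\sigma})=e_l(\bm{x})$. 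Split $S_n=S_{n-1}\sqcup(S_n\setminus S_{n-1})$, where $S_{n-1}=\langle s_1,\dots,s_{n-2}\rangle$ is the stabilizer of $n$. For $v\in S_{n-1}$ the double $\beta$-Grothendieck polynomial is stable, so $\mathcal{G}^{(\beta)}_v(x_1,\dots,x_{n-1};\ominus\sigma_1,\dots,\ominus\sigma_{n-1})=\mathcal{G}^{(\beta)}_v(x_1,\dots,x_{n-2};\ominus\sigma_1,\dots,\ominus\sigma_{n-2})$, while $\bar{\partial}^{\beta}_v$ only uses $s_1,\dots,s_{n-2}$; hence the $v\in S_{n-1}$ part of the right-hand side is exactly the left-hand side of Lemma \ref{l7}. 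It then suffices to prove that, after imposing $e_l(\bm{\sigma})=e_l(\bm{x})$ for $l=1,\dots,n-1$, every term with $v\in S_n\setminus S_{n-1}$ is zero.

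For the vanishing I would invoke the localization (GKM) description of the double $\beta$-Grothendieck polynomials. The relations $e_l(\bm{\sigma})=e_l(\bm{x})$ cut out the locus whose points are precisely the pairs $(\bm{x},w\cdot\bm{x})$ with $w\in S_{n-1}$, and substituting such a pair into $\mathcal{G}^{(\beta)}_v(\bm{x};\ominus\bm{\sigma})$ is the restriction of the Schubert class of $v$ to the $T$-fixed point labeled by $w$, which vanishes unless $v\le w$ in Bruhat order. If $v\notin S_{n-1}$ then $v(j)=n$ for some $j\le n-1$, whereas $w(k)<n$ for all $k\le n-1$ when $w\in S_{n-1}$; the rank-matrix criterion for Bruhat order then gives $v\not\le w$ for every $w\in S_{n-1}$. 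Hence $\mathcal{G}^{(\beta)}_v(\bm{x};\ominus\bm{\sigma})$ vanishes on the whole locus for $v\notin S_{n-1}$, so it belongs to the ideal $\big(e_l(\bm{\sigma})-e_l(\bm{x}):l=1,\dots,n-1\big)$ — either because this complete-intersection ideal is radical, or, concretely, because the $(n-1)!$ permutation points detect ideal membership (the matrix of $\{\mathcal{G}^{(\beta)}_u(\bm{\sigma};\ominus\bm{x}):u\in S_{n-1}\}$ evaluated at them is Bruhat-triangular, hence invertible). Therefore all $v\in S_n\setminus S_{n-1}$ terms disappear under the identification and Lemma \ref{l7} follows.

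I expect the vanishing step to be the main obstacle: it relies on having the restriction/localization formula for $\mathcal{G}^{(\beta)}_v$ at hand (this belongs to the background on double Grothendieck polynomials collected in the appendix) together with the compatibility of the parabolic $S_{n-1}\subset S_n$ with the ideal $(e_l(\bm{\sigma})-e_l(\bm{x}))$ and with Bruhat order. Should one wish to avoid geometry, the same step can instead be carried out by induction on $n$ and on $l(\pi)$ — expand $\mathcal{G}^{(\beta)}_\pi$ through the $\bar{\partial}^{\beta}$-recursion, reduce its degree in the $\sigma$-variables modulo the relations $e_l(\bm{\sigma})=e_l(\bm{x})$ until it lands in $L_{n-1}$, and then apply Corollary \ref{interpolation} in $n-2$ variables — at the cost of tracking the degree reduction explicitly.
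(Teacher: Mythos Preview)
Your argument is correct but proceeds along a genuinely different line from the paper. The paper does not pass through the full $S_n$-Cauchy identity at all: instead it expands $\mathcal{G}^{(\beta)}_\pi(\bm{\sigma};\ominus\bm{t})$ in the ``staircase'' basis $e_{i_1}(\sigma_1)e_{i_2}(\sigma_1,\sigma_2)\cdots e_{i_{n-1}}(\sigma_1,\dots,\sigma_{n-1})$ of $L_n$, pulls the fully symmetric factor $e_{i_{n-1}}(\sigma_1,\dots,\sigma_{n-1})$ through each $\bar{\partial}^\beta_w$ (this factor is $S_{n-1}$-invariant, hence commutes with all $\bar{\partial}^\beta_i$), applies Corollary~\ref{interpolation} in the $(n-2)$ variables $\sigma_1,\dots,\sigma_{n-2}$ to what remains, and only then substitutes $e_{i_{n-1}}(\bm{\sigma})=e_{i_{n-1}}(\bm{x})$. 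This is entirely elementary and self-contained within the paper's toolbox. Your route, by contrast, buys a cleaner conceptual picture---the identification $e_l(\bm{\sigma})=e_l(\bm{x})$ is the restriction to the parabolic $T$-fixed locus, and the $S_n\setminus S_{n-1}$ terms die by Bruhat vanishing---at the price of importing two facts not established in the paper: the localization/vanishing formula $\mathcal{G}^{(\beta)}_v|_w=0$ for $v\not\le w$, and the radicality of the ideal $\big(e_l(\bm{\sigma})-e_l(\bm{x})\big)$ (your triangularity remark handles this, but it is still an extra step). Both proofs ultimately rest on Theorem~\ref{cauchy}; the paper applies its corollary one rank down, you apply the theorem at full rank and then prune.
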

\begin{proof}
Let $L_n(\bm{t},\beta)$ be the submodule defined by \eqref{Ln}. Then, since $\{ \mathcal{G}^{(\beta)}_w(\bm{x};\ominus \bm{t})~|~w \in S_n \}$ and \\$\{ e_{i_1}(x_1)e_{i_2}(x_1,x_2)\cdots e_{i_{n-1}}(x_1,\cdots,x_{n-1})~|~0\leq i_j \leq j \}$ are both $\mathbb{C}(\bm{t},\beta)$-bases of $L_n(\bm{t},\beta)$, we have the expansion
\[
\begin{aligned}
&\mathcal{G}^{(\beta)}_\pi (\sigma_1,\cdots,\sigma_{n-1};\ominus t_1,\cdots,\ominus t_{n-1})
\\
=& \sum_{i_1,i_2,\cdots,i_{n-1}} C^{\pi}_{i_1 \cdots i_{n-1}}(\bm{t},\beta) e_{i_1}(\sigma_1)e_{i_2}(\sigma_1,\sigma_2)\cdots e_{i_{n-1}}(\sigma_1,\cdots,\sigma_{n-1}),
\end{aligned}
\]
where $C^{\pi}_{i_1 \cdots i_{n-1}}(\bm{t},\beta) \in \mathbb{C}(\bm{t},\beta)$. 

Therefore
\[
\begin{aligned}
&\sum_{w\in S_{n-1}} \mathcal{G}^{(\beta)}_w(x_1,\cdots,x_{n-2};\ominus\sigma_1,\cdots,\ominus\sigma_{n-2}) \cdot \bar{\partial}^\beta_w \mathcal{G}^{(\beta)}_{\pi}(\sigma_1,\cdots,\sigma_{n-1};\ominus t_1,\cdots,\ominus t_{n-1})\\
=&\sum_{w\in S_{n-1}} \mathcal{G}^{(\beta)}_w(x_1,\cdots,x_{n-2};\ominus\sigma_1,\cdots,\ominus\sigma_{n-2}) \\
&\quad \quad \quad \cdot \bar{\partial}^\beta_w \left[ 
\sum_{i_1,i_2,\cdots,i_{n-1}} C^{\pi}_{i_1 \cdots i_{n-1}}(\bm{t},\beta) e_{i_1}(\sigma_1)e_{i_2}(\sigma_1,\sigma_2)\cdots e_{i_{n-1}}(\sigma_1,\cdots,\sigma_{n-1})\right]\\
=&\sum_{i_1,i_2,\cdots,i_{n-1}} C^{\pi}_{i_1 \cdots i_{n-1}}(\bm{t},\beta)
e_{i_{n-1}}(\sigma_1,\cdots,\sigma_{n-1})\\
&\cdot \left[
\sum_{w\in S_{n-1}} \mathcal{G}^{(\beta)}_w(x_1,\cdots,x_{n-2};\ominus\sigma_1,\cdots,\ominus\sigma_{n-2}) \cdot \bar{\partial}^{\beta}_w \left(e_{i_1}(\sigma_1)e_{i_2}(\sigma_1,\sigma_2)\cdots e_{i_{n-2}}(\sigma_1,\cdots,\sigma_{n-2})\right) \right] \\
=&\sum_{i_1,i_2,\cdots,i_{n-1}} C^\pi_{i_1 \cdots i_{n-1}}(\bm{t},\beta) e_{i_1}(x_1)e_{i_2}(x_1,x_2)\cdots e_{i_{n-2}}(x_1,\cdots,x_{n-2}) e_{i_{n-1}}(\sigma_1,\cdots,\sigma_{n-1})\\
=&\sum_{i_1,i_2,\cdots,i_{n-1}} C^\pi_{i_1 \cdots i_{n-1}}(\bm{t},\beta) e_{i_1}(x_1)e_{i_2}(x_1,x_2)\cdots e_{i_{n-2}}(x_1,\cdots,x_{n-2}) e_{i_{n-1}}(x_1,\cdots,x_{n-1})\\
=&\mathcal{G}^{(\beta)}_\pi (x_1,\cdots,x_{n-1};\ominus t_1,\cdots,\ominus t_{n-1}),
\end{aligned}
\]
where the second equality is due to $\bar{\partial}^\beta_i(f\cdot g) = f (\bar{\partial}^\beta_i g)$ if $f$ is symmetric in $(\sigma_1,\cdots,\sigma_{n-1})$, the third equality is due to Corollary \ref{interpolation}, and $e_{i_{n-1}}(\sigma_1,\cdots,\sigma_{n-1})$ is identified with $e_{i_{n-1}}(x_1,\cdots,x_{n-1})$ in the fourth equality.
\end{proof}

\begin{example}
Notice that $\mathcal{G}^{(\beta)}_{12}(x_1;\ominus\sigma_1)=1, \mathcal{G}^{(\beta)}_{21}(x_1;\ominus\sigma_1)=x_1 \ominus \sigma_1$. In the case $n=3$, following Example \ref{ex1}, one can check
\[
\begin{aligned}
&\mathcal{G}^{(\beta)}_{12}(x_1;\ominus\sigma_1) \mathcal{G}^{(\beta)}_{321}(\sigma_1,\sigma_2;\ominus t_1,\ominus t_2) +
\mathcal{G}^{(\beta)}_{21}(x_1;\ominus\sigma_1) \bar{\partial}^{\beta}_1 \mathcal{G}^{(\beta)}_{321}(\sigma_1,\sigma_2;\ominus t_1,\ominus t_2) \\
=&(\sigma_1 \ominus t_1) (\sigma_2 \ominus t_1)(x_1 \ominus t_2)
=(x_1 \ominus t_1) (x_2 \ominus t_1)(x_1 \ominus t_2)\\
=&\mathcal{G}^{(\beta)}_{321}(x_1,x_2;\ominus t_1,\ominus t_2),
\end{aligned}
\]
\[
\begin{aligned}
&\mathcal{G}^{(\beta)}_{12}(x_1;\ominus\sigma_1) \mathcal{G}^{(\beta)}_{231}(\sigma_1,\sigma_2;\ominus t_1,\ominus t_2) = (\sigma_1 \ominus t_1) (\sigma_2 \ominus t_1) = (x_1 \ominus t_1) (x_2 \ominus t_1)\\
=&\mathcal{G}^{(\beta)}_{231}(x_1,x_2;\ominus t_1,\ominus t_2),
\end{aligned}
\]
\[
\begin{aligned}
&\mathcal{G}^{(\beta)}_{12}(x_1;\ominus\sigma_1) \mathcal{G}^{(\beta)}_{312}(\sigma_1,\sigma_2;\ominus t_1,\ominus t_2) +
\mathcal{G}^{(\beta)}_{21}(x_1;\ominus\sigma_1) \bar{\partial}^{\beta}_1 \mathcal{G}^{(\beta)}_{312}(\sigma_1,\sigma_2;\ominus t_1,\ominus t_2) \\
=&\frac{x_1(\sigma_1+\sigma_2-t_1-t_2)+(t_1 t_2-\sigma_1 \sigma_2)}{(1+\beta t_1)(1+\beta t_2)}
=(x_1 \ominus t_1) (x_2 \ominus t_2)\\
=&\mathcal{G}^{(\beta)}_{312}(x_1,x_2;\ominus t_1,\ominus t_2),
\end{aligned}
\]
\[
\begin{aligned}
&\mathcal{G}^{(\beta)}_{12}(x_1;\ominus\sigma_1) \mathcal{G}^{(\beta)}_{132}(\sigma_1,\sigma_2;\ominus t_1,\ominus t_2) =
\frac{x_1+x_2-t_1-t_2+\beta (x_1 x_2-t_1 t_2)}{(1+\beta t_1)(1+\beta t_2)}\\
=&\mathcal{G}^{(\beta)}_{132}(x_1,x_2;\ominus t_1,\ominus t_2),
\end{aligned}
\]
\[
\begin{aligned}
&\mathcal{G}^{(\beta)}_{12}(x_1;\ominus\sigma_1) \mathcal{G}^{(\beta)}_{213}(\sigma_1,\sigma_2;\ominus t_1,\ominus t_2) +
\mathcal{G}^{(\beta)}_{21}(x_1;\ominus\sigma_1) \bar{\partial}^{\beta}_1 \mathcal{G}^{(\beta)}_{213}(\sigma_1,\sigma_2;\ominus t_1,\ominus t_2) \\
=&(\sigma_1 \ominus t_1)+(x_1 \ominus \sigma_1)(1+\beta(\sigma_1 \ominus t_1))
=x_1 \ominus t_1\\
=&\mathcal{G}^{(\beta)}_{213}(x_1,x_2;\ominus t_1,\ominus t_2),
\end{aligned}
\]
where the following identifications have been made
\[
e_1(\sigma_1,\sigma_2)=e_1(x_1,x_2),\quad e_2(\sigma_1,\sigma_2)=e_2(x_1,x_2).
\]
\end{example}

Finally, we show that the Bethe state satisfies the following
\begin{thm}\label{thm2}
When $e_l(\sigma^{(i)}_1,\cdots,\sigma^{(i)}_{N-i})$ is identified with $e_l(x_1,\cdots,x_{N-i})$ for all $i=1,\cdots,N-1$ and $l=1,\cdots,N-i$, the Bethe ansatz state \eqref{BEstate} in the case $k_i = N-i$ has the expansion
\begin{equation}\label{thm2exp}
\ket{\psi^{(0)}} = \sum_{w \in S_N} \mathcal{G}^{(\beta)}_w (x_1,\cdots,x_{N-1};\ominus t_1,\cdots,\ominus t_{N-1}) \ket{\omega_N w^{-1}}.
\end{equation}
\end{thm}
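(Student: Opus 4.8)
I would prove this by induction on $N$, unfolding the recursive definition \eqref{BEstate} one step at a time and feeding the pieces into Theorem~\ref{thm1}, Lemma~\ref{l5} and Lemma~\ref{l7}. The base case $N=2$ is immediate: \eqref{BEstate} then contains a single operator, $\psi^{(1)}$ is a scalar, and the claim reduces to Theorem~\ref{thm1} after the identification $\sigma^{(1)}_1=x_1$. For the inductive step the crucial preliminary---and the step I expect to be the real obstacle---is to recognize what the coefficients $\psi^{(1)}_{i_1\cdots i_{N-1}}=\braket{i_1\cdots i_{N-1}|\psi^{(1)}}$ are: by the rank reduction underlying Theorem~\ref{BAE}, $\ket{\psi^{(1)}}$ is the complete-flag Bethe ansatz state of a $\mathrm{GL}(N-1)$ model of exactly the same form \eqref{Rmatrix}--\eqref{monodromy}, built on $N-1$ sites with equivariant parameters $\sigma^{(1)}_1,\cdots,\sigma^{(1)}_{N-1}$ and auxiliary Bethe roots $\sigma^{(2)},\cdots,\sigma^{(N-1)}$; under the index shift $i\mapsto i-1$ the identification hypotheses of the statement for $i=2,\cdots,N-1$ turn into precisely the full family of hypotheses of Theorem~\ref{thm2} for this smaller model. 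Making this identification airtight---keeping track of the relabelling $\{0,\cdots,N-2\}\mapsto\{1,\cdots,N-1\}$ of the internal index set of the reduced system, and checking that $\psi^{(1)}$ is supported on permutations of $\{1,\cdots,N-1\}$ so that Lemma~\ref{l5} applies---is where the induction is genuinely used; everything afterwards is formal.

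Granting this, the inductive hypothesis gives
\[
\ket{\psi^{(1)}}=\sum_{w\in S_{N-1}}\mathcal{G}^{(\beta)}_w(x_1,\cdots,x_{N-2};\ominus\sigma^{(1)}_1,\cdots,\ominus\sigma^{(1)}_{N-2})\,\ket{\omega_{N-1}w^{-1}},
\]
so $\psi^{(1)}_{i_1\cdots i_{N-1}}$ vanishes unless $(i_1,\cdots,i_{N-1})=(v(1),\cdots,v(N-1))$ for some $v\in S_{N-1}$, on which it equals $\mathcal{G}^{(\beta)}_{v^{-1}\omega_{N-1}}(x_1,\cdots,x_{N-2};\ominus\sigma^{(1)}_1,\cdots,\ominus\sigma^{(1)}_{N-2})$.

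I would then substitute this into \eqref{BEstate}, obtaining a sum over $v\in S_{N-1}$ of these coefficients times $B_{v(1)}(\sigma^{(1)}_1)\cdots B_{v(N-1)}(\sigma^{(1)}_{N-1})\ket{\Omega^{(0)}}$, and rewrite each $B$-product by Lemma~\ref{l5}. The point is that the coefficient $\mathcal{G}^{(\beta)}_{v^{-1}\omega_{N-1}}$ and the operator identity $B_{v(1)}(\sigma^{(1)}_1)\cdots B_{v(N-1)}(\sigma^{(1)}_{N-1})=\bar\partial^{\beta}_{v^{-1}\omega_{N-1}}\,B_{N-1}(\sigma^{(1)}_1)\cdots B_1(\sigma^{(1)}_{N-1})$ carry the \emph{same} permutation $u:=v^{-1}\omega_{N-1}$; summing over $v$ (equivalently over $u\in S_{N-1}$), with $\bar\partial^{\beta}_u$ acting on the $\sigma^{(1)}$-variables, gives
\[
\ket{\psi^{(0)}}=\sum_{u\in S_{N-1}}\mathcal{G}^{(\beta)}_u(x_1,\cdots,x_{N-2};\ominus\sigma^{(1)}_1,\cdots,\ominus\sigma^{(1)}_{N-2})\,\bar\partial^{\beta}_u\Big[B_{N-1}(\sigma^{(1)}_1)\cdots B_1(\sigma^{(1)}_{N-1})\ket{\Omega^{(0)}}\Big].
\]
Theorem~\ref{thm1} expands the bracket as $\sum_{\pi\in S_N}\mathcal{G}^{(\beta)}_\pi(\sigma^{(1)}_1,\cdots,\sigma^{(1)}_{N-1};\ominus t_1,\cdots,\ominus t_{N-1})\ket{\omega_N\pi^{-1}}$, and since the kets do not involve $\sigma^{(1)}$ the operator $\bar\partial^{\beta}_u$ lands only on the polynomial; exchanging the two sums yields
\[
\ket{\psi^{(0)}}=\sum_{\pi\in S_N}\left[\sum_{u\in S_{N-1}}\mathcal{G}^{(\beta)}_u(x_1,\cdots,x_{N-2};\ominus\sigma^{(1)})\,\bar\partial^{\beta}_u\mathcal{G}^{(\beta)}_\pi(\sigma^{(1)};\ominus t)\right]\ket{\omega_N\pi^{-1}}.
\]

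The inner bracket is exactly the left-hand side of Lemma~\ref{l7} with $n=N$ and $\sigma_a=\sigma^{(1)}_a$, whose hypothesis---that $e_l(\sigma^{(1)}_1,\cdots,\sigma^{(1)}_{N-1})$ is identified with $e_l(x_1,\cdots,x_{N-1})$---is precisely the $i=1$ case of the identification assumed in the statement. Hence the bracket collapses to $\mathcal{G}^{(\beta)}_\pi(x_1,\cdots,x_{N-1};\ominus t_1,\cdots,\ominus t_{N-1})$, and we obtain \eqref{thm2exp} with $w=\pi$. So the whole argument is: recursion, then the inductive hypothesis to identify $\ket{\psi^{(1)}}$, then Lemma~\ref{l5} to normal-order the $B$-operators, then Theorem~\ref{thm1} to produce double $\beta$-Grothendieck polynomials in the $\sigma^{(1)}$, then Lemma~\ref{l7} to absorb the $\sigma^{(1)}$ into the $x$-variables; the only non-mechanical ingredient is the identification of $\ket{\psi^{(1)}}$ flagged in the first paragraph.
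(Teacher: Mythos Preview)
Your proposal is correct and mirrors the paper's own proof essentially step for step: induction on the size of the system, identification of $\ket{\psi^{(1)}}$ (more generally $\ket{\psi^{(N-m)}}$) as the Bethe state of the nested $\mathrm{GL}(N-1)$ model via the rank-reduction in the proof of Theorem~\ref{BAE}, then Lemma~\ref{l5} to normal-order the $B$-string, Theorem~\ref{thm1} to produce $\mathcal{G}^{(\beta)}_\pi(\sigma^{(1)};\ominus t)$, and finally Lemma~\ref{l7} to absorb $\sigma^{(1)}$ into $x$. The only cosmetic difference is that the paper frames the induction along the nested levels $m=2,\dots,N$ rather than on $N$ directly, but the logical content is the same.
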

\begin{proof}
We prove by induction on the number of sites. For a two-site system, it is easy to check that, when $\sigma^{(1)}_1 = x_1$,
\[
\begin{aligned}
&\bra{01} B_1(\sigma^{(1)}_1) \ket{00} = \sigma^{(1)}_1 \ominus t_1 = x_1 \ominus t_1 = \mathcal{G}^{(\beta)}_{21}(x_1;\ominus t_1), \\
&\bra{10} B_1(\sigma^{(1)}_1) \ket{00} = 1 = \mathcal{G}^{(\beta)}_{12}(x_1;\ominus t_1),
\end{aligned}
\]
so Eq.\eqref{thm2exp} holds for $N=2$. Now assume Eq.\eqref{thm2exp} holds for $N=2,3,\cdots,m$ with $m \geq 2$, then we have, from Eq.\eqref{psis} and \eqref{psim}, for $w_l \in S_l$ (notice that $\ket{\psi^{(N-s)}}$ is the Bethe state of a GL$(s)$ system with $s$ sites),
\[
\begin{aligned}
\braket{\omega_2 w_2^{-1}|\psi^{(N-2)}} &= \mathcal{G}^{(\beta)}_{w_2}(x_1;\ominus \sigma_1^{(N-2)}),\\
\braket{\omega_3 w_3^{-1}|\psi^{(N-3)}} &= \mathcal{G}^{(\beta)}_{w_3}(x_1,x_2;\ominus \sigma_1^{(N-3)},\ominus\sigma_2^{(N-3)}),\\
&~~\vdots \\
\braket{\omega_m w_m^{-1}|\psi^{(N-m)}} &= \mathcal{G}^{(\beta)}_{w_m}(x_1,\cdots,x_{m-1};\ominus \sigma_1^{(N-m)},\cdots,\ominus\sigma_{m-1}^{(N-m)}).
\end{aligned}
\]
Eq.\eqref{psis} and \eqref{psim} yield
\begin{equation}\label{psiplus}
\ket{\psi^{(N-m-1)}} = \sum_{v_m \in S_m} \braket{v_m | \psi^{(N-m)}} B^{(N-m-1)}_{v_m(1)}(\sigma^{(N-m)}_1) \cdots B^{(N-m-1)}_{v_m(m)}(\sigma^{(N-m)}_m) \ket{\Omega^{(N-m-1)}}.
\end{equation}
It is easy to verify that
\[
\begin{aligned}
&\bra{\omega_{m+1} w_{m+1}^{-1}} B^{(N-m-1)}_{v_m(1)}(\sigma^{(N-m)}_1) \cdots B^{(N-m-1)}_{v_m(m)}(\sigma^{(N-m)}_m) \ket{\Omega^{(N-m-1)}}\\
=& \bar{\partial}^{\beta}_{v^{-1}_m \omega_m} \bra{\omega_{m+1} w_{m+1}^{-1}}B^{(N-m-1)}_{m}(\sigma^{(N-m)}_1) \cdots B^{(N-m-1)}_{1}(\sigma^{(N-m)}_m) \ket{\Omega^{(N-m-1)}}\\
=& \bar{\partial}^{\beta}_{v^{-1}_m \omega_m} \mathcal{G}^{(\beta)}_{w_{m+1}}(\sigma^{(N-m)}_1,\cdots,\sigma^{(N-m)}_m;\ominus \sigma_1^{(N-m-1)},\cdots,\ominus \sigma_m^{(N-m-1)}),
\end{aligned}
\]
where the first equality is due to Lemma \ref{l5} and the second equality is due to Theorem \ref{thm1}.
As $\braket{v_m | \psi^{(N-m)}} = \mathcal{G}^{(\beta)}_{v_m^{-1} \omega_m}(x_1,\cdots,x_{m-1};\ominus \sigma^{(N-m)}_1,\cdots,\ominus \sigma^{(N-m)}_{m-1})$ from the inductive hypothesis, Lemma \ref{l7} and Eq.\eqref{psiplus} then yield
\[
\begin{aligned}
&\braket{\omega_{m+1} w_{m+1}^{-1}| \psi^{(N-m-1)}}\\
=&\sum_{v_m \in S_m} \braket{v_m | \psi^{(N-m)}} \bra{\omega_{m+1} w_{m+1}^{-1}} B^{(N-m-1)}_{v_m(1)}(\sigma^{(N-m)}_1) \cdots B^{(N-m-1)}_{v_m(m)}(\sigma^{(N-m)}_m) \ket{\Omega^{(N-m-1)}}\\
=&\sum_{v_m\in S_{m}} \mathcal{G}^{(\beta)}_{v^{-1}_m \omega_m}(x_1,\cdots,x_{m-1};\ominus\sigma^{(N-m)}_1,\cdots,\ominus \sigma^{(N-m)}_{m-1}) \\
&\quad \quad \quad \cdot \bar{\partial}^\beta_{v_m^{-1} \omega_m} \mathcal{G}^{(\beta)}_{w_{m+1}}(\sigma^{(N-m)}_1,\cdots,\sigma^{(N-m)}_{m};\ominus \sigma^{(N-m-1)}_1,\cdots,\ominus \sigma^{(N-m-1)}_{m})\\
=&\mathcal{G}^{(\beta)}_{w_{m+1}} (x_1,\cdots,x_{m};\ominus \sigma^{(N-m-1)}_1,\cdots,\ominus \sigma^{(N-m-1)}_{m}).
\end{aligned}
\]
Notice that $\sigma^{(0)}_i = t_i$, the proof is thus complete by induction.
\end{proof}

\begin{remark}
As discussed in Section \ref{sec:QK}, when the Bethe ansatz equations Eq.\eqref{BAE1}-\eqref{BAE3} are taken into account, we should interpret $\sigma^{(N-i)}_a$ ($\beta = 0$) or $1-\sigma^{(N-i)}_a$ ($\beta = -1$), $a=1,\cdots, i$, as the Chern roots of $\mathcal{S}_i$. Therefore, $x_i$ should be interpreted as the first Chern class of $\mathcal{S}_{i}/\mathcal{S}_{i-1}$ when $\beta = 0$ in the cohomology ring, and $1-x_i$ should be interpreted as the K-theory class of $\mathcal{S}_{i}/\mathcal{S}_{i-1}$ when $\beta = -1$ in the K-theory ring. 
\end{remark}
\begin{remark}
More generally, for $\beta \neq 0$, we expect $1+\beta x_i$ to represent $\mathcal{S}_{i}/\mathcal{S}_{i-1}$ in the connective K-theory ring of the flag variety \cite{H13}.
\end{remark}
\begin{remark}
In the case of complete flag variety, $i_a \neq i_b$ for $a \neq b$ in Eq.\eqref{psis}. In the case of general partial flag variety $\mathrm{Fl}(k_{n-1},k_{n-2},\cdots,k_1;N)$, it is possible that $i_a = i_b$ for some $a \neq b$. Because of $[B_i(x),B_i(y)]=0$ (Eq.\eqref{comme}), we expect the expansion coefficients of the Bethe ansatz state in the natural basis to be given by the double $\beta$-Grothendieck polynomials 
symmetric in $(x_{k_s+1},\cdots,x_{k_{s-1}})$, which can be identified with the Chern roots of $\mathcal{S}_{n-s+1}/\mathcal{S}_{n-s}$, for all $s=1,\cdots,n$.
\end{remark}

\section*{Acknowledgement}
The author would like to thank Xiang-Mao Ding, Leonardo Mihalcea, Eric Sharpe and Hao Zou for useful comments. This work is partially supported by the National Natural Science Foundation of China (Grant No. 12475005), the Natural Science Foundation of Shanghai Municipality (Grant No. 24ZR1468600), and the Fundamental Research Funds for the Central Universities.

\appendix
\section{Quantum cohomology/K-theory of flag varieties and the double $\beta$-Grothendieck polynomials}\label{app}

In this appendix, we give a brief review of the quantum cohomology/K-theory ring of flag varieties and the double $\beta$-Grothendieck polynomials. The reader may refer to e.g.  \cite{AS95, K95, KM96, FK93, LT06, Buch, Lee:2001mb, MNS1, MNS2} and references therein for more information.

Given $E=\mathbb{C}^N$ and $\bm{a}=(a_1,a_2,\cdots,a_m) \in \mathbb{Z}^m$ such that $0 < a_1 < a_2 < \cdots < a_m < N$, the flag variety $\mathrm{Fl}(\bm{a};N)=\mathrm{Fl}(a_1,a_2,\cdots,a_m;N)$ is defined to be the set of the flags
\begin{equation}\label{flagdef}
V_1 \subset V_2 \subset \cdots \subset V_m \subset E
\end{equation}
such that $\dim(V_i) = a_i$ for all $i$. 
There is a sequence of vector bundles, called the tautological bundles, on $\mathrm{Fl}(a_1,a_2,\cdots,a_m;N)$:
\begin{equation}
\mathcal{S}_1 \subset \mathcal{S}_2 \subset \cdots \subset \mathcal{S}_m \subset \mathrm{Fl}(\bm{a};N) \times E
\end{equation}
with $\mathrm{rank}(\mathcal{S}_i)=a_i$. The fibre of $\mathcal{S}_i$ at the flag \eqref{flagdef} is $V_i$.

Let $S_N$ be the group of permutations of $N$ elements, and let $l(w)$ be the length\footnote{The length of a permutation $w \in S_N$ is the number of pairs $(i,j)$ such that $i<j$ and $w(i) > w(j)$.} of a permutation $w \in \mathrm{S}_N$. The Schubert varieties are subvarieties of $\mathrm{Fl}(\bm{a};N)$ indexed by the set $S_N(\bm{a})=S_N/W_{\bm{a}}$, where $W_{\bm{a}} \subset S_N$ is the subgroup generated by the simple transpositions $s_i = (i,i+1)$ for $ i \not\in \{ a_1,\cdots, a_m \}$. Given a fixed complete flag $F_1 \subset F_2 \subset \cdots \subset F_{N-1} \subset E$, i.e. $\mathrm{dim}(F_i) = i$, and $w \in S_N$ with descents in $\{ a_1, a_2, \cdots, a_n \}$, the Schubert variety $\Omega_w(F_{\bullet})$ is defined to be
\[
\Omega_w(F_{\bullet}) = 
\{ V_{\bullet} \in \mathrm{Fl}(\bm{a};E)~|~ \dim(V_i \cap F_p) \geq \#
\{ t \leq a_i : w(t) > N-p \}, \forall i,p \}.
\]
The codimension of the Schubert variety $\Omega_w(F_{\bullet})$
is $l(w)$.
Let $\Omega^{(\bm{a})}_w$ denote the cohomology class dual to the Schubert variety indexed by $w$, then the Schubert classes $\Omega^{(\bm{a})}_w$ form a basis for the cohomology ring $H^*(\mathrm{Fl}(\bm{a};N),\mathbb{Z})$.

Let $\mathrm{Fl}(N) := \mathrm{Fl}(1,2,\cdots,N-1;N)$ denote the complete flag variety. The cohomology of $\mathrm{Fl}(N)$ has the following presentation
\[
H^*(\mathrm{Fl}(N)) = \mathbb{Z}[x_1,\cdots,x_N]/(e^N_1,\cdots,e^N_N),
\]
where $e^N_i = e_i(x_1,\cdots,x_N)$ is the $i$th elementary symmetric polynomial in $N$ variables. In this presentation, $x_i$ is identified with the Chern class $c_1(\mathcal{S}_i/\mathcal{S}_{i-1})$ and the Schubert class $\Omega_w$ is represented by the Schubert polynomial $X_w (x_1,\cdots,x_N)$, which means that the multiplication rules of the Schubert classes are the same as those of the Schubert polynomials modulo $e^N_i, i=1,\cdots,N$. For general $\bm{a}$, $\Omega^{(\bm{a})}_w$ is represented by the Schubert polynomial $X_w (x_1,\cdots,x_{N-1})$ symmetric in $\{ x_{a_{p-1}+1},\cdots,x_{a_p} \}$ for all $1 \leq p \leq l+1$.

The quantum elementary polynomials $E^N_i$ are defined via the Givental-Kim determinant. Let
\[
\Gamma_N = \left( \begin{array}{ccccc}
x_1 & q_1 & 0 & \cdots & 0 \\
-1 & x_2 & q_2 & \cdots & 0 \\
0 & -1 & x_3 & \cdots & 0 \\
\vdots & \vdots & \vdots & \ddots & \vdots \\
0 & 0 & 0 & \cdots & x_N
\end{array} \right),
\]
then $E^N_i$ is defined as the coefficient of $\lambda^i$ in the characteristic polynomial $\det(1+\lambda \Gamma_N)$.
The quantum cohomology ring of the complete flag variety can be presented as
\[
QH^*(\mathrm{Fl}(N)) = \mathbb{Z}[x_1,\cdots,x_N]/(E^N_1,\cdots,E^N_N).
\]
Ring relations of the quantum cohomology of partial flag varieties $QH^*(\mathrm{Fl}(\bm{a};N))$ can be presented similarly, see \cite{AS95, K95, Buch}. 

The structure constants of $QH^*(\mathrm{Fl}(\bm{a};N))$ encode the Gromov-Witten invariants of the flag variety. The quantum product $*$ of two Schubert classes can be written as
\[
\Omega^{(\bm{a})}_u * \Omega^{(\bm{a})}_v = \sum_{w \in S_N(\bm a)} \sum_{d \in H_2(\mathrm{Fl}(\bm{a};N),\mathbb{Z})} q^d \langle \Omega^{(\bm{a})}_u, \Omega^{(\bm{a})}_v, \Omega^{(\bm{a})}_{\omega_N w \omega_{\bm a}} \rangle_{d} \Omega^{(\bm{a})}_{w},
\]
where $\omega_N$ is the longest permutation in $S_N$, $w_{\bm a}(j) = a_i+a_{i+1}+1-j$ for $a_i < j \leq a_{i+1}$, and $\langle \Omega^{(\bm{a})}_u, \Omega^{(\bm{a})}_v, \Omega^{(\bm{a})}_{w} \rangle_{d}$ is the 3-point, genus-zero Gromov-Witten invariant of degree $d$.

The K-theory ring of the flag variety $K(\mathrm{Fl}(\bm{a};N))$ is the Grothendieck group of the coherent sheaves of $\mathrm{Fl}(\bm{a};N)$. A basis of $K(\mathrm{Fl}(\bm{a};N))$ can be chosen as the structure sheaves of the Schubert varieties $\{\mathcal{O}_w | w \in S_n(\bm{a}) \}$. These Schubert classes are represented by the Grothendieck polynomials $G_w(x_1,\cdots,x_n)$ symmetric in $\{ x_{a_{p-1}+1},\cdots,x_{a_p} \}$ for all $1 \leq p \leq l+1$. Ring relations can be found in e.g. \cite{MNS1}. One way to present the ring relations of the (quantum) K-theory ring of $\mathrm{Fl}(\bm{a};N)$ is to encapsulate them in the (quantum) Whitney relations \cite{GMSXZZ, GMSZ, Gu:2023fpw}.

The structure constants of the quantum K-theory ring $QK(\mathrm{Fl}(\bm{a};N))$ encode the K-theoretic Gromov-Witten invariants \cite{Lee:2001mb} of the flag variety. Let $\langle \mathcal{O}^{(\bm{a})}_{w_1}, \cdots, \mathcal{O}^{(\bm{a})}_{w_m} \rangle_{d}$ denote the $m$-point, genus zero K-theoretic Gromov-Witten invariant of degree $d$. The quantum K-theory ring is a deformation of the K-theory ring whose multiplication $\star$ is defined by
\[
(\mathcal{O}^{(\bm{a})}_u \star \mathcal{O}^{(\bm{a})}_v, \mathcal{O}^{(\bm{a})}_w) = \sum_{d \in H_2(\mathrm{Fl}(\bm{a};N),\mathbb{Z})} q^d \langle \mathcal{O}^{(\bm{a})}_u, \mathcal{O}^{(\bm{a})}_v, \mathcal{O}^{(\bm{a})}_{w} \rangle_{d},
\]
where the pairing is defined by $(\mathcal{O}^{(\bm{a})}_u, \mathcal{O}^{(\bm{a})}_v) = \sum_d q^d \langle \mathcal{O}^{(\bm{a})}_u, \mathcal{O}^{(\bm{a})}_v \rangle_d$.

For the $T=(\mathbb{C}^*)^N$ action on $E=\mathbb{C}^N$ with characters (equivariant parameters) $(t_1,\cdots,t_N)$, one can define the equivariant quantum cohomology ring $QH_T^*(\mathrm{Fl}(\bm{a};N))$ and equivariant quantum K-theory ring $QK_T(\mathrm{Fl}(\bm{a};N))$.
The representatives of the Schubert classes in the equivariant cohomology and K-theory rings are given by the double Schubert polynomials $X_w(x_1,\cdots,x_N;t_1,\cdots,t_N)$ and the double Grothendieck polynomials $G_w(x_1,\cdots,x_N;t_1,\cdots,t_N)$ respectively.

The double Schubert and double Grothendieck polynomials are special cases of a one-parameter family of polynomials called the double $\beta$-Grothedieck polynomials $\mathcal{G}^{(\beta)}_w, w\in S_N$ \cite{FK93}. Let $\mathbb{K}$ be a field of zero characteristic, and let $\beta$ be a formal variable, $\bm{x}=(x_1,\cdots,x_N), \bm{y}=(y_1,\cdots,y_N)$. Define the $\beta$-divided-difference operator $\partial^\beta_i$ acting on $\mathbb{K}(\beta)[x_1,\cdots,x_N,y_1,\cdots,y_N]$ by
\begin{equation}\label{didi}
\partial^{\beta}_i f(x_1,\cdots,x_N;\bm{y}) := \frac{(1+\beta x_{i+1})f(x_1,\cdots,x_N;\bm{y})-(1+\beta x_i)f(x_1,\cdots,x_{i+1},x_i,\cdots,x_N;\bm{y})}{x_i-x_{i+1}}.
\end{equation}
If $w=\omega_N$ is the permutation in $S_N$ with maximal length, then we define
\begin{equation}\label{highestGroth}
\mathcal{G}^{(\beta)}_{\omega_N} (\bm{x};\bm{y})= \prod_{i+j \leq N} (x_i + y_j + \beta x_i y_j),
\end{equation}
and for other $w \in \mathrm{S}_N$, $\mathcal{G}^{(\beta)}_w$ can be defined recursively by
\begin{equation}\label{otherGroth}
\mathcal{G}^{(\beta)}_{w s_i}(\bm{x};\bm{y}) = \partial^{\beta}_i \mathcal{G}^{(\beta)}_{w}(\bm{x};\bm{y})
\end{equation}
if $l(w s_i) = l(w) - 1$. 

The double $\beta$-Grothendieck polynomial $\mathcal{G}^{(\beta)}_{w}(\bm{x};\bm{y})$ reduces to the double Schubert polynomial $X_{w}(\bm{x};\bm{y})$ when $\beta=0$, and reduces to the double Grothendieck polynomial $G_{w}(\bm{x};\bm{y})$ when $\beta = -1$. 

The double $\beta$-Grothendieck polynomials satisfy (see Lemma 5.8 of \cite{H13})
\begin{equation}\label{Gswitch}
\mathcal{G}^{(\beta)}_w(\bm{x};\bm{y}) = \mathcal{G}^{(\beta)}_{w^{-1}}(\bm{y};\bm{x}).
\end{equation}

The dual double $\beta$-Grothendieck polynomials $\mathcal{H}^{(\beta)}_w(\bm{x};\bm{y})$ can be defined recursively by setting $\mathcal{H}^{(\beta)}_{w_N}(\bm{x};\bm{y}) = \mathcal{G}^{(\beta)}_{w_N}(\bm{x};\bm{y})$ and 
\[
\mathcal{H}^{(\beta)}_{w s_i}(\bm{x};\bm{y}) = \bar{\partial}^{\beta}_i \mathcal{H}^{(\beta)}_{w}(\bm{x};\bm{y})
\]
if $l(w s_i) = l(w) - 1$ \cite{LS83}, where $\bar{\partial}^{\beta}_i = \partial^{\beta}_i + \beta$.

More generally, for $v,w \in S_{N}$, one can define the biaxial double $\beta$-Grothendieck polynomials $\mathcal{G}^{(\beta)}_{v,w}(\bm{x};\bm{y})$ as follows (see Definition 2.6 of \cite{BFHTW}):
\begin{equation}\label{dG1}
\mathcal{G}^{(\beta)}_{1,w}(\bm{x};\bm{y}) = \mathcal{G}^{(\beta)}_{w}(\bm{x};\bm{y}).
\end{equation}
If $l(v s_i) = l(v) +1$, set 
\begin{equation}\label{dG2}
\mathcal{G}^{(\beta)}_{v s_i,w}(\bm{x};\bm{y}) = \bar{\partial}^\beta_{i, \bm{y}} \mathcal{G}^{(\beta)}_{v,w}(\bm{x};\bm{y}), 
\end{equation}
where $\bar{\partial}^\beta_{i, \bm{y}}$ is the $\bar{\partial}^\beta_{i}$ operator acting on the set of variables $\bm{y} = (y_1,y_2,\cdots,y_N)$.
We have the following generalized Cauchy identity:
\begin{thm}[Theorem 7.1 of \cite{BFHTW}] For any $w \in S_N$,
\begin{equation}\label{Gcauchy}
\mathcal{G}^{(\beta)}_{w}(\bm{x};\bm{y}) = \sum_{v \in S_N} \mathcal{G}^{(\beta)}_{v}(\bm{y};\bm{z}) \mathcal{G}^{(\beta)}_{v^{-1},w}(\bm{x};\ominus\bm{z}),
\end{equation}
where $\ominus z_i = -z_i/(1+\beta z_i)$.
\end{thm}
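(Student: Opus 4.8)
The plan is to recast \eqref{Gcauchy} as a biorthogonality statement and to prove that directly, staying inside the divided-difference calculus and avoiding the lattice-model machinery of \cite{BFHTW}. First I would observe, exactly as in the chain of equalities leading to Theorem \ref{cauchy}, that \eqref{Gswitch}, \eqref{dG1} and \eqref{dG2} convert \eqref{Gcauchy} into the equivalent interpolation formula
\[
f(\bm{x}) = \sum_{v \in S_N} \mathcal{G}^{(\beta)}_v(\bm{x};\ominus\bm{z})\,\big(\bar{\partial}^\beta_v f\big)(\bm{z}), \qquad f \in L_N(\bm{t},\beta),
\]
where $\bar{\partial}^\beta_v$ acts on the $\bm{x}$-variables and the result is specialized at $\bm{x}=\bm{z}$; since every step is a reversible equality it suffices to prove the display for all $f\in L_N(\bm{t},\beta)$. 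Because $\{\mathcal{G}^{(\beta)}_u(\bm{x};\ominus\bm{z}) : u \in S_N\}$ is a $\mathbb{C}(\bm{z},\beta)$-basis of $L_N(\bm{t},\beta)$ by the cited \cite[Proposition 2.7]{LT06} (a fact independent of \eqref{Gcauchy}, so no circularity), writing $f=\sum_u c_u\,\mathcal{G}^{(\beta)}_u(\bm{x};\ominus\bm{z})$ reduces everything to the single biorthogonality relation $\big(\bar{\partial}^\beta_v\,\mathcal{G}^{(\beta)}_u(\bm{x};\ominus\bm{z})\big)\big|_{\bm{x}=\bm{z}} = \delta_{uv}$ for $u,v \in S_N$.

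To prove this relation I would induct on $l(v)$. The key input (the operator lemma) is that $\partial^\beta_i$ obeys the $0$-Hecke relation $(\partial^\beta_i)^2=-\beta\,\partial^\beta_i$, immediate from \eqref{didi}; together with the recursion \eqref{otherGroth} this yields $\bar{\partial}^\beta_{i,\bm{x}}\mathcal{G}^{(\beta)}_u = 0$ when $l(us_i)>l(u)$, and $\bar{\partial}^\beta_{i,\bm{x}}\mathcal{G}^{(\beta)}_u = \mathcal{G}^{(\beta)}_{us_i}+\beta\,\mathcal{G}^{(\beta)}_u$ when $l(us_i)<l(u)$. Setting $\Phi_w(g):=(\bar{\partial}^\beta_w g)(\bm{z})$ and writing $v=v's_i$ with $l(v')=l(v)-1$ (so $\bar{\partial}^\beta_v=\bar{\partial}^\beta_{v'}\bar{\partial}^\beta_i$), the step becomes $\Phi_v(\mathcal{G}^{(\beta)}_u)=\Phi_{v'}\!\big(\bar{\partial}^\beta_{i,\bm{x}}\mathcal{G}^{(\beta)}_u\big)$. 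Feeding in the operator lemma and the inductive hypothesis $\Phi_{v'}(\mathcal{G}^{(\beta)}_{u'})=\delta_{u'v'}$, a short case split on whether $s_i$ is an ascent or a descent of $u$ collapses each term to $\delta_{u,\,v's_i}=\delta_{uv}$; in the descent case the extra term $\beta\,\Phi_{v'}(\mathcal{G}^{(\beta)}_u)=\beta\,\delta_{u,v'}$ vanishes because a descent of $u$ at $i$ cannot coincide with an ascent of $v'$ at $i$. This part is clean and purely combinatorial.

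The base case $v=1$ is the genuinely hard point, as it is precisely the diagonal vanishing
\[
\mathcal{G}^{(\beta)}_u(\bm{z};\ominus\bm{z}) = \delta_{u,1}, \qquad u \in S_N.
\]
For $u=\omega_N$ it is immediate from \eqref{highestGroth} and the identity $x\oplus(\ominus y)=x\ominus y$ (where $x\oplus y:=x+y+\beta xy$), which the paper already records implicitly: the product $\prod_{i+j\le N}(z_i\ominus z_j)$ carries the factor $z_1\ominus z_1=0$. For general $u$, however, the recursion \eqref{otherGroth} does not commute with the diagonal specialization $\bm{x}=\bm{z}$, and the naive ideal-stability argument fails, since $\partial^\beta_i$ does not preserve the ideal cutting out the diagonal (indeed $\partial^\beta_i(x_i\oplus y_i)=1$). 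I therefore expect this diagonal vanishing --- equivalently, the nondegeneracy of the pairing $(u,v)\mapsto\Phi_v(\mathcal{G}^{(\beta)}_u)$ --- to be the main obstacle, and I would derive it from the standard interpolation/localization characterization of the double $\beta$-Grothendieck polynomials (cf.\ \cite{FK93, H13}) rather than by a direct induction. With the diagonal vanishing secured, the biorthogonality, the interpolation formula, and hence \eqref{Gcauchy} follow in turn, yielding a proof that never leaves the divided-difference formalism and bypasses the $2$d lattice models of \cite{BFHTW}.
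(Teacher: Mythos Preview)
The paper does not prove this theorem; it is quoted verbatim from \cite{BFHTW}, where it is established as an identity of partition functions of a coloured lattice model. Your route is genuinely different and, as you say, stays almost entirely within the divided-difference calculus: the reduction of \eqref{Gcauchy} to the interpolation formula is correct (the chain of equalities in the proof of Theorem~\ref{cauchy} is reversible), the further reduction to the biorthogonality $(\bar\partial^\beta_v\mathcal{G}^{(\beta)}_u)(\bm{z};\ominus\bm{z})=\delta_{uv}$ is legitimate since the basis property from \cite{LT06} does not depend on \eqref{Gcauchy}, and your inductive step is clean---the operator identities follow from the $0$-Hecke relation $(\partial^\beta_i)^2=-\beta\partial^\beta_i$ together with \eqref{otherGroth}, and the stray $\beta\,\delta_{u,v'}$ dies for exactly the reason you give. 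One tacit point worth making explicit is that $\bar\partial^\beta_w$ is independent of the reduced expression, which follows from the braid relations for $\partial^\beta_i$ combined with $\bar\partial^\beta_i=\partial^\beta_i+\beta$. The only place your argument genuinely leaves the formal calculus is the base case $\mathcal{G}^{(\beta)}_u(\bm{z};\ominus\bm{z})=\delta_{u,1}$: this is the restriction of the Schubert class to the identity fixed point and can be extracted from the geometric interpretation in \cite{H13} or from a Billey--Graham--Willems type localization formula, but \cite{FK93} does not state it in this form and there is no short direct proof from \eqref{didi}--\eqref{otherGroth} alone. What your approach buys relative to \cite{BFHTW} is that no integrable-systems machinery is needed; the price is that the diagonal vanishing becomes an external input rather than a byproduct of the lattice model.
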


\begin{thebibliography}{199}

\addcontentsline{toc}{section}{References}

\bibitem{GK} V. Gorbounov and C. Korff, ``Quantum integrability and generalised quantum Schubert calculus,'' Advances in Mathematics \textbf{313}, 282-356 (2014), [arXiv:1408.4718 [math.RT]].

\bibitem{UY} K. Ueda and Y. Yoshida, ``3d $N=2$ Chern-Simons-matter theory, Bethe ansatz, and quantum K-theory of Grassmannians,'' JHEP \textbf{08} (2020) 157, [arXiv:1912.03792 [hep-th]].

\bibitem{GK93} A. Givental and B. Kim, ``Quantum cohomology of flag manifolds and Toda lattices,'' Commun. Math. Phys. \textbf{168}, 609-642 (1995), [arXiv:hep-th/9312096].

\bibitem{GL01} A. Givental and Y. P. Lee, ``Quantum K-theory on flag manifolds, finite difference Toda lattices and quantum groups,'' Invent. Math. \textbf{151}, 193-219 (2003), [arXiv:math/0108105].

\bibitem{KPSZ} P. Koroteev, P. P. Pushkar, A. V. Smirnov and A. M. Zeitlin, ``Quantum K-theory of quiver varieties and many-body systems,'' Selecta Math. New Ser. \textbf{27} (2021) 87, [arXiv:1705.10419 [math.AG]].

\bibitem{AS95}
A. Astashkevich and V. Sadov, ``Quantum cohomology of partial flag manifolds $F_{n_1\cdots n_k}$,'' Commun. Math. Phys. \textbf{170}, 503–528 (1995), [arXiv:hep-th/9401103].

\bibitem{K95}
B. Kim, ``Quantum cohomology of partial flag manifolds and a residue formula for their intersection pairings,'' Internat. Math. Res. Notices 1995, 1–15, [arXiv:hep-th/9405056].

\bibitem{GMSXZZ}
W. Gu, L. Mihalcea, E. Sharpe, W. Xu, H. Zhang and H. Zou,
``Quantum K theory rings of partial flag manifolds,''
J. Geom. Phys. \textbf{198}, 105127 (2024),
[arXiv:2306.11094 [hep-th]].

\bibitem{GMSZ}
W. Gu, L. Mihalcea, E. Sharpe and H. Zou, ``Quantum K theory of Grassmannians, Wilson line operators, and Schur bundles,'' [arXiv:2208.01091 [math.AG]].

\bibitem{Gu:2023fpw}
W. Gu, L. Mihalcea, E. Sharpe, W. Xu, H. Zhang and H. Zou,
``Quantum K Whitney relations for partial flag varieties,''
[arXiv:2310.03826 [math.AG]].

\bibitem{FK93}
S. Fomin and A. N. Kirillov, ``Yang-Baxter equation, symmetric functions and Grothendieck polynomials,'' [arXiv:hep-th/9306005].

\bibitem{BSK}
V. Buciumas, T. Scrimshaw and K. Weber, ``Colored five-vertex models and Lascoux polynomials and atoms,'' Journal of the London Mathematical Society, \textbf{102} (2019), [arXiv:1908.07364 [math.CO]].

\bibitem{BS} V. Buciumas and T. Scrimshaw, ``Double Grothendieck Polynomials and Colored Lattice Models,'' International Mathematics Research Notices (2020), [arXiv:2007.04533 [math.CO]].

\bibitem{BFHTW}
B. Brubaker, C. Frechette, A. Hardt, E. Tibor, K. Weber, ``Frozen pipes: lattice models for Grothendieck polynomials,'' Algebraic Combinatorics, \textbf{6} (2023) no. 3, pp. 789-833, [arXiv:2007.04310 [math.CO]].

\bibitem{KR} P. P. Kulish and N. Y. Reshetikhin,  ``Diagonalisation of GL$(N)$ invariant transfer matrices and quantum $N$-wave system (Lee model),'' J. Phys. A {\bf{16}} (1983) L591-L596.

\bibitem{Guo}
J.~Guo,
``Quantum Sheaf Cohomology and Duality of Flag Manifolds,''
Commun. Math. Phys. \textbf{374}, no.2, 661-688 (2019),
[arXiv:1808.00716 [hep-th]].

\bibitem{DS}
R. Donagi and E. Sharpe, ``GLSM's for partial flag manifolds,'' J. Geom. Phys. \textbf{58} (2008) 1662-1692, [arXiv:0704.1761 [hep-th]].

\bibitem{NS1} 
N. Nekrasov and S. Shatashvili, ``Supersymmetric Vacua and Bethe Ansatz,'' Nuclear Physics B Proceedings Supplements \textbf{192} (2009), 91-112, [arXiv:0901.4744 [hep-th]].

\bibitem{NS2}
N. Nekrasov and S. Shatashvili, ``Quantum Integrability and Supersymmetric Vacua,'' Progress of Theoretical Physics Supplement \textbf{177} (2009), 105-119, [arXiv:0901.4748 [hep-th]].

\bibitem{KM96}
A. N. Kirillov and T. Maeno, ``Quantum double Schubert polynomials, quantum Schubert polynomials and Vafa-Intriligator formula,'' Discret. Math. \textbf{217} (1996), 191-223, [arXiv:q-alg/9610022].

\bibitem{LS83}
A. Lascoux and M. P. Sch\"{u}tzenberger, ``Symmetry and flag manifolds,'' in Invariant Theory (Montecatini, 1982), vol. 996 of Lecture Notes in Math., Springer, Berlin, 1983, pp. 118-144.

\bibitem{LT06}
C. Lenart and M. Toshiaki, ``Quantum Grothendieck Polynomials,'' [arXiv:math/0608232].

\bibitem{H13}
T. Hudson, ``A Thom-Porteous formula for connective K -theory using algebraic cobordism,'' Journal of K-theory \textbf{14} (2013), 343-369, [arXiv:1310.0895 [math.AG]].

\bibitem{Buch}
A. S. Buch, ``Quantum Cohomology of Partial Flag Manifolds,'' [arXiv:math/0303245 [math.AG]].

\bibitem{Lee:2001mb}
Y. P. Lee, ``Quantum K-theory I: foundations,'' Duke Math. J. \textbf{121}, 389-424 (2004)
[arXiv:math/0105014 [math.AG]].

\bibitem{MNS1}
T. Maeno, S. Naito and D. Sagaki, ``A presentation of the torus-equivariant quantum K-theory ring of flag manifolds of type A, Part I: the defining ideal,'' [arXiv:2302.09485 [math.QA]].

\bibitem{MNS2}
T. Maeno, S. Naito and D. Sagaki, ``A presentation of the torus-equivariant quantum K-theory ring of flag manifolds of type A, Part II: quantum double Grothendieck polynomials,'' [arXiv:2305.17685 [math.QA]].

\end{thebibliography}
\end{document}